\newtheorem{theorem}{Theorem}[section]
\newtheorem{lemma}[theorem]{Lemma}
\newtheorem{corollary}[theorem]{Corollary}
\newtheorem{definition}[theorem]{Definition}
\newtheorem{remark}[theorem]{Remark}
\begin{document}
%
\title{Asymptotic convergence rate of the \\longest run in an inflating Bernoulli net}
%
%
%

\author{Kai~Ni,
Shanshan~Cao,
and~Xiaoming~Huo
\thanks{School of Mathematics, Georgia Institute of Technology, Atlanta,
GA, 30332 USA. 
  (e-mail: kni0219@gmail.com).}
\thanks{Department of Industrial and System Engineering, Georgia Institute of Technology, Atlanta,
GA, 30332 USA. 
  (e-mail: shancao36@gmail.com).}
\thanks{Department of Industrial and System Engineering, Georgia Institute of Technology, Atlanta,
GA, 30332 USA. 
  (e-mail: huo@gatech.edu).}}
%
%

\markboth{Submitted}%
{Shell \MakeLowercase{\textit{et al.}}: Bare Demo of IEEEtran.cls for IEEE Journals}
%



\maketitle

\begin{abstract}
In image detection, one problem is to test whether the set, though mostly consisting of uniformly scattered points, also contains a small fraction of points sampled from some (a priori unknown) curve, for example, a curve with $C^{\alpha}$-norm bounded by $\beta$.
One approach is to analyze the data by counting membership in multiscale multianisotropic strips, which involves an algorithm that delves into the length of the path connecting many consecutive ``significant'' nodes.
In this paper, we develop the mathematical formalism of this algorithm and analyze the statistical property of the length of the longest significant run.
 The rate of convergence is derived.
Using percolation theory and random graph theory, we present a novel probabilistic model named pseudo-tree model.
Based on the asymptotic results for pseudo-tree model, we further study the length of the longest significant run in an ``inflating'' Bernoulli net.
We find that the probability parameter $p$ of significant node plays an important role: there
is a threshold $p_c$, such that in the cases of $p<p_c$ and $p>p_c$, very different asymptotic behaviors of the length of the significant are observed.
We apply our results to the detection of an underlying curvilinear feature and argue
that we achieve the lowest possible detectable strength in theory.
\end{abstract}

\begin{IEEEkeywords}
Inflating Bernoulli net, pseudo-tree model, longest significant run, curve detection, asymptotically powerful test.
\end{IEEEkeywords}

%
\IEEEpeerreviewmaketitle

\section{Introduction}

In application of image detection problems, one class of questions is to determine whether or not some filamentary structures are present in the noisy picture. 
 There is a plethora of available statistical methods that can, in principle, be used for filaments detection and estimation. 
 These include: Principle curves in \cite{Hastie}, \cite{Kegl}, \cite{Sathyakama} and \cite{Smola}; nonparametric, penalized, maximum likelihood in \cite{RTibshirani}; parametric models in \cite{Stoica}; manifold learning techniques in \cite{Roweis}, \cite{JoshuaB} and \cite{HuoandChen}; gradient based methods in \cite{DNovikov} and \cite{Genovese}; methods from computational geometry in \cite{Dey}, \cite{Lee1999} and \cite{Cheng2005}; faint line segment detection in \cite{ExpandVision}; Ship Wakes ``V'' shape detection against a highly cluttered background in \cite {Vdetection} and underlying curvilinear structure in \cite{Aos06Filament}, \cite{AAOMT} and \cite{IEEE}. 
 See also \cite{perim1}, \cite{perim2} and \cite{perim3} for the applications of the percolation theory in this area.

One approach for this type of detection problems works as follows. 
At localized batches, hypothesis testing is run to determine whether this batch may overlap with the underlying structure. 
The hypothesis testing is run while the batch scans through the entire image. 
The intuition is that if there is an embedded structure, then the significant test results must be clustered around the underlying structure. 
The difficulty comes from the fact that there will be many false positives among these tests. We want to take advantage of the fact that the false positive testing results are not clustered, in relative to those that overlap with the underlying feature. 
Our percolation analysis is motivated by the above phenomenon.

Suppose we have an $m$-by-$n$ array of nodes. 
A Bernoulli random variable $X_{i,j}$ is associated with each node $(i,j)$  such that if $X_{i,j}=1$ then the node is significant (or open); otherwise, insignificant (or closed). 
However, we suspect that there is a sequence of nodes, with unknown location or orientation, open or closed with a different probability $p_1>p$.
In \cite{JASA06LongSigRun}, it is shown that the length of the longest significant run, denoted by $\left|L_{0}(m,n)\right|$ throughout the paper, has the following asymptotic rate of  Erd$\ddot{o}$s-R$\acute{e}$nyi type (See \cite{TERLinD})
\begin{equation}\label{lim:constrain}
\lim_{n\to\infty}\frac{\left|L_{0}(m,n)\right|}{\log_{1/\rho(m,p)}n}=1\quad\text{almost surely},
\end{equation}
where $\rho(m,p)$ is a constant depending on $m$ and $p$ and also the structure of the model.

However the limitation of (\ref{lim:constrain}) is that $m$ is always fixed. 
Our paper extends the previous work to derive the convergence rate of the length of the longest significant run in the inflating model i.e., $m\rightarrow\infty$ and $n\rightarrow\infty$ simultaneously. Our theory is related to the percolation theory, in which we will introduce the critical probability $p_c$ and divide our theory into the $p>p_c$ phase and the $p<p_c$ phase. 
For percolation theory, books by Grimmett \cite{GG1989} and Bollob\'{a}s \cite{B:BOP} are good references. Durrett \cite{OSitePerc1984} systematically studies an oriented site percolation model, which is similar to the model in this paper. See also the references therein. 

Applications of the aforementioned can be the following:
\begin{itemize}
\item Detection of filamentary structures in a background of uniform random points in \cite{Aos06Filament}. 
We are given $N$ points that might be uniformly distributed in the unit square $[0,1]^2$. 
We wish to test whether the set, although mostly consisting of the uniformly scattered points, also contains a small fraction $\epsilon_N$ of points sampled from some (unknown a priori) curve with $C^{\alpha}$ norm bounded by $\beta$. 
See also \cite{NetworksPoly} for a more general case.
\item Target tracking problem in \cite{AAOMT}. 
Suppose we have an infrared staring array. A distant moving object will create, upon lengthy exposure, an image of a very faint track against a noisy background. 
We want to detect whether there is such a moving object in an noisy image.
\item Water quality in a network of streams in \cite{waterQuality}. Water quality in a network of streams is assessed by performing a chemical analysis at various locations along the streams. 
As a result, some locations are marked as problematic. 
We may view the set of all tested locations as nodes and connect pairs of adjacent nodes located on the same stream, thereby creating a tree. 
We then assign to each node the value $1$ or $0$, according to whether the location is problematic or not. 
One can then imagine that one would like to detect a path (or a family of paths) upstream of a certain sensitive location, in order to trace the existence of a polluter, or look for the existence of an anomalous path upstream from the root of the system.
\end{itemize}

There is a multitude of applications for which our model is relevant. 
Examples include the detection of hazardous materials \cite{Hill}, target tracking \cite{Lietal} in sensor networks \cite{Culleretal}, and disease outbreak detection \cite{Heffernanetal}. 
Pixels in digital images are also sensors so that many other examples can be found in the literature on image processing such as road tracking \cite{Gemanetal}, fire prevention using satellite imagery \cite {PozoDetal},x and the detection of tumors in medical imaging \cite{McInerney}.

The generalized likelihood ratio test, which is known as the scan statistic in spatial statistics \cite{Kulldorff1997, Kulldorff2001}, is by far the most popular method in practice and is given different names in different fields. 
Most of the methods related to scan statistic assume that the clusters are in some parametric family such as circular \cite{Kulldorff1995}, elliptical \cite{Hobolth2002, Kulldorff2006} or, more generally, deformable templates \cite{Jain1998}, while others do not assume explicit shapes \cite{Duczmal2004, Kulldorff2003, Tango2005}, which leads to nonparametric models.

We consider a nonparametric method based on the percolative properties of the network. 
The most basic approach is based on the size of the largest significant chain of the graph after removing the nodes whose values fall under a given threshold. 
If the graph is a one-dimensional lattice, after thresholding, this corresponds to the test based on the longest run \cite{Balakrishnan2002}, which \cite{JASA06LongSigRun} adapts for path detection in a thin band. 
This test is studied in a series of papers such as \cite{perim1} under the name of maximum cluster test. 
A more sophisticated statistic, which is the upper level set scan statistic, is studied in  \cite{Patil2004, Patil2006, Patil2010}. In its basic form, it scans over the connected components of the graph after thresholding.

Recently, Langovoy \emph{et al} \cite{perim1,perim2,perim3} employs the theory of percolation and random graph to solve the image detection problem. 
However, our methods in this paper are different from the classic percolation theory, since the nodes here are not necessarily independent \emph{a priori}.

Specifically, our work has three advantages.
\begin{enumerate}
\item We can drop the independence assumption among nodes which is the fundamental assumption in the percolation theory.
\item Our work is devoted to researching the asymptotic behavior of the longest left-right significant run in the lattice with a diverging $m$.
\item Our model can be easily adapted to the three or higher dimensional cases with some notations change, though for simplicity, the paper is mostly written based on a $2$-dimensional model.
\end{enumerate}

In practice, our work places a fundamental theory on practical problems involving the length of runs. One direct motivation comes from a statistical detection problem. 
In \cite{Aos06Filament}, the authors proposed a method called the multi-scale significance run algorithm (MSRA) for the detection of curvilinear filaments in noisy images. The main idea is to construct a Bernoulli net. 
Each node has the value of $1$ (\textit{significant}) or $0$ (\textit{insignificant}). 
Two nodes are defined as \textit{connected} if they are neighbors (for example their altitude difference is within $C$), that is, they can simultaneously cover a curve of interest. 
The length of the nodes in the longest significant run is used as a test statistic. If the length of the run exceeds a certain threshold, then we conclude that there exists an embedded curve; otherwise, there is no embedded curve. To formulate this as a well-defined probability problem, we test the null hypothesis of a constant success probability $p$ against the alternative hypothesis that some nodes, being on a filament with unknown location and length, have a greater probability of success $p_1>p$. Under the alternative, the length of the longest significant chain, $\left|L_0(m,n)\right|$, is more likely to exceed (i.e., be greater than) a threshold, which, under the null hypothesis, cannot be exceeded. In the approach of \cite{Aos06Filament} the values of these parameters can be chosen for testing. The question is how to choose these parameters so that the power of the test can be maximized. This becomes a design issue. The relation between $\left|L_0(m,n)\right|$ and other parameters must be understood. 
The choice of parameters in the approach of \cite{Aos06Filament} is sufficient to guarantee a proof of asymptotic optimality; Our research systematically searches the relation between $\left|L_0(m,n)\right|$ and these parameters.

In \cite{JASA06LongSigRun} the authors show that $\rho(m,p)$ in (\ref{lim:constrain}), which is the limit of conditional probability $\rho_n(m,p)$ that there will be an across for $n$ columns conditioning on the fact that there is an across in the previous $\left(n-1\right)$ columns, lies in $(0,1)$ as $n\rightarrow\infty$.  Let  $\mathcal{A}_{c_1,c_2,\delta_1,\delta_2}$ denote the following set
\begin{equation}\label{cond:m}
\mathcal{A}_{c_1,c_2,\delta_1,\delta_2} \vcentcolon= \{(m,n): c_1n^{1+\delta_1}\leq m\leq c_2\exp[n(\phi(p)-\delta_2)]\}.
\end{equation}
The set  $\mathcal{A}_{c_1,c_2,\delta_1,\delta_2}$ essentially states that as the column number $n$ increases, $m$ increases faster than any linear growth of $n$ and slower than some exponential growth of $n$. In our work, we show that in the case of $p<p_c$, as $m\rightarrow\infty$, $n\rightarrow\infty$ and $(m,n)\in\mathcal{A}_{c_1,c_2,\delta_1,\delta_2}$, we have 
\[
\rho_n(m,p)\rightarrow\\exp\{-\phi(p)\};
\]
and
\[
\left|L_0(m,n)\right|=\log(mn)/\phi(p)+o_p(1),
\]
where $\phi(p)$ is a positive function and will be defined in (\ref{def:phi}).

Applying our theory to the multi-scale detection method in \cite{Aos06Filament}, we describe a multi-scale significant run algorithm that can reliably detect the concentration of data near a smooth curve, without knowing the smoothness information $\alpha$ or $\beta$ in advance, provided that the portion of points on the curve $\epsilon_N$ exceeds $T(\alpha,\beta)N^{1/(1+\alpha)}$. 
Our $T(\alpha,\beta)$ is smaller than that in \cite{Aos06Filament}, which indicates stronger detection ability using our theory. In the target tracking problem, our method provides a reliable threshold such that the false alarm probability vanishes very quickly as we get more and more sample points.


The rest of the paper is organized as follows. 
In Section \ref{chp:PseudoTreeModel}, we present a pseudo-tree model and study the critical probability and its reliability problems. 
In Section \ref{Chp:BernoulliNet}, we first  summarize the previous work of the Bernoulli net. 
Notice that from any node in the inflating net, there exists  a pseudo-tree, defined in Section \ref{chp:PseudoTreeModel}.
 Based on the results on the pseudo-tree model in Section \ref{chp:PseudoTreeModel}, we further provide the extensions to the Bernoulli net beyond the fixed number of rows. 
We present some potential applications of the longest run method in image detection problems in Section \ref{chp:Applications}. 
We conclude our work in Section \ref{sec:conclusion}. 
All proofs are relegated to Section \ref{sec:proofs}.

\section{Pseudo-Tree Model}
\label{chp:PseudoTreeModel}
In this section, we will first introduce the pseudo-tree model in Section \ref{pseudoTreeModel}. Then we provide our results on the critical probability and the asymptotic behaviors on the significant runs in pseudo-tree models in Section \ref{sec:cs}. Finally, we extend our model to the high-dimensional pesudo-tree model and generalize our results in the 2D case in Section \ref{extenPseuTree}.
\subsection{Model Introduction}
\label{pseudoTreeModel}
In this section, we first present a model which has some similarity to a \emph {regular} or \emph{complete-tree} model (\cite{FSP:EAC:2010,  trailInMaze}). 
Consider, for example, the lattice with nodes of the form
\begin{equation}\label{pseudoLattice}
V=\{(i,j)\in\mathbb{Z}^2: -iC\leq j\leq iC, i\geq 0\},
\end{equation}
and oriented edges $(i,j)\rightarrow(i+1, j+s)$, where $|s|\leq C$. 
We call $(0,0)$ the origin of the graph and sometimes use $0$ to denote the origin. Let $Y_{i,j}$ be the i.i.d. Bernoulli$(p)$ state variables corresponding to the node $(i,j)$. 
We say the node $(i,j)$ is significant, if $Y_{i,j}=1$, and insignificant if  $Y_{i,j}=0$.  In this paper, we are interested in the length of significant runs starting at the origin, which is a path consisting of only significant nodes in the graph. 
See Figure~\ref{Fi:pstree} for a sketch of the model.
\begin{figure}[hbt]
\centering\includegraphics[width = 95mm]{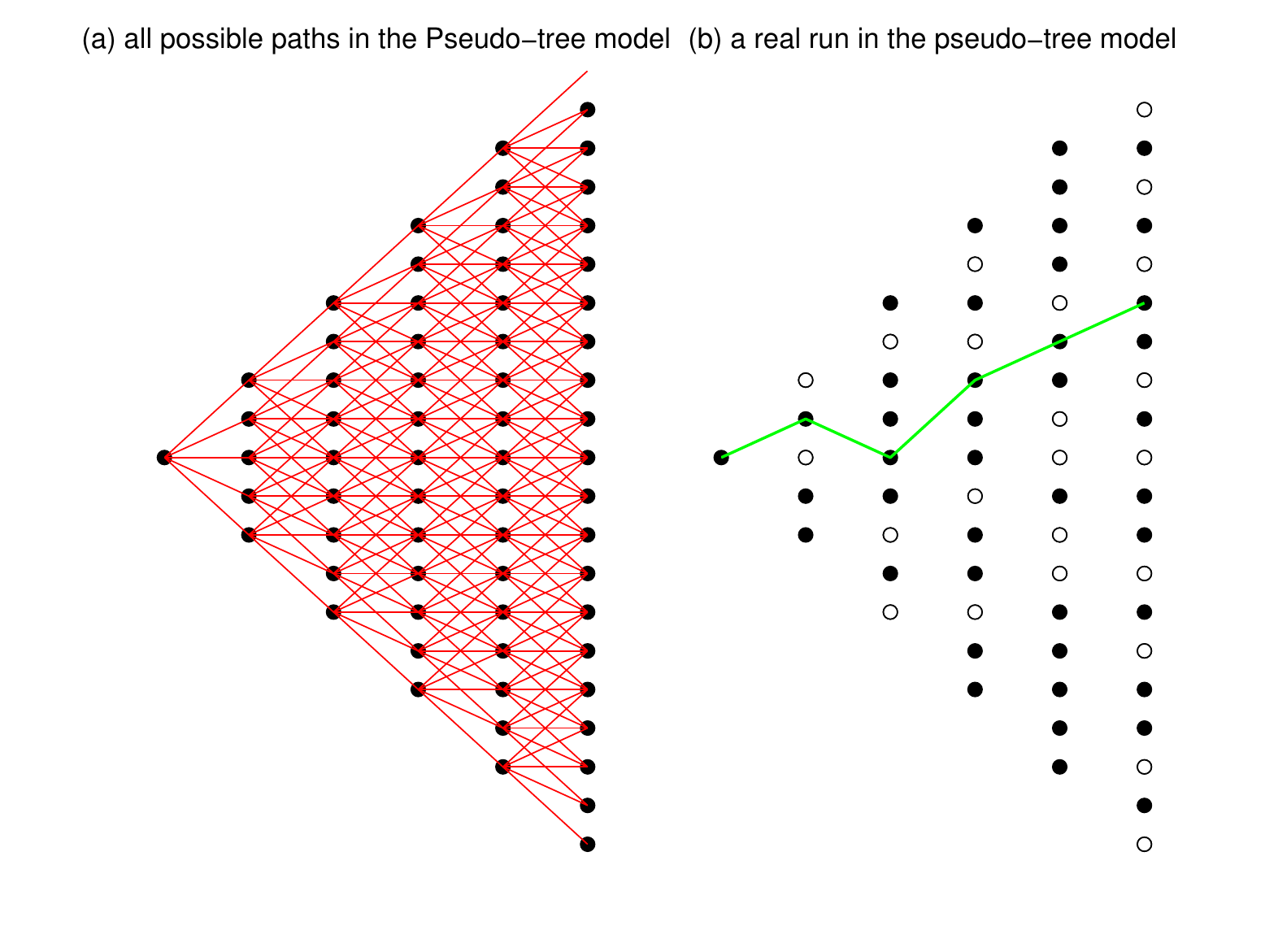}
\caption{A sketch of pseudo-tree model with the connectivity constraint $C=2$. (a) gives all the possible edges in the model. In (b) solid nodes are significant. The green path shows a possible real run in the pseudo-tree model }\label{Fi:pstree}. 
\end{figure}

Note that even though the number of runs of length $k$ in Pseudo-tree model and the regular tree model with $2C+1$ descendants are the same (both equal $(2C+1)^{k-1}$), the numbers of nodes are considerably different in the first $k$ columns---about $k^2C$ for the former and about $(2C+1)^k$ for the regular tree .

Let $p_c$ denote the critical probability for the site percolation in the Pseudo-tree model, defined as the supremum over all $p\in(0,1)$ such that the size of the significant run at the origin is finite with probability $1$, which is mathematically defined in Equation (\ref{eq:criticalProb}). 
By our knowledge, this model has not been fully studied yet and we will elaborate some results in the next section. 
Analogous to the model presented here, recent papers (\cite{trailInMaze, ClustDetecPerco})  have studied the oriented and non-oriented significant clusters or runs in a regular lattice.

\subsection{Results}
\label{sec:cs}

In this section, we give some results about the significant runs in Pseudo-tree model  $V$ presented in (\ref{pseudoLattice}). 
The difference between the \emph{Pseudo-tree} and \emph{Regular-tree} model is that the number of nodes in the former grows quadratically with the depth, as opposed to grow exponentially with the depth in the latter. 
Besides, in \emph{Pseudo-tree} model, different runs may share the same edges and therefore the behaviors of distinct runs here are quite correlated. 

\subsubsection{Notation}
\label{notation}
We shall introduce some notation. Observe that there is only one node in the $0$-th column, namely the origin $(0,0)$ and there are $2kC+1$ nodes in the $k$-th column, namely the nodes $(k, -kC),\ldots,(k,0),\ldots,(k,kC)$. For $k\in\mathbb{Z}^{+}$, let $B(k)=\{(k,-kC), \ldots, (k,0), \ldots, (k,kC)\}$ be the set of nodes in $k$-th column in $V$.

Let $\theta_{k}(p)$ denote the probability that $(0,0)$ is connectible to the $(k-1)$th column by a significant run, which implies, 
$$\theta_{k}(p)=\mathbb{P}_p((0,0)\leftrightarrow B(k-1)).$$ 
In other words, $\theta_{k}(p)$ is the probability that there is a significant run of length at least $k$ starting at the origin. 
Given any $x=(x_1,x_2)\in\mathbb{Z}^{2}$, let $\theta^x_{k}(p)$ be the probability that $x$ connects the $(x_1+k-1)$-th column with a significant chain. It is easy to see that $\theta^{x}_{k}(p)$ does not depend on the status of the nodes before the $x_1$-th column and $\theta^{x}_{k}(p)=\mathbb{P}_p(\{x\leftrightarrow B(x_1+k-1)\})=\theta_k(p)$. 
Because $\theta_k(p)$ only involves finitely many nodes, one can easily see that $\theta_k(p)$ is a continuous function of $p\in[0,1]$. Throughout the paper, we will  sometimes use $n$ as a subscript instead of $k$.

\subsubsection{Critical Probability}
\label{defOfTheta}
Given the above notations, we state some properties of the function $\theta_k(p)$ as follows:
\begin{itemize}
\item $\theta_{k_1}(p)\leq\theta_{k_2}(p)$, if $k_1\geq k_2$, which implies $\theta(p) \vcentcolon= \lim_{k\to\infty}\theta_{k}(p)$ exists;
\item $\theta_k(0)=0$ and $\theta_k(1)=1$, for any $k\geq1$, which implies $\theta(0)=0$ and $ \theta(1)=1$;
\item $\theta_{k}(p)$ and $\theta(p)$ are nondecreasing with respect to $p$.
\end{itemize}

Thus $\theta(p)$ is the probability that there is a significant run in $V$ starting from the origin and heading towards right forever when the probability of a node to be open is $p$. In light of this, we define $p_{c}$ to be the critical probability,
i.e., 
\begin{equation}
\label{eq:criticalProb}
p_c \vcentcolon= \sup\{p\in[0,1]:\theta(p)=0\}.
\end{equation}
So $p_c$ is the critical probability, above which it is possible to have an infinite significant run starting from any node in \emph{Pseudo-tree} model.

Recall that in the $r$-regular tree model, the critical probability $p_c=1/r$. Our first result shows that in the \emph{Pseudo-tree} model, the critical probability is no smaller than $1/r$, where $r=2C+1$ (See \cite{B:BOP}).
\begin{theorem}\label{T:p-c}
The critical probability of the Pseudo-tree model  $p_c \geq\frac{1}{2C+1}$.
\end{theorem}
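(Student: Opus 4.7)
The plan is to bound $\theta(p)$ from above by a first-moment (union bound) argument, exploiting only the fact that each significant node is open with probability $p$, and showing that this bound drives $\theta(p)$ to $0$ whenever $p<1/(2C+1)$. Since the argument uses just a union bound, it does not require independence between distinct paths, so the strong correlations introduced by shared edges/nodes in the pseudo-tree are irrelevant to the upper bound on $\theta_k(p)$.

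The concrete steps are as follows. First, I would enumerate the oriented paths of length $k$ starting at the origin: beginning at $(0,0)$, at each of the $k-1$ subsequent columns there are exactly $2C+1$ admissible forward choices of the increment $s\in\{-C,\dots,C\}$, so there are exactly $(2C+1)^{k-1}$ such directed paths in $V$, each consisting of $k$ distinct nodes (the columns are all different). Next, for any one fixed such path $\pi$, the probability that every node along $\pi$ is significant is $p^k$, since the node states $Y_{i,j}$ are i.i.d.\ Bernoulli$(p)$.

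Third, by the union bound over the event $\{(0,0)\leftrightarrow B(k-1)\}$, which is the union (over all directed paths $\pi$ from $(0,0)$ to some node in $B(k-1)$) of the event that every node of $\pi$ is significant, one obtains
\begin{equation*}
\theta_{k}(p)=\mathbb{P}_p\bigl((0,0)\leftrightarrow B(k-1)\bigr)\leq (2C+1)^{k-1}\,p^{k}=p\,\bigl((2C+1)p\bigr)^{k-1}.
\end{equation*}
Now fix any $p<1/(2C+1)$. Then $(2C+1)p<1$, and letting $k\to\infty$ gives $\theta(p)=\lim_{k\to\infty}\theta_{k}(p)=0$. By the definition of the critical probability, $p_c=\sup\{p\in[0,1]:\theta(p)=0\}$, this forces $p_c\geq 1/(2C+1)$, as claimed.

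There is really no major obstacle here: the substance of the argument is combinatorial (counting paths) plus a union bound, and the potentially delicate issue, namely that distinct paths share edges and nodes so that their indicator events are positively correlated, is simply bypassed because the union bound requires no independence. The only point that deserves a sentence of care is verifying that the count $(2C+1)^{k-1}$ is correct, i.e.\ that paths do not leave the wedge $V=\{(i,j):-iC\leq j\leq iC\}$: but this is automatic, since starting from $(0,0)$ and taking $i$ steps with $|s|\leq C$ one has $|j|\leq iC$ at column $i$, so every admissible $(2C+1)^{k-1}$-tuple of increments yields a legal path in $V$.
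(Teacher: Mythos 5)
Your proof is correct, and the result does follow, but you take a genuinely different route from the paper. You use the first-moment (union bound / expected path count) method: there are exactly $(2C+1)^{k-1}$ oriented paths of $k$ nodes from the origin, each fully significant with probability $p^{k}$, so $\theta_k(p)\leq p\bigl((2C+1)p\bigr)^{k-1}\to 0$ when $p<1/(2C+1)$. The paper instead applies the FKG inequality to the decreasing events $\overline{(0\leftrightarrow x)\cap(x\leftrightarrow B(k))}$, $x\in B(1)$, to obtain the recursion $\theta_{k+1}(p)\leq 1-(1-p\,\theta_k(p))^{2C+1}$, and then shows that iterating $f(x)=1-(1-px)^{r}$ with $r=2C+1$ drives any starting point in $(0,1)$ to the fixed point $0$ because $f$ is increasing, concave, and $f'(0)=rp<1$. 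Both arguments deliver $\theta(p)=0$ for all $p<1/(2C+1)$ and hence $p_c\geq 1/(2C+1)$. Your version is more elementary, since it needs no correlation inequality, and it is quantitatively stronger: it gives explicit exponential decay of $\theta_k(p)$ and hence $\phi(p)\geq\log\frac{1}{(2C+1)p}$ at once, whereas the FKG recursion only yields the qualitative conclusion $\theta(p)=0$ (its advantage being the branching-process-style comparison, which is the template one would refine to try to improve the bound on $p_c$). It is worth noting that the paper itself carries out essentially your expected-path-count computation later, in the proof of Corollary \ref{C:phi}, to show $\lim_{p\to 0}\phi(p)=\infty$, so your approach is fully consistent with the paper's toolkit; your bookkeeping of the path count as $(2C+1)^{k-1}$ and the check that all such paths stay inside the wedge $V$ are both correct.
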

In the beam-let model of \cite{Aos06Filament}, each node is connectible to $81$ nodes in the next column. 
Thus this theorem explains the reason that the authors there took the membership threshold $N^{\ast}$ such that $p=\mathbb{P}(\text{Poisson}(2)>N^{\ast})=\frac{p_0}{81}$ for some $p_0\in(0,1)$.

\subsubsection{Asymptotic rate of $\theta_k(p)$}

In this part, we show that under the sub-critical phase $p<p_c$
\[
\theta_{k}(p)=\mathbb{P}_p(0\leftrightarrow B(k-1))=O(k\exp\{-k\phi(p)\}),
\]
where $\phi(p)>0$ is a decreasing function of $p$.
\begin{theorem}\label{T:subexp}
Suppose $0<p\leq 1$. There exist positive constants $\sigma_1$ and $\sigma_2$, independent of $p$, and a unique function $\phi(p)$, such that
\begin{equation}\label{eq:subexp}
\sigma_1 k^{-1}\exp\{-k\phi(p)\}\leq\theta_{k}(p)\leq\sigma_2 k\exp\{-k\phi(p)\},
\end{equation}
for any $k\geq 1$. In particular,
\begin{equation}\label{def:phi}
\frac{\log\theta_{k}(p)}{k}\rightarrow -\phi(p) \mbox{ as } k \rightarrow \infty.
\end{equation}
\end{theorem}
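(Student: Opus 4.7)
The plan is to prove the theorem by establishing two-sided multiplicative bounds for $\theta_k(p)$ with polynomial-in-$k$ corrections, and then extracting the exponential rate $\phi(p)$ via a Fekete--Hammersley-type argument applied to $-\log\theta_k(p)$.

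First I would establish a sub-multiplicative inequality. Every significant run from $(0,0)$ to $B(k_1+k_2-2)$ must cross the intermediate column $B(k_1-1)$ at some node $x$, and a union bound over $x$ gives
\[
\theta_{k_1+k_2-1}(p) \leq \sum_{x \in B(k_1-1)} \mathbb{P}_p\bigl(0 \leftrightarrow x,\ x \leftrightarrow B(k_1+k_2-2)\bigr).
\]
The probabilistic heart of the argument is that, conditioned on $Y_x = 1$, the events $\{0 \leftrightarrow x\}$ and $\{x \leftrightarrow B(k_1+k_2-2)\}$ depend on disjoint sets of state variables -- those in columns strictly before $k_1-1$ and strictly after $k_1-1$, respectively -- and are therefore conditionally independent. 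Since both events force $Y_x = 1$, this yields $\mathbb{P}_p(0 \leftrightarrow x,\ x \leftrightarrow B(k_1+k_2-2)) = p^{-1}\,\mathbb{P}_p(0 \leftrightarrow x)\,\theta_{k_2}(p)$ after invoking vertical translation invariance. Combining with the trivial bound $\mathbb{P}_p(0 \leftrightarrow x) \leq \theta_{k_1}(p)$ and summing over $x \in B(k_1-1)$ gives
\[
\theta_{k_1+k_2-1}(p) \leq \frac{|B(k_1-1)|}{p}\,\theta_{k_1}(p)\,\theta_{k_2}(p).
\]

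Next I would prove a matching super-multiplicative bound. Picking $x^{\ast} \in B(k_1-1)$ that maximizes $\mathbb{P}_p(0 \leftrightarrow x)$ and pigeonholing the union bound $\theta_{k_1}(p) \leq \sum_x \mathbb{P}_p(0 \leftrightarrow x)$ yields $\mathbb{P}_p(0 \leftrightarrow x^\ast) \geq \theta_{k_1}(p)/|B(k_1-1)|$. Restricting attention to runs passing through $x^\ast$ and applying the same conditional-independence calculation to this single intermediate vertex gives
\[
\theta_{k_1+k_2-1}(p) \geq \frac{\theta_{k_1}(p)\,\theta_{k_2}(p)}{|B(k_1-1)| \cdot p}.
\]
Setting $a_k := -\log\theta_k(p)$, the sub-multiplicative inequality reads $a_{k_1+k_2-1} \geq a_{k_1} + a_{k_2} - \log(|B(k_1-1)|/p)$, which is almost-superadditive with a defect of order $\log(k_1+k_2)$. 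A Fekete--Hammersley-type lemma for sequences with sublinear defects then gives the existence of a finite limit $\phi(p) := \lim_{k\to\infty} a_k/k$; finiteness follows from the trivial lower bound $\theta_k(p) \geq p^k$, which ensures $\phi(p) \leq -\log p$. This defines $\phi(p)$ uniquely and establishes \eqref{def:phi}. Iterating the two multiplicative inequalities and tracking the cumulative polynomial factor against $\exp(-k\phi(p))$ then delivers the two-sided bound \eqref{eq:subexp}.

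The main obstacle, as I see it, is pinning down the polynomial prefactors as exactly $\sigma_2 k$ and $\sigma_1 k^{-1}$ with constants independent of $p$. A naive dyadic iteration would amplify the single-scale correction $|B(k-1)| \asymp k$ across scales, so the cleanest route is to use a quantitative version of the subadditive lemma that converts the logarithmic defect directly into the stated prefactor, rather than iterating first and summing afterwards. The claimed uniformity of $\sigma_1,\sigma_2$ in $p$ is handled by absorbing the $\log(1/p)$ terms into $\phi(p)$ itself; this is legitimate because $\phi(p)$ is defined as $-\lim_k k^{-1}\log\theta_k(p)$ and so naturally carries the full $p$-dependence, leaving the surviving constants genuinely $p$-free. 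I expect this bookkeeping to be the main technical challenge, with the conditional-independence decomposition being the conceptual core of the argument.
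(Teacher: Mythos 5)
Your proposal is essentially the paper's own proof: the same decomposition of a length-$(k+l)$ run at an intermediate column, the same conditional-independence identity producing the $1/p$ factor, a union bound for sub-multiplicativity and a pigeonhole choice of the best intermediate node for super-multiplicativity, followed by Fekete's lemma applied to the defect-corrected sequences $\pm\log\theta_k(p)+\log(2kC+1)+\log 2$, whose conclusion $x_m\geq m\lambda$ yields the polynomial prefactors directly rather than by iteration. The one point where your bookkeeping differs is the uniformity of $\sigma_1,\sigma_2$ in $p$: the residual $\log(1/p)$ defect cannot be ``absorbed into $\phi(p)$'' (it is an additive constant that survives the Fekete argument as a multiplicative factor $p$ in the prefactor, since $\phi$ is already pinned down as the limit); the paper instead cancels the $1/p$ from conditioning by using the sharper bounds $\mathbb{P}_p(0\leftrightarrow z)\leq p\,\theta_l(p)$ and $\theta_l(p)\leq \tfrac{1}{p}\sum_{z}\mathbb{P}_p(0\leftrightarrow z)$, so that the sub- and super-multiplicative constants are exactly $(2lC+1)^{\pm 1}$ with no $p$-dependence.
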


The next corollary gives the limit of $\frac{\theta_k(p)}{\theta_{k-1}(p)}$.
\begin{corollary}\label{T:thetaratio}
\begin{equation}\label{lim:ratio}
\lim_{k\to\infty}\frac{\theta_{k}(p)}{\theta_{k-1}(p)}=\exp\{-\phi(p)\}.
\end{equation}
\end{corollary}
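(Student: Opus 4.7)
Plan: By Theorem~\ref{T:subexp}, $\theta_k(p)$ is sandwiched between $\sigma_1 k^{-1}e^{-k\phi(p)}$ and $\sigma_2 k e^{-k\phi(p)}$, so the Ces\`aro average $\frac{1}{k}\log\theta_k(p)\to-\phi(p)$ already holds. The polynomial prefactors, however, prevent one from reading off the ratio limit directly; establishing $\rho_k := \theta_k(p)/\theta_{k-1}(p)\to e^{-\phi(p)}$ requires genuine regularity of the sequence. My plan combines a near-supermultiplicativity inequality with a monotonicity argument, and then closes the loop via Ces\`aro.

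I would first establish near-supermultiplicativity by a bottleneck decomposition. Any open path from $(0,0)$ to $B(k+l-2)$ passes through some $x\in B(k-1)$; conditioning on $Y_x=1$ makes the two halves of the event conditionally independent, since they depend on disjoint Bernoulli variables. Taking the symmetric choice $x^*=(k-1,0)$ and noting that the forward cone from $x^*$ is contained in $B(k+l-2)$, translation invariance gives
\begin{equation*}
\theta_{k+l-1}(p)\ \ge\ \frac{\mathbb{P}_p((0,0)\leftrightarrow x^*)\,\theta_l(p)}{p},
\end{equation*}
and the union bound $\theta_k(p)\le (2(k-1)C+1)\,\mathbb{P}_p((0,0)\leftrightarrow x^*)$ (using that $x^*$ is a symmetry maximizer over $B(k-1)$) upgrades this to $\theta_{k+l-1}(p)\ge c_k\,\theta_k(p)\,\theta_l(p)$ with $c_k=\Theta(1/k)$. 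I would then promote this to monotonicity of $\rho_k$, ideally via log-convexity $\theta_{k-1}\theta_{k+1}\ge\theta_k^2$, using a two-path FKG-type coupling on the Bernoulli field that carefully tracks shared node-variables. Given monotonicity and $\rho_k\in(0,1]$, the limit $\rho_\infty:=\lim_k\rho_k$ exists; writing $\log\theta_k=\log\theta_1+\sum_{j=2}^k\log\rho_j$ and invoking Ces\`aro identifies $\log\rho_\infty$ with the limit $-\phi(p)$ from \eqref{def:phi}, which is the claim.

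The main obstacle will be the monotonicity step. Near-supermultiplicativity by itself leaves a polynomial slack and is still compatible with $\rho_k$ oscillating; moreover, because distinct open paths in the pseudo-tree may share nodes, standard branching-process arguments are unavailable, and any correlation inequality must be tailored to the oriented-strip geometry of the pseudo-tree. If the direct log-convexity proves stubborn, a fallback would be to exploit the near-supermultiplicativity together with the exponential envelope of Theorem~\ref{T:subexp} to force $\liminf_k \rho_k$ and $\limsup_k \rho_k$ to coincide by comparing the telescoping product $\prod_{j=k+1}^{k+m}\rho_j=\theta_{k+m}/\theta_k$ against the two-sided bound of Theorem~\ref{T:subexp} along two suitably chosen subsequences of $k$.
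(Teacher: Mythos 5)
Your bottleneck decomposition and the resulting near-supermultiplicativity are fine, and they essentially reproduce what the paper already establishes inside the proof of Theorem~\ref{T:subexp} (the paper sidesteps your unproved ``symmetry maximizer'' claim by a pigeonhole argument: some $z\in B(l)$ has $\frac{1}{p}\mathbb{P}_p(0\leftrightarrow z)\ge\theta_l(p)/(2lC+1)$). The genuine gap is the step you yourself flag: you never show that $\rho_k=\theta_k(p)/\theta_{k-1}(p)$ converges. The log-convexity $\theta_{k-1}\theta_{k+1}\ge\theta_k^2$ is only a hope --- FKG gives positive correlation of increasing events but no mechanism for comparing connection probabilities at three consecutive depths --- so the main route is not a proof. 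More seriously, the fallback cannot work even in principle, because the only inputs it uses (the two-sided envelope of Theorem~\ref{T:subexp} and sub/super-multiplicativity with polynomial slack) are jointly consistent with a non-convergent ratio. Concretely, take $\log\theta_k=-k\phi+\delta\sin k$ with $\delta=0.1$: then $\left|\log\theta_{k+l}-\log\theta_k-\log\theta_l\right|\le 3\delta\le\log(2lC+1)$, the sandwich $\sigma_1k^{-1}e^{-k\phi}\le\theta_k\le\sigma_2ke^{-k\phi}$ holds, yet $\rho_k=e^{-\phi}e^{\delta(\sin k-\sin(k-1))}$ oscillates forever. Comparing telescoping products $\prod_{j=k+1}^{k+m}\rho_j=\theta_{k+m}/\theta_k$ against those envelopes only controls geometric means of $\rho_j$ over long windows, never the individual terms, so no choice of subsequences closes the argument.

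The paper extracts strictly more from the subadditive structure than your proposal does. With $g(l)=\log(2lC+1)$, it records that \emph{both} $u_k=\log\theta_k(p)+g(k)+\log 2$ and $v_k=-\log\theta_k(p)+g(k)+\log 2$ are subadditive, and then uses two ingredients absent from your plan: first, Fekete's lemma identifies the limit as an infimum, so the one-sided bounds $u_k\ge -k\phi(p)$ and $v_k\ge k\phi(p)$ hold for \emph{every} $k$, not merely asymptotically; second, the increment-one instances $u_k\le u_{k-1}+u_1$ and $v_k\le v_{k-1}+v_1$ give two-sided control of the single step $\log\theta_k-\log\theta_{k-1}$ with error terms governed by $g(1)$, $\log\theta_1$ and $g(k)-g(k-1)=O(1/k)$. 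Combining these, the paper traps $\rho_k$ between $\exp\{-\phi(p)-o(1)\}$ and $\exp\{-\phi(p)+o(1)\}$; the Ces\`aro identification you planned for the end is then not needed. If you want to repair your write-up, the missing idea is precisely this use of the exact correction term $g(k)$ together with the infimum characterization at increment $l=1$, rather than any monotonicity or log-convexity of $\theta_k$.
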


Given Theorem \ref{T:subexp}, one may speculate that $\phi(p)\rightarrow\infty$ as $p\rightarrow 0$, since $\theta(p)=0$ as $p=0$ and the theorem merits when $\phi(p)>0$. 
We will show $\phi(p)$ has the desired properties as $p<p_c$ in the following corollary.
\begin{corollary}\label{C:phi}
The function $\phi(p) \vcentcolon=  \lim_{k\to\infty}-\frac{\log\theta_{k}(p)}{k}$ have the following properties:
\begin{enumerate}
\item $\phi(p)$ is a continuous function on $(0,1]$;
\item $\phi(p)$ is strictly decreasing on $(0,p_c)$ and constantly $0$ when $p_{c}\leq p\leq1$;
\item $\lim_{p\to0}\phi(p)=\infty$.
\end{enumerate}
\end{corollary}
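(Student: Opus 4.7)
The plan is to derive all three parts from Theorem~\ref{T:subexp}. Part (3) is the most direct: a union bound over the $(2C+1)^{k-1}$ oriented paths of length $k-1$ emanating from $(0,0)$, each requiring all $k$ of its sites to be significant, gives $\theta_k(p) \leq (2C+1)^{k-1}\,p^{k}$. Taking $-\log(\cdot)/k$ and passing to the limit yields $\phi(p)\geq \log\frac{1}{(2C+1)\,p}$, which diverges to $+\infty$ as $p\to 0$.

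For part (1), each $\theta_k(p)$ is a polynomial in $p$ (inclusion--exclusion over the finitely many path-open events), and $\theta_k(p)\geq p^{k}>0$ on $(0,1]$ by taking a single straight path, so $-k^{-1}\log\theta_k(p)$ is continuous on $(0,1]$. Rearranging the two-sided estimate (\ref{eq:subexp}) gives
\[
\left| -\frac{\log\theta_k(p)}{k} - \phi(p) \right| \;\leq\; \frac{\log k + \max(|\log\sigma_1|,|\log\sigma_2|)}{k},
\]
which tends to $0$ as $k\to\infty$ uniformly in $p\in(0,1]$ because $\sigma_1,\sigma_2$ do not depend on $p$. Hence $\phi$ is a uniform limit on $(0,1]$ of continuous functions and is itself continuous.

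The non-strict half of part (2) is immediate from the standard monotone coupling, in which each node $v$ carries an independent uniform $U_v$ and is declared $p$-significant iff $U_v\leq p$: any $p_1$-significant path with $p_1\leq p_2$ is automatically $p_2$-significant, so $\theta_k$ is non-decreasing in $p$ and $\phi$ is non-increasing. For $p>p_c$ one has $\theta(p)>0$ by definition of $p_c$, so $\theta_k(p)\geq\theta(p)>0$ for every $k$ and hence $-k^{-1}\log\theta_k(p)\to 0$; combined with $\phi\geq 0$ (from $\theta_k\leq 1$) this forces $\phi(p)=0$, and $\phi(p_c)=0$ follows from the continuity established in (1).

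The main obstacle is the strict decrease of $\phi$ on $(0,p_c)$. My plan is to apply Russo's formula: since $A_k=\{(0,0)\leftrightarrow B(k-1)\}$ is an increasing event depending on finitely many sites, letting $N_k$ denote the number of pivotal sites for $A_k$,
\[
\frac{d\theta_k}{dp} \;=\; \mathbb{E}_p[N_k] \;\geq\; \mathbb{E}_p[N_k\,\mathbf{1}_{A_k}] \;=\; \theta_k(p)\,\mathbb{E}_p[N_k \mid A_k].
\]
The technical heart is to establish a lower bound $\mathbb{E}_p[N_k\mid A_k]\geq c(p)\,k$ uniformly on any compact $[p_1,p_2]\subset(0,p_c)$; this reflects the fact that in the subcritical regime the connection from $(0,0)$ to $B(k-1)$ is realised essentially by a unique thin path of length at least $k$, a positive fraction of whose sites must be pivotal because the exponential decay in Theorem~\ref{T:subexp}, together with a BK-inequality argument, rules out too many disjoint alternative open paths. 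Granted this estimate, integrating $(\log\theta_k)'(p)\geq ck$ across $[p_1,p_2]$ and dividing by $k$ yields
\[
\phi(p_1)-\phi(p_2) \;=\; \lim_{k\to\infty}\frac{\log\theta_k(p_2)-\log\theta_k(p_1)}{k} \;\geq\; c\,(p_2-p_1) \;>\; 0,
\]
completing (2). I expect this pivotal-counting lower bound, rather than the Russo/integration step, to be where the real work lies.
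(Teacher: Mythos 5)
Parts (1) and (3), together with the non-strict monotonicity and the vanishing of $\phi$ on $[p_c,1]$, are correct and essentially identical to the paper's proof: the same first-moment bound $\theta_k(p)\leq (2C+1)^{k-1}p^k$ gives $\phi(p)\geq\log\frac{1}{(2C+1)p}$, and the same uniform-in-$p$ rearrangement of (\ref{eq:subexp}) gives continuity and $\phi(p_c)=0$ by one-sided continuity.

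The genuine gap is the strict decrease on $(0,p_c)$. You reduce it, via Russo's formula, to the estimate $\mathbb{E}_p[N_k\mid A_k]\geq c(p)\,k$ uniformly on compacts of $(0,p_c)$, but you do not prove this estimate; you only offer a heuristic (``essentially a unique thin path, a positive fraction of whose sites must be pivotal''). That heuristic is not a proof: conditioning on an increasing event distorts the measure, and controlling the conditional number of pivotals from below is exactly the hard technical core of Menshikov/Aizenman--Barsky-type arguments (it requires a quantitative version of the claim that there are few disjoint open crossings, typically via BK plus a sprinkling/renewal decomposition, and the constant $c(p)$ must be shown uniform). As written, the one step you yourself identify as ``where the real work lies'' is missing, so part (2) is incomplete. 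The paper avoids this entirely with a much softer tool (Lemma \ref{L:nonincre}): for an increasing event $A$ depending on finitely many sites, $\log\mathbb{P}_p(A)/\log p$ is non-increasing in $p$; applied to $\{0\leftrightarrow B(k-1)\}$ and passed to the limit this yields
\[
\phi(a)\;\geq\;\phi(b)\,\frac{\log(1/a)}{\log(1/b)}\qquad (0<a\leq b\leq 1),
\]
which gives $\phi(a)>\phi(b)$ whenever $a<b$ and $\phi(b)>0$, i.e.\ strict decrease throughout $(0,p_c)$, with no pivotal counting at all. (Note that both routes ultimately lean on $\phi>0$ in the subcritical phase --- the paper's inequality needs $\phi(b)>0$ to be strict, and your BK argument would need the exponential decay of Theorem \ref{T:subexp} with $\phi(p)>0$ --- so neither approach escapes that input; but the paper's multiplicative inequality is elementary once Lemma \ref{L:nonincre} is granted, whereas your pivotal bound is an unproven and substantially harder claim.) I recommend replacing the Russo/pivotal plan by the $\log\mathbb{P}_p(A)/\log p$ monotonicity argument.
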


\begin{remark}
By observing Corollary \ref{C:phi}, Theorem \ref{T:subexp} is of no value when $p\geq p_{c}$ because $\phi(p)$ is constantly $0$ in the supercritical phase.
\end{remark}
Figure \ref{Fi:rho} gives the tendency of $-\frac{\log\theta_{k}(p)}{k}$ against $k$ for different values of $p$ when $C=1$.
\begin{figure}[hbt]
\centering\includegraphics[height=3.0in]{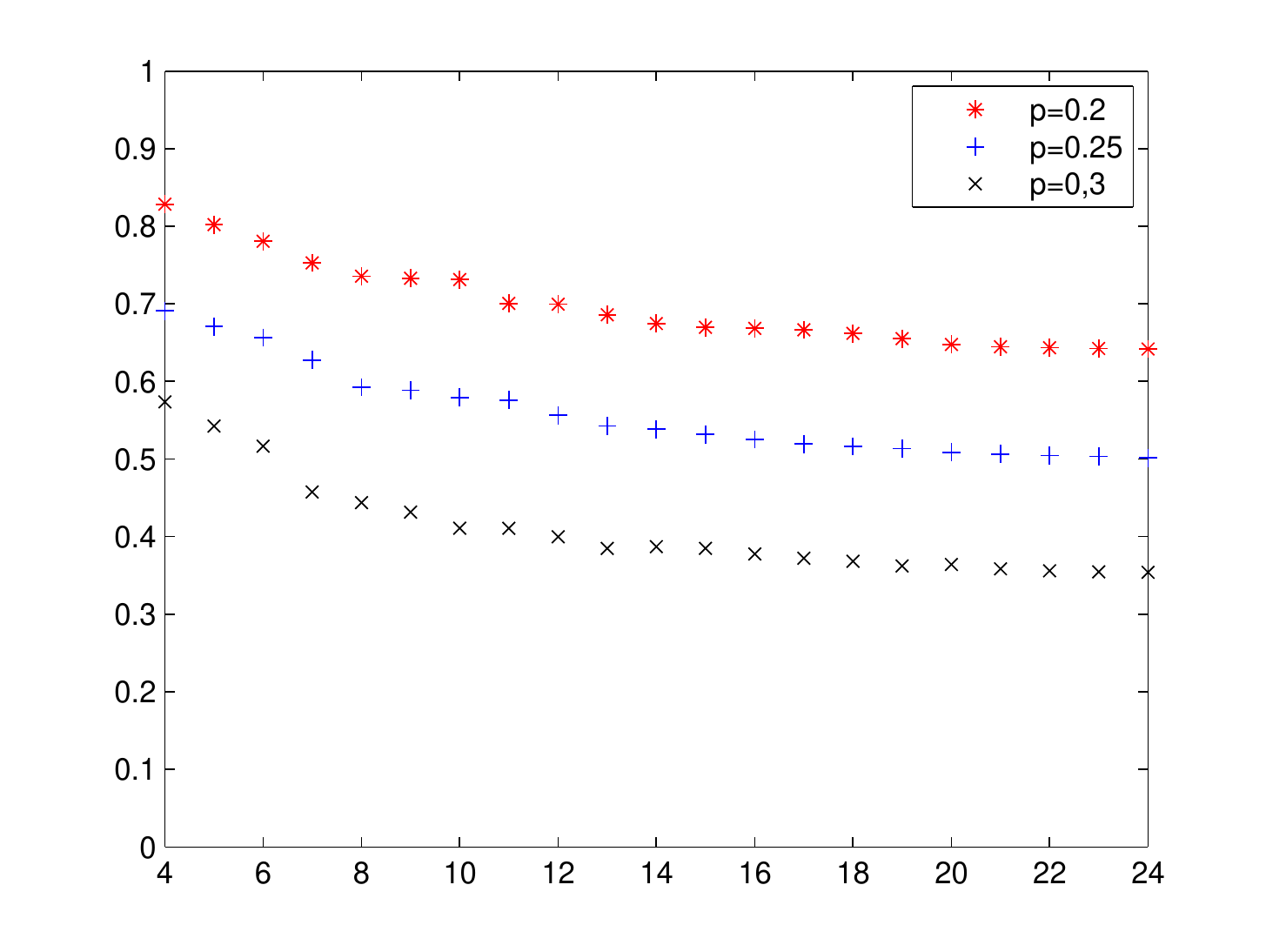}
\caption{A sketch of simulated result of $-\log\frac{\theta_k(p)}{k}$ against $k$ with $p$ being $0.2$, $0.25$, $0.3$ when $C=1$}\label{Fi:rho}
\end{figure}

\subsection{Extension to Pseudo-tree model in dimension $d'=d+1$}\label{extenPseuTree}
This section emphasizes that our results above for the \emph{Pseudo-Tree} model can be extended to other graphs and, in particular, to the analog of models  in higher dimensions.

The pseudo-tree model in dimension $d'=d+1$ is  the analogous lattice of (\ref{pseudoLattice}) in higher dimension
\begin{eqnarray*}
V^d=&&\{(i, j_1, \ldots, j_d)\in\mathbb{Z}^{d'}: -iC_k\leq j_k\leq iC_k, \\
&&k=1,\ldots, d, i\geq 0\},
\end{eqnarray*}

with oriented edges $(i,j_1,\ldots,j_d)\rightarrow(i+1,j_1+s_1,\ldots, j_d+s_d)$, where $|s_k|\leq C_k\in\mathbb{Z}^{+}$, $k=1,\ldots,d$. 
We denote $\theta_k^d(p)$ to be the probability that there is a significant run of length at least $k$ starting at the origin and $p_c^d$ to be the critical probability. 
We use the superscript $d$ to emphasize the notation in higher dimension. 

With these definitions of the graphs, we have the following results in higher dimension. 
The proofs of these theorems do not require any argument in addition to what we have already presented, and so they are omitted. 
\begin{theorem}\label{criticalProbInHighDim}
The critical probability of the forgoing pseudo-tree model in dimension $d\rq{}=d+1$ satisfies $p_c^d\geq\frac{1}{(2C_1+1)\times\ldots\times(2C_d+1)}$.
\end{theorem}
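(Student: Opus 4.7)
The plan is to mirror the argument used for Theorem~\ref{T:p-c} in the two-dimensional setting, since the higher-dimensional pseudo-tree has exactly the same product structure at each step, with only a larger branching factor. I would rely on a first moment / union bound over oriented paths starting at the origin, which is equivalent to a stochastic domination by a Galton--Watson branching process with Binomial$\bigl(\prod_{i=1}^d(2C_i+1),\,p\bigr)$ offspring distribution.

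First I would count the number of length-$k$ oriented paths rooted at the origin in $V^d$. At every node $(i, j_1, \ldots, j_d)$ the allowed increments $(s_1, \ldots, s_d)$ range over the product box $\prod_{i=1}^{d}\{-C_i, \ldots, C_i\}$, so there are exactly $\prod_{i=1}^{d}(2C_i+1)$ outgoing edges at every vertex. Iterating this for $k-1$ steps yields
\[
N_k^d \;:=\; \left(\prod_{i=1}^d (2C_i+1)\right)^{k-1}
\]
paths consisting of $k$ nodes emanating from the origin.

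Next I would bound $\theta_k^d(p)$ by the expected number of fully significant such paths. Each fixed path visits $k$ distinct nodes whose states are independent Bernoulli$(p)$, hence is fully significant with probability $p^k$. A union bound over the $N_k^d$ paths that reach column $k-1$ gives
\[
\theta_k^d(p) \;\leq\; N_k^d \cdot p^k \;=\; \frac{1}{p}\Bigl(p\prod_{i=1}^{d}(2C_i+1)\Bigr)^{k}.
\]
If $p < \bigl(\prod_{i=1}^{d}(2C_i+1)\bigr)^{-1}$, the right-hand side tends to $0$ as $k \to \infty$, forcing $\theta^d(p) = \lim_{k\to\infty}\theta_k^d(p) = 0$. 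By the definition of the critical probability, the analogue of (\ref{eq:criticalProb}) in dimension $d' = d+1$ then yields $p_c^d \geq \bigl(\prod_{i=1}^{d}(2C_i+1)\bigr)^{-1}$.

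I do not anticipate any serious obstacle: the argument is essentially the first moment method, and the product structure of the allowed displacements lets the two-dimensional counting generalize verbatim. The only delicate point is bookkeeping, namely keeping the convention for the \emph{length} of a significant run (number of nodes versus number of edges) consistent, so that the exponent $p^k$ is correctly paired with the path count $N_k^d$; once this is aligned with the convention used for Theorem~\ref{T:p-c}, the conclusion is immediate, which is why the authors state that no new argument is required.
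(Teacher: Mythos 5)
Your argument is correct, but it is not the route the paper takes. The paper proves this theorem by declaring that the two-dimensional proof of Theorem~\ref{T:p-c} carries over verbatim, and that proof is \emph{not} a first moment bound: it uses the FKG inequality (Lemma~\ref{L:FKG}) to decompose the event $\{0\leftrightarrow B(k)\}$ over the $r=2C+1$ (here $r=\prod_{i=1}^d(2C_i+1)$) children of the origin and obtain the recursion $1-\theta_{k+1}(p)\geq(1-p\,\theta_k(p))^{r}$, then analyzes the iteration of $f(x)=1-(1-px)^{r}$, showing that $f(x)<x$ on $(0,1)$ whenever $p<1/r$ so that the iterates, and hence $\theta(p)$, converge to $0$. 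Your union bound $\theta_k^d(p)\leq\bigl(\prod_i(2C_i+1)\bigr)^{k-1}p^k$ is more elementary and reaches the same threshold; in fact the paper itself uses exactly this path-counting estimate later, in the proof of part~3 of Corollary~\ref{C:phi}, to show $\phi(p)\geq-\log\bigl(p(2C+1)\bigr)$, so the technique is entirely consistent with the paper's toolbox. What the FKG recursion buys over the first moment method is a structural inequality on $\theta_k$ itself that partially accounts for the heavy overlap between paths in the pseudo-tree (the feature the authors emphasize distinguishes it from a regular tree); for the sole purpose of bounding $p_c$ from below, however, the two arguments are equally sharp, and your bookkeeping (path count $r^{k-1}$ paired with $p^k$ over $k$ nodes) is consistent with the paper's convention. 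One small caveat: your framing of the union bound as ``equivalent to'' stochastic domination by a Galton--Watson process is loose --- the first moment computation does not require, and is not the same as, exhibiting a coupling --- but nothing in your proof depends on that remark.
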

\begin{theorem}\label{phiInHighDim}
 For $0<p\leq 1$, there exist positive constants $\sigma_1^d$ and $\sigma_2^d$, independent of $p$, and there exists a unique function $\phi^d(p)$, which is strictly decreasing and positive when $p<p_c$; constantly $0$ otherwise, such that
\[
\sigma_1^dk^{-d}\exp\{-k\phi^d(p)\}\leq\theta_k^d(p)\leq\sigma_2^dk^d\exp\{-k\phi^d(p)\},
\]
for any $k\geq 1$. In particular, it follows that
\begin{equation}\label{phiInHigherDimension}
-\frac{\log\theta_k^d(p)}{k}\rightarrow\phi^d(p).
\end{equation}
\end{theorem}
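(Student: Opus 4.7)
The plan is to lift the proof of Theorem \ref{T:subexp} to dimension $d'=d+1$ essentially verbatim; the only change is quantitative, since the $k$-th column of $V^d$ has size $|B(k)| = \prod_{i=1}^{d}(2kC_i+1) = \Theta(k^d)$ rather than $\Theta(k)$. The structural ingredient that the two-dimensional argument relies on---that in an oriented graph every significant path from the origin to column $m+n-2$ passes through exactly one node $x$ of an intermediate column $B(m-1)$, and that the two halves of the path depend on disjoint sets of node states save for $Y_x$---carries over to the higher-dimensional pseudo-tree without modification.

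Accordingly, I would first set $A_x = \{0 \leftrightarrow x\}$ and $B_x = \{x \leftrightarrow B(m+n-2)\}$ for $x \in B(m-1)$. Both events force $Y_x = 1$ and otherwise depend on disjoint collections of nodes, so conditioning on $Y_x$ gives $\mathbb{P}_p(A_x \cap B_x) = \mathbb{P}_p(A_x)\,\theta^d_{n+1}(p)/p$, with $\mathbb{P}_p(B_x) = \theta^d_{n+1}(p)$ by translation invariance. Combining with the union bound $\theta^d_m(p) \leq \sum_x \mathbb{P}_p(A_x) \leq |B(m-1)|\,\theta^d_m(p)$ then delivers the matched pair of inequalities
\[
\frac{\theta^d_m(p)\,\theta^d_{n+1}(p)}{p\,|B(m-1)|} \;\leq\; \theta^d_{m+n-1}(p) \;\leq\; \frac{|B(m-1)|\,\theta^d_m(p)\,\theta^d_{n+1}(p)}{p},
\]
where the lower estimate uses $\theta^d_{m+n-1}(p) \geq \max_x \mathbb{P}_p(A_x \cap B_x) \geq |B(m-1)|^{-1}\sum_x \mathbb{P}_p(A_x \cap B_x)$.

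Next, plugging $|B(m-1)| = \Theta(m^d)$ into these two bounds shows that $f_k := -\log\theta^d_k(p)$ is additive up to an error of $d\log m + O(1)$. A Fekete-type argument, applied in the standard way to the approximately subadditive sequence $f_k + d\log k$, yields the existence of the limit $\phi^d(p) := \lim_{k\to\infty} f_k/k \in [0,\infty)$; iterating the same pair of inequalities along a dyadic scale, exactly as in the proof of Theorem \ref{T:subexp}, converts the accumulated polynomial prefactors into the claimed two-sided estimate
\[
\sigma_1^d\, k^{-d}\, e^{-k\phi^d(p)} \;\leq\; \theta^d_k(p) \;\leq\; \sigma_2^d\, k^d\, e^{-k\phi^d(p)},
\]
with constants depending only on $d$ and the $C_i$. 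The qualitative properties of $\phi^d$ are then inherited mechanically from $\theta^d_k(p)$ and $\theta^d(p)$ as in Corollary \ref{C:phi}: monotonicity of $\theta^d_k(p)$ in $p$ passes to $\phi^d$; for $p \geq p_c^d$ one has $\theta^d_k(p)\geq \theta^d(p) > 0$ forcing $\phi^d(p) = 0$; and for $p < p_c^d$ the supermultiplicative lower bound $\theta^d_{m+n-1} \geq \theta^d_m\theta^d_{n+1}/(p|B(m-1)|)$ prevents $\phi^d(p) = 0$, since that would make $\theta^d_k(p)$ decay too slowly to give $\theta^d(p) = 0$.

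The principal obstacle is keeping the polynomial prefactor honest through the dyadic iteration: in dimension $d$ the correction per splitting scales as $d\log m$, so using only one of the two inequalities above would inflate the final prefactor well beyond $k^d$. It is only the symmetric use of the matching super- and submultiplicative estimates at each scale that pins the cumulative error down to $\pm d\log k + O(1)$, which in turn produces the $k^{\pm d}$ prefactors stated in the theorem. Aside from this bookkeeping, no new ideas beyond those already developed in the $d=2$ case are required.
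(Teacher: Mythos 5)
Your proposal is essentially the paper's own (omitted) proof: the paper explicitly states that Theorem \ref{phiInHighDim} follows from the argument for Theorem \ref{T:subexp} with no new ideas, the only change being that $\mathrm{card}(B(l))=\prod_{i=1}^{d}(2lC_i+1)=\Theta(l^d)$ replaces $2lC+1$, which turns the correction term $g(l)$ into $d\log l+O(1)$ and the prefactors into $k^{\pm d}$; your splitting over an intermediate column, the conditional-independence factorization through the shared node, and the application of the sub-additive limit theorem to $\pm\log\theta^d_k(p)+g(k)+d\log 2$ are exactly the intended route (the paper gets the all-$k$ bounds directly from Fekete's $x_m\geq m\lambda$ rather than a dyadic iteration, but that is cosmetic, as is your harmless off-by-one in the indices $\theta^d_n$ versus $\theta^d_{n+1}$). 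One small caution: your stated reason that $\phi^d(p)>0$ for $p<p_c^d$ --- that $\phi^d(p)=0$ would make $\theta^d_k(p)$ decay too slowly to have $\theta^d(p)=0$ --- is not valid, since the lower bound $\sigma_1^d k^{-d}$ still tends to $0$; the paper itself defers these qualitative properties to the argument of Corollary \ref{C:phi} (continuity, the monotonicity inequality from Lemma \ref{L:nonincre}, and $\phi\to\infty$ as $p\to 0$) rather than deducing positivity this way.
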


More generally, let $\mathbb{Z}_{+}$ be the set of nonnegative integers. 
For any set $\mathcal{C}\subset\mathbb{Z}_{+}^d$, we may extend the condition of the oriented edges to a more general condition such as $(i, j_1,\ldots,j_d)\rightarrow(i+1,j_1+s_1,\ldots,j_d+s_d)$, where $(s_1,\ldots,s_d)\in\mathcal{C}$. 
It is straightforward to get the analogous results as above except that $p_c\geq 1/\text{card}\{\mathcal{C}\}$. Details are omitted here.

\section{Bernoulli Net}
\label{Chp:BernoulliNet}
In this section, we focus on studying the Bernoulli net in a two dimensional rectangular region, where both the number of rows and columns can go to infinity.
We first introduce the model in Subsection \ref{subsec:BernoulliModel}.
Then we review the previous results on Bernoulli, which mainly considers the scenario of fixed number of rows, in Subsection \ref{subsec:previousBernoulli}.
Our results on the asymptotic behaviors of the infinite Benoulli net are presented in Subsection \ref{subsec:inftyBernoulli},  \ref{subsec:longest}, and \ref{subsec:ext} on conditional across probability, rate of longest significant run, and extensions to higher dimensions, respectively.
\subsection{Model Introduction}\label{subsec:BernoulliModel}
We consider an $m$-by-$n$ array of nodes, in which there are $m$ rows and $n$ columns. Such an array can be considered as a grid in a two dimensional rectangular region, $([1,n]\times[1,m])\cap\mathbb{Z}^2$. 
Assume that each node with coordinate $(i,j), 1\leq i\leq n, 1\leq j\leq m$, is associated with a Bernoulli$(p)$ state variable $X_{i,j}$ i.e.,
\[
\mathbb{P}(X_{i,j}=1)=p=1-\mathbb{P}(X_{i,j}=0),
\]
where $p\in [0,1]$ is given. 
Assume state variables of nodes are i.i.d. If $X_{i,j}=1$, then the node is called significant (or open); otherwise, it is non-significant (or closed). 
Any two nodes in the grid, say $(i_1,j_1)$ and $(i_2,j_2)$ are \emph{connected} if and only if $|i_1-i_2|=1$ and $|j_1-j_2|\leq C$, with $C$ a prescribed positive integer. 
Define a chain of length $\ell$ as a chain of $\ell$ connected nodes, i.e.,
\begin{equation}\label{def:chain}
\begin{split}
&\{(i_1,j_1),(i_1+1,j_2),\ldots,(i_1+\ell-1,j_{\ell}): \\
&\left|j_k-j_{k-1}\right|\leq C, \forall k=2,\ldots,\ell\}.
\end{split}
\end{equation}
A \emph{significant (or open)} run refers to a chain with all the nodes being significant. 
We call such a system a \emph{Bernoulli net}. 
We are interested in the length of the longest significance run in this net. 
Throughout the paper, we denote the longest significant run in this net by $L_0(m,n)$ and its length by $\left|L_0(m,n)\right|$. 
Though in some papers \emph{runs},  \emph{chains} and \emph{clusters} have different definitions, here we treat them as synonyms. 
Such a model is used in the detection of filaments in a point cloud image (\cite{Aos06Filament, HuoandChen}) and networks of piecewise polynomial approximation (\cite{NetworksPoly}). 

Apparently, the length $\left|L_0(m,n)\right|$ depends on parameters $n$, $m$, $p$, and $C$. Figures \ref{Fi:ComparisonOfDependCP} and \ref{Fi:comparisonOfDependMN} give graphical representations of the relationships between the length $\left|L_0(m,n)\right|$ and parameters $C, p, m, n$. 
Number of simulations is $1,000$ for each histogram. The following presents a summary of the results.
\begin{itemize}
\item For fixed values of $m$ and $n$, when the value of $C$ or $p$ is increased, the distribution of $\left|L_0(m,n)\right|$ changes dramatically. These can be seen in Figure \ref{Fi:ComparisonOfDependCP}.
\item For fixed values of $C$ and $p$, if the value of $m$ or $n$ is doubled, the change of $\left|L_0(m,n)\right|$ is not significant. These can be seen in Figure \ref{Fi:comparisonOfDependMN}.
\end{itemize}
\begin{figure}[htbp]
\centering
\begin{tabular}{cc}
(a) & (b) \\
 \includegraphics[height =1.7in,width=1.73in]{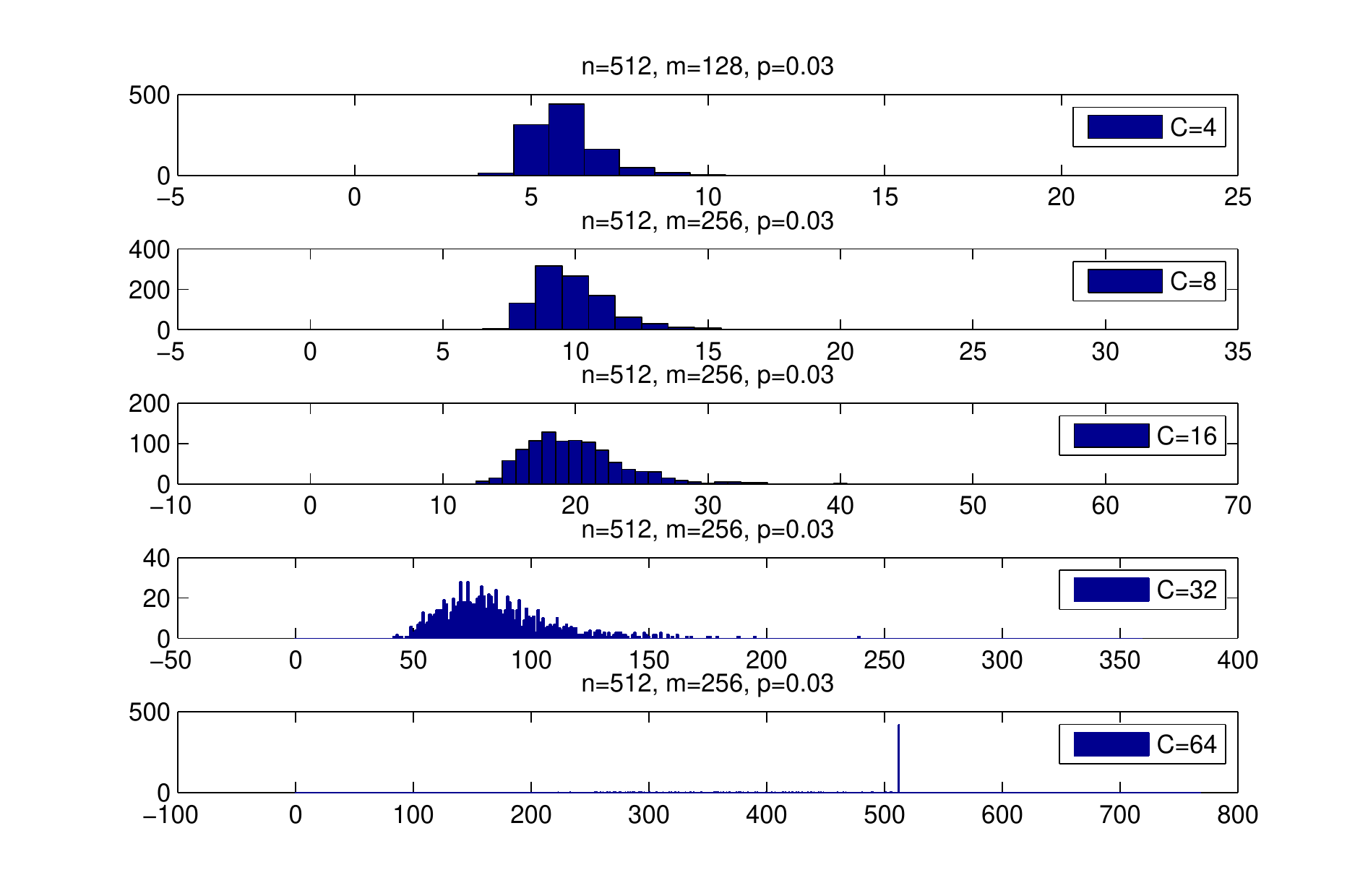} &
 \includegraphics[height =1.7in,width=1.73in]{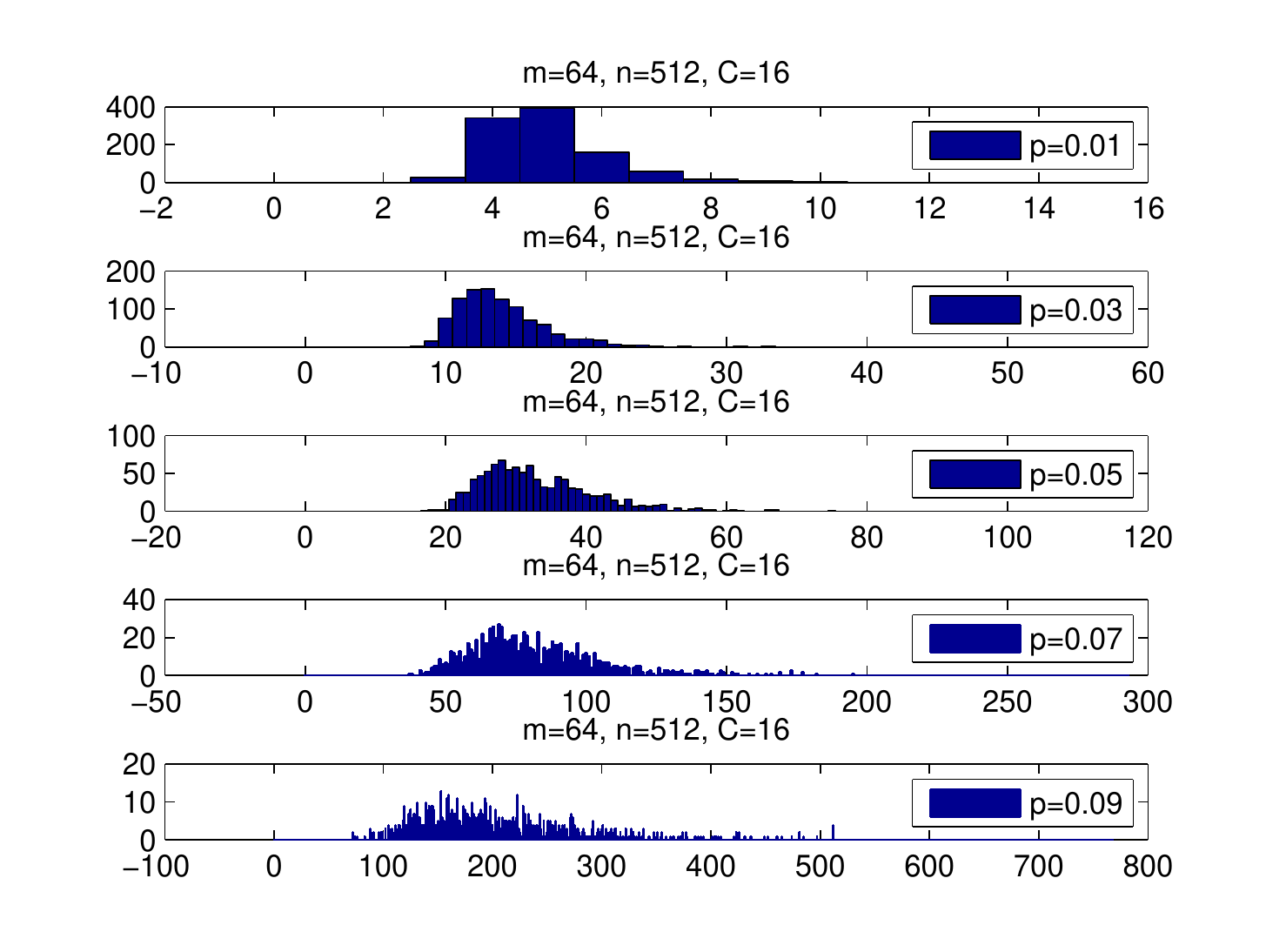}
\end{tabular}
\caption{(a) $\left|L_0(m,n)\right|$ versus $C$: effects of connectivity. Every time when the value of $C$ is doubled, the histogram of $\left|L_0(m,n)\right|$ is shifted to the right significantly.
(b) $\left|L_0(m,n)\right|$ versus $p$: effects of significance probability $p$. When the value of $p$ is increased, the histogram of $\left|L_0(m,n)\right|$ is shifted to the right.}\label{Fi:ComparisonOfDependCP}
\end{figure}

\begin{figure}[htbp]
\centering
\begin{tabular}{cc}
(c) & (d) \\
\includegraphics[height=1.8in, width=1.8in]{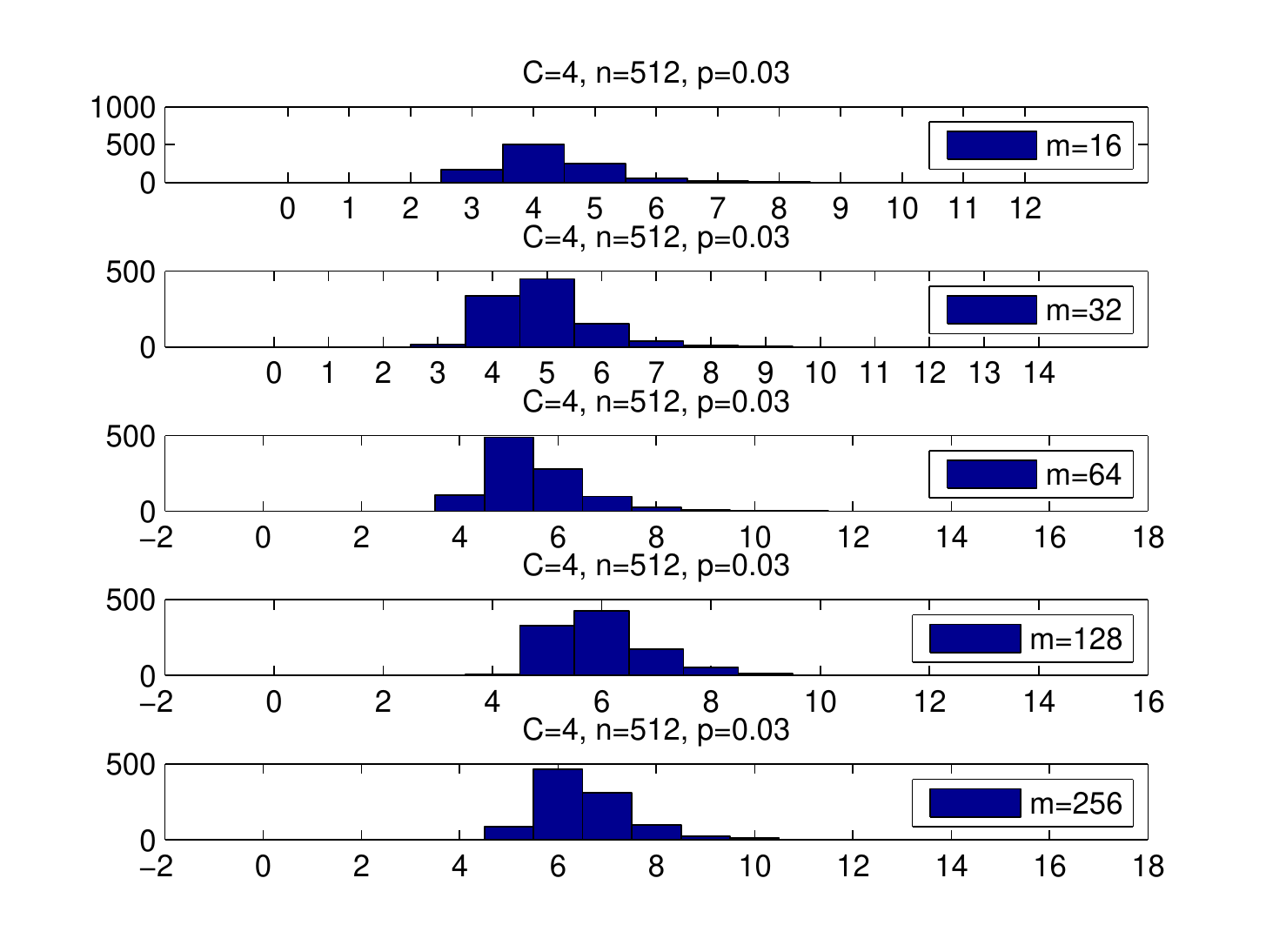} &
\includegraphics[height=1.8in, width=1.8in]{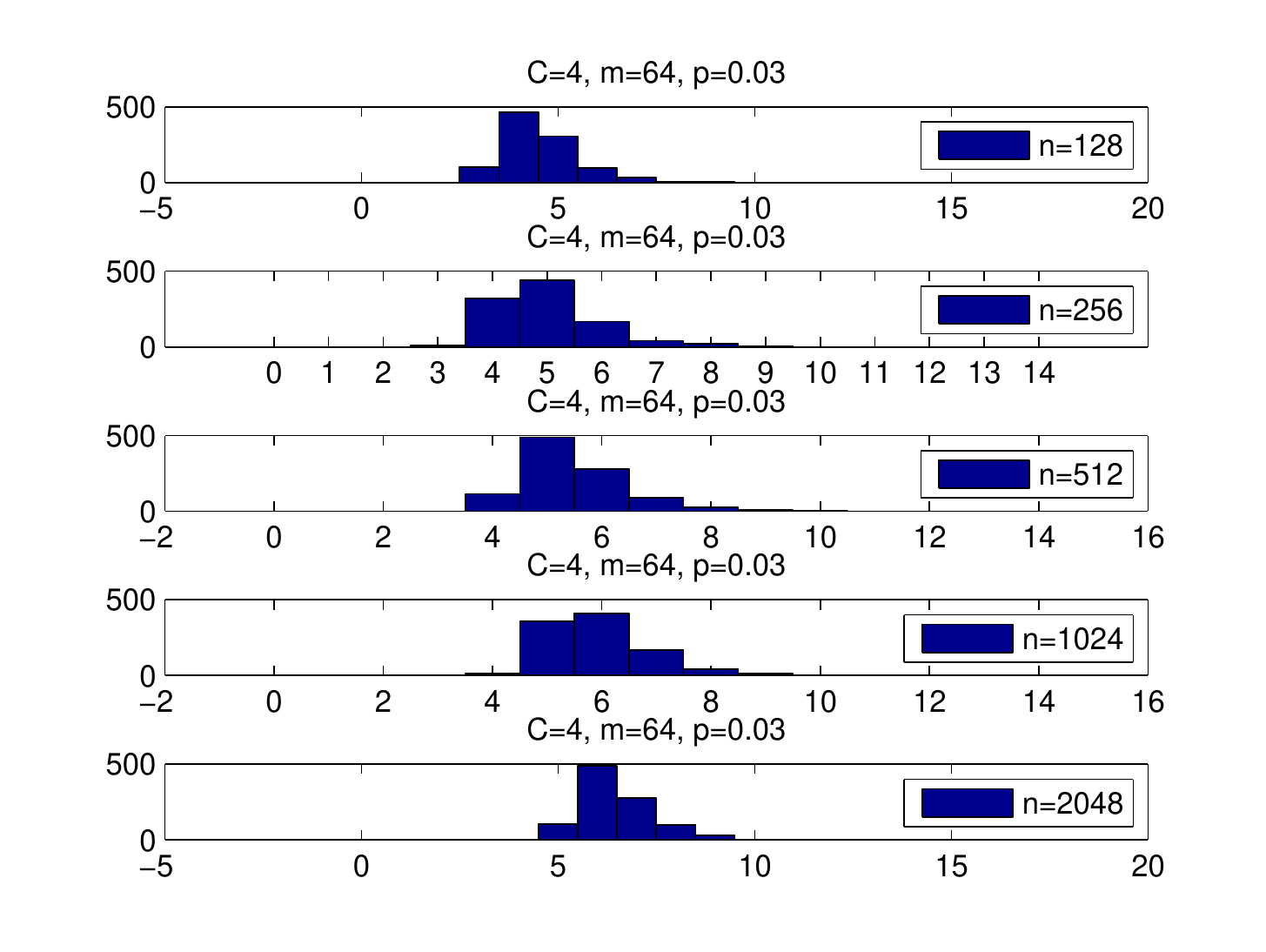}
\end{tabular}
\caption{(c) $\left|L_0(m,n)\right|$ versus $m$: effects of heights. When the value of $m$ is doubled, the histogram of $\left|L_0(m,n)\right|$ does not change dramatically.
(d) $\left|L_0(m,n)\right|$ versus $n$: effects of the width of the Bernoulli net. Every time when the value of $n$ is doubled, the histogram of $\left|L_0(m,n)\right|$ does not change dramatically.} \label{Fi:comparisonOfDependMN}
\end{figure}

\subsection{A thin slab}
\subsubsection{Previous Work}\label{subsec:previousBernoulli}
\label{subsec:PreviousWork}
In this section, we discuss the previous work related to the model in \cite{JASA06LongSigRun}, which focuses on the scenario where the number of rows $m$ is fixed.  
We will discuss the relationship between $\phi(p)$ mentioned in (\ref{def:phi}) and the conditional across probability defined in \cite{JASA06LongSigRun}. 
We list the results in \cite{JASA06LongSigRun}. For proofs of these results, please refer to \cite{JASA06LongSigRun} and references therein.

The first result is motivated by reliability-focused work \cite{Reliab}.
\begin{theorem}\label{th:stab}
Let $P_{k}(m,p)=\mathbb{P}_{C,p}(\left|L_{0}(m,k)\right|=k)$ denote the probability that the length of the longest significant run is $k$, when there are exactly $k$ columns and $m$ rows. We have
\begin{equation}\label{eq:stabni}
\begin{split}
&(1-P_{k}(m,p))^{n-k+1}\leq\mathbb{P}_{C,p}(\left|L_{0}(m,n)\right|<k)\\
\leq&[1-q^{m}P_{k}(m,p)]^{n-k+1},
\end{split}
\end{equation}
where $q=1-p$.
\end{theorem}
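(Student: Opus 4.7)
The plan is to prove the two inequalities separately: the lower bound by a direct FKG argument, and the upper bound by a telescoping conditional product that uses a fully-closed column as a decoupling ``barrier.''

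For the lower bound, I would let $W_i$ ($i=1,\ldots,n-k+1$) denote the event that some significant chain of length $k$ lies entirely in columns $i,i+1,\ldots,i+k-1$. Each $W_i$ is an increasing event in the i.i.d.\ product measure on $\{X_{i,j}\}$, so $W_i^c$ is decreasing. By the chain definition (\ref{def:chain}), any significant run of length $\ge k$ contains a sub-chain confined to $k$ consecutive columns, whence $\{|L_{0}(m,n)|<k\}=\bigcap_{i=1}^{n-k+1}W_i^{c}$. The Harris/FKG inequality then gives $\mathbb{P}\!\left(\bigcap_{i} W_i^c\right)\ge \prod_{i}\mathbb{P}(W_i^c)=(1-P_k(m,p))^{n-k+1}$, which is the lower bound.

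For the upper bound, let $A_i$ be the event ``no significant chain of length $k$ in columns $1,\ldots,i$,'' so that $\{|L_0(m,n)|<k\}=A_n$ and $A_k\supseteq A_{k+1}\supseteq\cdots\supseteq A_n$. I would telescope $\mathbb{P}(A_n)=\mathbb{P}(A_k)\prod_{i=k+1}^{n}\mathbb{P}(A_i\mid A_{i-1})$ and prove the uniform estimate $\mathbb{P}(A_i\mid A_{i-1})\le 1-q^m P_k(m,p)$ for every $i\ge k+1$. To this end I introduce the auxiliary event $F_i=\{\text{column }i-k\text{ is entirely closed}\}\cap\{\text{a significant chain of length }k\text{ lies in columns }i-k+1,\ldots,i\}$. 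Because these two constraints live on disjoint column slabs, independence across columns gives $\mathbb{P}(F_i)=q^m P_k(m,p)$.

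The crucial observation is that, under $F_i$, column $i-k$ acts as a barrier: any candidate length-$k$ chain inside columns $1,\ldots,i-1$ either passes through column $i-k$ (impossible, since that column is all closed) or avoids it entirely, and a chain of length $k$ avoiding column $i-k$ cannot fit in the $k-1$ columns $i-k+1,\ldots,i-1$ and therefore must lie in columns $1,\ldots,i-k-1$. Consequently $F_i\cap A_{i-1}=F_i\cap A_{i-k-1}$; since $F_i$ depends only on columns $i-k,\ldots,i$ while $A_{i-k-1}$ depends only on columns $1,\ldots,i-k-1$, these two events are independent, and $\mathbb{P}(F_i\cap A_{i-1})=q^m P_k(m,p)\,\mathbb{P}(A_{i-k-1})\ge q^m P_k(m,p)\,\mathbb{P}(A_{i-1})$ by the monotonicity $A_{i-1}\subseteq A_{i-k-1}$. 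Since $F_i$ forces a length-$k$ chain in columns $\le i$ and hence $A_i^c$, one has $F_i\cap A_{i-1}\subseteq A_{i-1}\setminus A_i$, which delivers $\mathbb{P}(A_i\mid A_{i-1})\le 1-q^m P_k(m,p)$. Telescoping together with $\mathbb{P}(A_k)=1-P_k(m,p)\le 1-q^m P_k(m,p)$ then yields the upper bound.

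I expect the main obstacle to be engineering the decomposition $F_i\cap A_{i-1}=F_i\cap A_{i-k-1}$. A naive disjoint-block approach would only give the exponent $\lfloor n/k\rfloor$ rather than the sliding-window exponent $n-k+1$ claimed; it is precisely the ``closed-column separator'' trick that buys the extra factor $q^m$ inside the parenthesis while keeping the full sliding-window count $n-k+1$ outside. Once that decomposition is in place, the independence, the inclusion in $A_{i-1}\setminus A_i$, and the telescoping are all routine.
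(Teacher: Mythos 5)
Your proof is correct. Note first that the paper itself does not prove Theorem \ref{th:stab}: it is quoted from the earlier work \cite{JASA06LongSigRun} with the proof explicitly deferred to that reference (and to the reliability literature \cite{Reliab}), so there is no in-paper argument to compare against line by line. Your two halves are exactly the standard route for such consecutive-run bounds, transplanted correctly to the $m$-row net. The lower bound is sound: since a chain of length $\ell$ occupies exactly $\ell$ consecutive columns by (\ref{def:chain}), the event $\{|L_0(m,n)|<k\}$ is indeed $\bigcap_{i=1}^{n-k+1}W_i^c$, each $W_i^c$ is decreasing, translation invariance gives $\mathbb{P}(W_i)=P_k(m,p)$, and FKG/Harris finishes it. The upper bound is the more delicate half and your ``closed-column separator'' is the right device: the identity $F_i\cap A_{i-1}=F_i\cap A_{i-k-1}$ holds because a length-$k$ chain avoiding the fully closed column $i-k$ cannot fit into the $k-1$ columns to its right, the column-disjointness gives independence of $F_i$ and $A_{i-k-1}$, the monotonicity $A_{i-1}\subseteq A_{i-k-1}$ converts this into $\mathbb{P}(A_{i-1}\setminus A_i)\geq q^mP_k(m,p)\,\mathbb{P}(A_{i-1})$, and the count of factors ($\mathbb{P}(A_k)$ plus $n-k$ conditional ratios, with $\mathbb{P}(A_k)=1-P_k(m,p)\leq 1-q^mP_k(m,p)$) yields the exponent $n-k+1$. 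You correctly identify that a naive disjoint-block partition would only give exponent $\lfloor n/k\rfloor$; your sliding conditional product avoids that loss. The only housekeeping worth adding is the degenerate case $\mathbb{P}(A_{i-1})=0$ (where the bound is trivial), which does not affect the argument.
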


The following lemma introduces a constant $\rho(m,p)$ depending on $m$ and $p$, which is important in the asymptotic distribution of $\left|L_{0}(m,n)\right|$. 
\begin{lemma}\label{L:rho}
Define $\rho_{k}(m,p)=\frac{P_{k}(m,p)}{P_{k-1}(m,p)}$. There exists a constant $\rho(m,p)$ in $(0,1)$ that depends on $m, C$, and $p$, but not on $k$ such that
\[
\lim_{k\to\infty}\rho_{k}(m,p)=\rho(m,p).
\]
\end{lemma}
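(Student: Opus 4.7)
The plan is to recognize $P_k(m,p)$ as the survival probability of a finite absorbing Markov chain and to apply Perron-Frobenius theory to extract the limit. First, for each column $i\geq 1$, I would introduce the reachable set
\[
A_i=\{\,j\in\{1,\ldots,m\}:(i,j)\text{ lies in a significant chain starting in column }1\,\}.
\]
Conditional on $A_{i-1}=A$, a node $(i,j)$ belongs to $A_i$ iff $X_{i,j}=1$ and $j\in N(A):=\bigcup_{j'\in A}[j'-C,j'+C]\cap\{1,\ldots,m\}$; independence of column $i$ from earlier columns makes $\{A_i\}$ a Markov chain on $2^{\{1,\ldots,m\}}$ with one-step law
\[
\mathbb{P}(A\to B)=p^{|B|}(1-p)^{|N(A)|-|B|},\qquad B\subseteq N(A).
\]
The state $\emptyset$ is absorbing, and $P_k(m,p)=\mathbb{P}(A_k\neq\emptyset)$.

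Next I would analyze the substochastic matrix $Q$ obtained by restricting the one-step law to the $2^m-1$ non-empty subsets. Three structural facts suffice. (i) \emph{Irreducibility}: from any non-empty $A$ one reaches $\{1,\ldots,m\}$ in at most $\lceil m/C\rceil$ steps by iterating the positive-probability swelling transition $A\to N(A)$, and from $\{1,\ldots,m\}$ every non-empty $B$ is reached in one step with probability $p^{|B|}(1-p)^{m-|B|}>0$; hence all non-empty states communicate. (ii) \emph{Aperiodicity}: the state $\{1,\ldots,m\}$ possesses a self-loop of positive probability $p^{m}$, and by irreducibility the entire chain inherits aperiodicity. (iii) \emph{Strict sub-stochasticity}: from every non-empty $A$, absorption at $\emptyset$ in one step has probability $(1-p)^{|N(A)|}>0$ for $p\in(0,1)$, so every row sum of $Q$ is strictly less than $1$.

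By Perron-Frobenius applied to the irreducible aperiodic nonnegative matrix $Q$, there is a simple leading eigenvalue $\lambda=\lambda(m,C,p)$ with strictly positive left and right eigenvectors $\mathbf{l},\mathbf{r}$ normalized so that $\mathbf{l}^{\top}\mathbf{r}=1$, and
\[
Q^{k}=\lambda^{k}\bigl(\mathbf{r}\mathbf{l}^{\top}+o(1)\bigr)\qquad\text{as }k\to\infty.
\]
Strict sub-stochasticity forces $\lambda<1$, while $\lambda>0$ is automatic from Perron-Frobenius. Letting $\mathbf{v}_k$ denote the row vector with entries $\mathbb{P}(A_k=A)$ over non-empty $A$, one has $\mathbf{v}_k=\mathbf{v}_1 Q^{k-1}$, so
\[
P_k(m,p)=\mathbf{v}_k\mathbf{1}=c\,\lambda^{k-1}\bigl(1+o(1)\bigr),\qquad c=(\mathbf{v}_1\mathbf{r})(\mathbf{l}^{\top}\mathbf{1})>0,
\]
and dividing gives $\rho_k(m,p)\to\lambda=:\rho(m,p)\in(0,1)$, which depends on $m,C,p$ but not on $k$.

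The main obstacle is making the three Perron-Frobenius preconditions fully rigorous on a state space of size $2^m-1$; in particular one must verify that the swelling irreducibility argument survives edge cases such as $m\leq C$ or $p$ close to the endpoints of $(0,1)$ (where individual transitions become small but remain strictly positive), and that the starting distribution $\mathbf{v}_1$ places positive mass on states with nonzero overlap with $\mathbf{r}$, which is automatic because $\mathbf{r}$ has strictly positive entries. Once these preliminaries are in place, the location of $\lambda$ in $(0,1)$ and the convergence of the ratio are formal consequences of the spectral decomposition of $Q$.
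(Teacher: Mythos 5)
Your proof is correct. Note that the paper itself does not prove Lemma \ref{L:rho}: it is quoted from \cite{JASA06LongSigRun} with the proof deferred to that reference, and the paper's later remark that $\rho(m,p)$ is computed there by solving $\pi=\pi P$ for a transition matrix $P$ indicates that the intended argument is exactly the transfer-matrix/Markov-chain reduction you carry out, with Perron--Frobenius on the substochastic matrix over non-empty reachable sets yielding $\rho(m,p)=\lambda\in(0,1)$. Your verification of primitivity (swelling to the full set, the self-loop at $\{1,\ldots,m\}$) and of strict sub-stochasticity is sound, and the positivity of $\mathbf{v}_1$, $\mathbf{r}$, and $\mathbf{l}$ correctly guarantees the leading coefficient is nonzero, so the ratio converges to the Perron root.
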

Let an \emph{across} be a significant run that passes all columns from left to right. 
The ratio $\rho_{k}(m,p)$ is the conditional probability that conditioning on the fact that there is an across in the previous $\left(k-1\right)$ columns, there will be an across for $k$ columns. 
We may call this the \emph{chance of preserving across significant runs} or \emph{conditional across probability}. 
The foregoing lemma shows that as the number of columns goes to infinity, the chance of preserving across significant runs converges to a constant.

Now we will recall the result in \cite{JASA06LongSigRun}, which is a generalization of the well-known Erd$\ddot{o}$s-R$\acute{e}$nyi law (See \cite{LofLHR, PetrovVV, NLawofLN}), which is equivalent to the following theorem for $m=1$, since $\rho(1,p)=p$.
\begin{theorem}\label{th:er-re}
For any fixed $m\in\mathbb{N}$, as $n\rightarrow\infty$, we have
\[
\frac{\left|L_{0}(m,n)\right|}{\log_{1/\rho(m,p)}n}\rightarrow1, \text{\quad almost surely}.
\]
\end{theorem}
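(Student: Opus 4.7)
The plan is to bracket $|L_0(m,n)|$ between $(1-\epsilon)\log_{1/\rho(m,p)} n$ and $(1+\epsilon)\log_{1/\rho(m,p)} n$ for every $\epsilon>0$ almost surely, by combining the probabilistic sandwich in Theorem \ref{th:stab} with the asymptotic rate of $P_k(m,p)$ supplied by Lemma \ref{L:rho}. The first step is to promote Lemma \ref{L:rho} to two-sided geometric estimates on $P_k$ itself: writing $P_k = P_{k_0}\prod_{j=k_0+1}^{k}\rho_j(m,p)$ and using $\rho_k(m,p)\to\rho(m,p)\in(0,1)$, for any $\delta>0$ and all sufficiently large $k$,
\[
c_1(\delta)(\rho-\delta)^k \;\leq\; P_k(m,p) \;\leq\; c_2(\delta)(\rho+\delta)^k,
\]
where $\rho=\rho(m,p)$ and $c_1,c_2>0$.

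For the upper half of the claim, I would set $k_n=\lceil(1+\epsilon)\log_{1/\rho} n\rceil$. Theorem \ref{th:stab} combined with Bernoulli's inequality yields
\[
\mathbb{P}\bigl(|L_0(m,n)|\geq k_n\bigr)\;\leq\;1-(1-P_{k_n})^{n-k_n+1}\;\leq\;n\,P_{k_n}(m,p).
\]
Choosing $\delta$ small enough that $(1+\epsilon)\log_{1/\rho}\bigl(1/(\rho+\delta)\bigr)\geq 1+\epsilon/2$, the right-hand side is dominated by $Cn^{-\epsilon/2}$. This is not summable in $n$, so I would pass to the geometric subsequence $n_j=2^j$, invoke the first Borel--Cantelli lemma along $(n_j)$, and then exploit the monotonicity of $n\mapsto|L_0(m,n)|$ together with $\log n_{j+1}/\log n_j\to 1$ to interpolate the bound back to all $n$.

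For the lower half, I would set $k_n=\lfloor(1-\epsilon)\log_{1/\rho} n\rfloor$. The upper half of Theorem \ref{th:stab} gives
\[
\mathbb{P}\bigl(|L_0(m,n)|<k_n\bigr)\;\leq\;\bigl(1-q^{m}P_{k_n}\bigr)^{n-k_n+1}\;\leq\;\exp\!\bigl(-(n-k_n+1)\,q^{m}P_{k_n}(m,p)\bigr),
\]
and the geometric lower bound on $P_{k_n}$ reduces this to $\exp(-c n^{\epsilon/2})$, which is summable directly in $n$. A single application of Borel--Cantelli then yields $|L_0(m,n)|\geq(1-\epsilon)\log_{1/\rho}n$ almost surely for all large $n$.

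The main obstacle is the upper-half argument: because only a polynomial tail $n^{-\epsilon/2}$ is available, the naive term-by-term Borel--Cantelli fails, and one must go through a geometric subsequence coupled with the monotonicity of $|L_0(m,n)|$ in $n$. Two minor technical points also demand care --- converting the ratio-convergence $\rho_k\to\rho$ into uniform geometric bounds on $P_k$ (one needs to anchor at some large $k_0$ and propagate multiplicatively), and verifying that the slack between $\rho\pm\delta$ and $\rho$ can be absorbed into an arbitrarily small perturbation of the exponent so that $\epsilon$ may be sent to $0$ at the end.
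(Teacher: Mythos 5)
The paper does not actually prove Theorem \ref{th:er-re}; it is quoted from \cite{JASA06LongSigRun} with the proof deferred to that reference, so there is no in-paper argument to compare against. Your proposal is nonetheless a correct and essentially complete proof from exactly the two ingredients the paper does state, namely the sandwich of Theorem \ref{th:stab} and the ratio limit of Lemma \ref{L:rho}, and it follows the standard Erd\H{o}s--R\'enyi blocking scheme: geometric two-sided bounds on $P_k(m,p)$ anchored at a large $k_0$, Bernoulli's inequality plus Borel--Cantelli along the dyadic subsequence $n_j=2^j$ (with the monotone coupling of $|L_0(m,n)|$ in $n$ to interpolate) for the upper half, and the summable bound $\exp(-cn^{\epsilon/2})$ for the lower half. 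The only points worth making explicit in a full write-up are that the $|L_0(m,n)|$ must all be realized on one infinite array so that monotonicity in $n$ makes sense, and that $\delta$ must be taken smaller than $\rho(m,p)$ so that $\rho-\delta>0$; neither is an obstacle.
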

 Given this theorem, it is easy to obtain the following result, which states the relation of $\rho$ and $(m, p)$. 
 Since $\left|L_{0}(m,n)\right|$ actually depends on $p$, we use the notation $\left|L_{0}(m,n,p)\right|$ in the next corollary to make the dependence explicit.
\begin{corollary}\label{co:depofrho}
Given a pair of positive integers $m_1,m_2$ and a pair of probabilities $p_1, p_2$ with $m_1\leq m_2$ and $p_1\leq p_2$, we have
\[
\rho(m_1,p_1)\leq\rho(m_2,p_1)\text{\quad and \quad}\rho(m_1,p_1)\leq \rho(m_1,p_2)
\]
\end{corollary}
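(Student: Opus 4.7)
The plan is to prove both monotonicity statements by a coupling argument, converting inequalities about the longest significant run into inequalities about $\rho$ via Theorem \ref{th:er-re}.

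For the first inequality $\rho(m_1,p)\leq\rho(m_2,p)$ with $m_1\leq m_2$, I would realize both Bernoulli nets on a common probability space by taking the same i.i.d.\ Bernoulli$(p)$ variables $\{X_{i,j}\}$ and restricting to the first $m_1$ rows versus the first $m_2$ rows. Any significant run contained in the $m_1$-by-$n$ sub-grid is automatically a significant run in the $m_2$-by-$n$ grid (the connectivity rule $|j_1-j_2|\le C$ is the same), so
\[
|L_0(m_1,n,p)|\le |L_0(m_2,n,p)|\quad\text{almost surely.}
\]
Dividing both sides by $\log n$ and invoking Theorem \ref{th:er-re} for each $m_i$ gives
\[
\frac{1}{\log(1/\rho(m_1,p))}\;\le\;\frac{1}{\log(1/\rho(m_2,p))}.
\]
Because Lemma \ref{L:rho} guarantees $\rho(m_i,p)\in(0,1)$, both logarithms are strictly positive, and the inequality inverts to $\log(1/\rho(m_1,p))\ge \log(1/\rho(m_2,p))$, which is equivalent to $\rho(m_1,p)\le \rho(m_2,p)$.

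For the second inequality $\rho(m,p_1)\leq\rho(m,p_2)$ with $p_1\leq p_2$, I would use the standard monotone coupling of Bernoulli variables: let $\{U_{i,j}\}$ be i.i.d.\ Uniform$[0,1]$ and set $X^{(p)}_{i,j}=\mathbf{1}\{U_{i,j}\le p\}$. Then $X^{(p_1)}_{i,j}\le X^{(p_2)}_{i,j}$ for every node, so every $p_1$-significant run is also a $p_2$-significant run, giving $|L_0(m,n,p_1)|\le |L_0(m,n,p_2)|$ almost surely. Applying Theorem \ref{th:er-re} exactly as above yields $\rho(m,p_1)\le \rho(m,p_2)$.

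There is no real obstacle here; the only subtle point is making sure the almost-sure pointwise inequality between the two random path lengths survives passage to the limit. Since Theorem \ref{th:er-re} provides almost-sure convergence of $|L_0(m,n,p)|/\log_{1/\rho(m,p)} n$ to $1$ (not merely convergence in distribution), the divided inequality holds on a full-probability event for both coupled processes simultaneously, and taking $n\to\infty$ on that event gives the desired comparison of the deterministic limits. Strictness of the inequalities is not claimed in the corollary and need not be addressed.
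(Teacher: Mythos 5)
Your proposal is correct and follows essentially the same route as the paper: a monotone coupling (uniform variables with thresholds for the $p$-monotonicity, nested grids for the $m$-monotonicity) giving an almost-sure pointwise inequality $\left|L_0(m_1,n,p_1)\right|\leq\left|L_0(m_2,n,p_1)\right|$ and $\left|L_0(m_1,n,p_1)\right|\leq\left|L_0(m_1,n,p_2)\right|$, followed by an appeal to Theorem \ref{th:er-re} to transfer the comparison to $\rho$. Your added remark about why the almost-sure convergence lets the inequality pass to the limit is a point the paper leaves implicit, but the argument is the same.
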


Let us recall the result which states the asymptotic distribution of $\left|L_{0}(m,n)\right|$, the proof of which employs the Chen-Stein approximation method. See \cite{JASA06LongSigRun} and \cite{TERLinD}.
\begin{theorem}\label{th:asymdist}
There exists a constant $A_1>0$, that depends only on $m, C$, and $p$ but not on $n$, such that for any fixed $t$, as $n\rightarrow\infty$, we have
\[
\mathbb{P}_{p}(\left|L_{0}(m,n)\right|<\log_{1/\rho(m,p)}n+t)\rightarrow\exp\{-A_1\cdot\rho(m,p)^{t}\}.
\]
\end{theorem}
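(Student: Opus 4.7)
The plan is to apply the Chen--Stein Poisson approximation to the count of sufficiently long significant runs. Fix $t$, set $k = k_n(t) := \lfloor \log_{1/\rho(m,p)} n + t \rfloor$, and for each column $i$ with $1 \leq i \leq n-k+1$ let $I_i$ indicate that some significant run of length at least $k$ begins at column $i$, after a declumping modification (e.g., additionally requiring that no significant chain extend the run one column to the left, so each maximal long run is counted exactly once). Setting $W = \sum_i I_i$, the event $\{|L_0(m,n)| < k\}$ coincides with $\{W = 0\}$ up to $o(1)$ boundary effects, so the target is $W \xrightarrow{d} \mathrm{Poisson}(A_1 \rho(m,p)^t)$ for a suitable constant $A_1$ depending only on $m$, $C$, and $p$.

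The first substantive step is to upgrade Lemma \ref{L:rho} from the ratio convergence $\rho_k(m,p) \to \rho(m,p)$ to the sharper asymptotic equivalence $P_k(m,p) \sim A_1 \rho(m,p)^k$, and this is what fixes the constant $A_1$. Writing $P_k(m,p) = P_1(m,p) \prod_{j=2}^k \rho_j(m,p)$, the claim amounts to convergence of the infinite product $\prod_{j \geq 2} \rho_j(m,p)/\rho(m,p)$, which in turn requires $\rho_j \to \rho$ at a geometric rate. This geometric rate can be extracted from a renewal/submultiplicative decomposition of $P_k$, using the reliability-style bound in Theorem \ref{th:stab} together with an inequality of the form $P_{k_1+k_2}(m,p) \geq q^m P_{k_1}(m,p) P_{k_2}(m,p)$. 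Granted this, and using $\rho^{k_n(t)} = n^{-1}\rho^t (1+o(1))$, the mean satisfies $\lambda_n := \mathbb{E}[W] = (n-k+1) P_k(m,p) (1 + o(1)) \to A_1 \rho(m,p)^t$.

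Next, I would apply the Chen--Stein bound with dependence neighborhood $B_i = \{j : |i-j| < k\}$; for $|i-j| \geq k$ the indicators $I_i$ and $I_j$ depend on disjoint column blocks and hence are independent. The standard estimate gives $\mathrm{d_{TV}}(W, \mathrm{Poisson}(\lambda_n)) \leq b_1 + b_2$ where
\[
b_1 = \sum_i \sum_{j \in B_i} \mathbb{E}[I_i]\mathbb{E}[I_j], \qquad b_2 = \sum_i \sum_{j \in B_i,\, j \neq i} \mathbb{E}[I_i I_j].
\]
The first is controlled by $b_1 \leq 2k(n-k+1)P_k^2 = O(k/n) \to 0$ using the mean estimate. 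The main obstacle is $b_2$: the joint event $\{I_i I_{i+s} = 1\}$ with $1 \leq s < k$ forces two length-$k$ significant runs whose column ranges overlap in a window of $k-s$ columns. Without declumping, overlapping runs that merge inside this window produce contributions on the order of $P_k$ itself, which would give $b_2 = O(1)$. The declumping constraint rules out backward extensions and forces the two runs to be genuinely distinct structures on the overlap; combined with a case split according to whether the two runs share nodes on the overlap, this bounds $\mathbb{E}[I_i I_{i+s}]$ by $O(\rho^{k+s})$ plus a product of sub-run probabilities of even smaller order, and summing over $s$ and $i$ gives $b_2 \to 0$.

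Once $b_1 + b_2 \to 0$, the Chen--Stein theorem yields $W \xrightarrow{d} \mathrm{Poisson}(A_1 \rho(m,p)^t)$, and evaluating at zero gives $\mathbb{P}(|L_0(m,n)| < k) \to \exp\{-A_1 \rho(m,p)^t\}$, which is the stated limit after absorbing the $O(1)$ gap between $k_n(t)$ and $\log_{1/\rho}n + t$ into $A_1$. The hardest parts are the sharpening of Lemma \ref{L:rho} (a renewal-type estimate exploiting submultiplicativity) and the careful declumping/case analysis needed to ensure $b_2 \to 0$.
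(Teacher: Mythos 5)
First, a point of reference: the paper does not actually prove Theorem~\ref{th:asymdist}. It is imported verbatim from \cite{JASA06LongSigRun}, with only the remark that the proof ``employs the Chen--Stein approximation method'' (see \cite{JASA06LongSigRun} and \cite{TERLinD}), so your overall plan --- declumped run-start indicators, Poisson approximation via the $b_1,b_2,b_3$ bounds of \cite{twommts} --- is exactly the route the paper points to, and the $b_1$/$b_2$ analysis you sketch is the standard one for longest-run laws.

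That said, two steps in your write-up are genuine gaps rather than routine details. (i) The sharpening of Lemma~\ref{L:rho} to $P_k(m,p)\sim A_1\rho(m,p)^k$ is where the constant $A_1$ comes from, and sub/supermultiplicativity alone (the bounds $P_{k_1+k_2}\geq q^m P_{k_1}P_{k_2}$ together with a matching upper bound) only yields two-sided estimates $c_1\rho^k\leq P_k\leq c_2\rho^k$; it does not produce convergence of $P_k/\rho^k$, which is equivalent to summability of $|\rho_j/\rho-1|$ and needs a real mechanism. The mechanism in the source is the finite-state transfer-matrix description of the across event --- the paper itself alludes to it when it says $\rho(m,p)$ is computed from $\pi=\pi P$ for a transition matrix $P$ --- where Perron--Frobenius gives $P_k=A_1\rho^k\bigl(1+O(\gamma^k)\bigr)$ with $\gamma<1$, simultaneously fixing $A_1$ and supplying the geometric rate you need for the mean computation. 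Your ``renewal-type estimate'' gestures at this but is not carried out. (ii) The final step of ``absorbing the $O(1)$ gap between $k_n(t)$ and $\log_{1/\rho}n+t$ into $A_1$'' does not work as stated: since $|L_0(m,n)|$ is integer-valued, the relevant cutoff is $k=\lceil\log_{1/\rho}n+t\rceil$, and $n\rho^{k}=\rho^{\,t+\epsilon_n}$ where $\epsilon_n\in[0,1)$ is a fractional part that oscillates with $n$ and is not eventually constant; it cannot be hidden in a constant $A_1$. This is the well-known non-convergence phenomenon for longest-run distributions, and any complete proof must either track this fractional part explicitly (as in \cite{TERLinD}) or interpret the limit along subsequences on which $\epsilon_n$ converges; as written, your argument would prove convergence of $\mathbb{P}(W=0)$ to a quantity that still depends on $n$ through $\epsilon_n$.
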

The analogous result for a one-dimensional Bernoulli sequence is well known. 
See \cite{OELDA}. The foregoing theorems provide a comprehensive description on the asymptotic distribution of the length of the longest significant run $\left|L_{0}(m,n)\right|$ in a Bernoulli net when the row number $m$ of the array is fixed.

\subsubsection{Asymptotic behavior of conditional across probability}\label{subsec:inftyBernoulli}
We see that all the results in the last subsection depend on $\rho(m,p)$. 
If $\rho(m,p)\rightarrow 1$ as $m\rightarrow\infty$, then Theorems \ref{th:er-re} and \ref{th:asymdist} may not hold. 
We shall next discuss the asymptotic behavior of $\rho_{k}(m,p)$.
 
Recall that $\theta(p)$ is the probability that there exisits an infinite significant chain rooted at the origin and $p_c=\sup\{p\in[0,1], \theta(p)=0\}$. 
We first consider a special case in the array with $m=\infty$ and $n=\infty$. 
In the following, if $m=\infty$, we employ the lattice of $([1,n]\times\mathbb{Z})\cap\mathbb{Z}^2$ rather than $([1,n]\times[1,\infty])\cap\mathbb{Z}^2$. 
This theorem indicates that as $(m,n)\rightarrow(\infty,\infty)$, the behavior of the length of the longest significant run will be quite different in the cases that $p>p_c$ and $p<p_c$.
\begin{theorem}\label{T:kolmo}
Let an array have $\mathbb{Z}^{+}\times\mathbb{Z}$ nodes, where $\mathbb{Z}^{+}$ denotes the set of all nonnegative integers. The probability that there exists an infinite significant chain (when the marginal probability of a node to be open equal to $p$), denoted by $\mu(p)$, in the lattice satisfies
\[
\mu(p)=\begin{cases}
0, \quad &\text{if}\quad p<p_c,\\
1, \quad &\text{if}\quad p>p_c.
\end{cases}
\]
\end{theorem}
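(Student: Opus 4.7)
The plan is to handle the two phases separately, linking both to the pseudo-tree results of Section \ref{chp:PseudoTreeModel}. The essential observation is that, because chains are strictly oriented in the column direction, the set of nodes reachable from any $(i,j) \in \mathbb{Z}^+ \times \mathbb{Z}$ is an exact copy of the pseudo-tree rooted at the origin. Hence, by translation invariance of the product measure, the probability that $(i,j)$ is the start of an infinite significant chain equals $\theta(p)$ from Section \ref{defOfTheta}. I would denote this event $E_{i,j}$, so that $\mathbb{P}(E_{i,j}) = \theta(p)$ for every $(i,j)$.

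For $p < p_c$ I would use a first-moment / countable-union bound. Any infinite chain has a leftmost node, which must lie in some column $i \geq 0$, so
\[
\{\exists\ \text{infinite significant chain}\} \;\subset\; \bigcup_{i \geq 0,\, j \in \mathbb{Z}} E_{i,j}.
\]
Each summand satisfies $\mathbb{P}(E_{i,j}) = \theta(p) = 0$ by the definition \eqref{eq:criticalProb} of $p_c$, so countable subadditivity over the countable index set $\{i \geq 0\} \times \mathbb{Z}$ immediately gives $\mu(p) = 0$.

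For $p > p_c$ I would invoke a zero-one law for $A = \{\exists\ \text{infinite significant chain}\}$. Let $T_k$ denote the vertical shift $(T_k \omega)_{i,j} = \omega_{i, j-k}$. Vertical translation sends any chain to another chain of the same length, so $A$ is $T_k$-invariant for every $k \in \mathbb{Z}$. Regrouping the random variables by columns turns the product Bernoulli measure into an i.i.d.\ product across columns, so $\{T_k\}_{k \in \mathbb{Z}}$ becomes the Bernoulli shift on $\mathbb{Z}$ with alphabet $\{0,1\}^{\mathbb{Z}^+}$, which is mixing and hence ergodic. Ergodicity forces $\mu(p) = \mathbb{P}(A) \in \{0,1\}$, and the lower bound $\mu(p) \geq \mathbb{P}(E_{0,0}) = \theta(p) > 0$ valid for $p > p_c$ rules out $0$, giving $\mu(p) = 1$. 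The main obstacle I anticipate is precisely this ergodicity step: one must verify carefully that the vertical shift acts ergodically on the half-plane product space, and the cleanest route is the column-aggregation just described. A potential alternative would replace ergodic theory by an FKG / second-moment argument exploiting positive association of the events $\{E_{0,j}\}_{j \in \mathbb{Z}}$ to show the count of infinite chains starting in column $0$ is a.s.\ infinite once it is positive, but this appears more delicate than the ergodic route.
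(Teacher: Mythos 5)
Your proposal is correct, and its overall skeleton matches the paper's: the subcritical case is handled by exactly the same countable union bound over starting nodes, each contributing $\theta(p)=0$, and the supercritical case combines a zero--one dichotomy with the lower bound $\mu(p)\geq\mathbb{P}(E_{0,0})=\theta(p)>0$. The one genuine difference is the instrument used for the dichotomy. The paper observes that the existence of an infinite chain is a tail event with respect to the sequence of (mutually independent) columns --- any infinite chain, being oriented, has a tail lying entirely in columns $i\geq n$ for every $n$ --- and invokes the Kolmogorov zero--one law directly. You instead use ergodicity of the vertical shift on the half-plane product space. Your route works, and your anticipated ``main obstacle'' is not really one: the product measure is a Bernoulli shift under the translation you describe, hence mixing and ergodic. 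One small correction, though: to realize $T_k$ as a Bernoulli shift on $\mathbb{Z}$ with alphabet $\{0,1\}^{\mathbb{Z}^+}$ you must aggregate the variables by \emph{rows} (fixed $j$, with $i$ ranging over $\mathbb{Z}^+$), not by columns as you wrote --- the shift permutes the rows and fixes each column setwise only as an index set, not as a block. The paper's Kolmogorov route is marginally cleaner here because the column $\sigma$-fields are literally independent and the tail property is immediate from orientation of the chains, whereas yours requires checking both shift-invariance of the event and ergodicity; on the other hand, your ergodic argument would survive in settings where the across-column independence is weakened to stationarity plus mixing in the vertical direction, so it is somewhat more robust.
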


We next separate our discussion into the super-critical phase, where  $p>p_c$ and the sub-critical phase, where $p<p_c$.\\
\\
\textbf{Phase $p>p_c$:} 
Our first result shows that in the phase that $p>p_c$, $\rho(m,p)\rightarrow 1$ as $m\rightarrow\infty$ for any $p>p_c$.
\begin{theorem}\label{th:RhoLimSup}
For any $p>p_c$, we have
\begin{equation}\label{eq:RhoLimSup}
\lim_{m\to\infty}\rho(m,p)=\rho(\infty,p)=1
\end{equation}
where $\rho(\infty,p)=\lim_{k\to\infty}\rho_{k}(\infty,p)=\lim_{k\to\infty}\frac{P_{k}(\infty,p)}{P_{k-1}(\infty,p)}$, and $\rho_k(\infty, p)$ is the conditional probability that there is an across in the first $k$ columns conditioned on the event that there is an across in the first $k-1$ columns when there are infinitely many rows.
\end{theorem}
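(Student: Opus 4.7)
The plan is to split the theorem into two claims: $\rho(\infty, p) = 1$ and $\lim_{m \to \infty} \rho(m, p) = 1$. The key input is that for $p > p_c$ we have $\theta(p) > 0$ (by the definition of $p_c$ together with Corollary \ref{C:phi}), and hence $\theta_k(p) \geq \theta(p) > 0$ for every $k \geq 1$. To prove $P_k(m, p) \to 1$ as $m \to \infty$ for each fixed $k$, I would partition the row set $\{1, \ldots, m\}$ into $N_m = \lfloor m/(2(k-1)C+1) \rfloor$ disjoint horizontal bands of height $2(k-1)C + 1$, which is the maximum transverse spread of a length-$k$ light cone rooted at column $1$. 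In each band, pick the central column-$1$ node and consider the event that this node roots a length-$k$ significant chain; by the drift bound $|j_i - j_{i-1}| \leq C$, the chain is confined to its band, so these events are mutually independent across bands, each of probability $\theta_k(p)$. This gives
\begin{equation*}
1 - P_k(m, p) \leq (1 - \theta_k(p))^{N_m} \leq (1 - \theta(p))^{N_m}.
\end{equation*}
The same argument applied to the $m=\infty$ strip ($\mathbb{Z}$-indexed rows, infinitely many disjoint bands) yields $P_k(\infty, p) = 1$ for every $k$, whence $\rho_k(\infty, p) = P_k(\infty, p)/P_{k-1}(\infty, p) = 1$ and $\rho(\infty, p) = \lim_k \rho_k(\infty, p) = 1$. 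For finite $m$, letting $m \to \infty$ yields $P_k(m, p) \to 1$, and therefore $\rho_k(m, p) \to 1$, for each fixed $k$.

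To conclude $\lim_m \rho(m, p) = 1$, Corollary \ref{co:depofrho} tells us $\rho(m, p)$ is non-decreasing in $m$ and bounded above by $\rho(\infty, p) = 1$, so $L = \lim_m \rho(m, p)$ exists in $[0, 1]$. To identify $L$ with $1$, I would combine the Stolz--Cesaro identity $\rho(m, p) = \lim_{k \to \infty} P_k(m, p)^{1/k}$ (which follows from the ratio limit in Lemma \ref{L:rho}) with the two-sided asymptotic $c(m)\, \rho(m, p)^k \leq P_k(m, p) \leq C(m)\, \rho(m, p)^k$ implicit in the renewal/transfer-matrix proof of Lemma \ref{L:rho}. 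Choosing $k(m) \to \infty$ slowly enough that $C(m)^{1/k(m)} \to 1$ while $N_m/k(m) \to \infty$ (so that the band argument of the previous paragraph still yields $P_{k(m)}(m, p) \to 1$), one obtains
\begin{equation*}
\rho(m, p) \geq \frac{P_{k(m)}(m, p)^{1/k(m)}}{C(m)^{1/k(m)}} \longrightarrow 1,
\end{equation*}
forcing $L = 1$.

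The main obstacle is precisely this final interchange of limits. The pointwise-in-$k$ convergence $P_k(m, p) \to 1$ as $m \to \infty$ does not by itself force $\rho(m, p) = \lim_k P_k(m, p)^{1/k}$ to tend to $1$; one needs uniform control on the prefactor $C(m)$ (polynomial growth in $m$ is enough) in order to justify $C(m)^{1/k(m)} \to 1$ for a suitable slowly growing $k(m)$. An alternative route would be to construct a direct coupling between the finite-$m$ and infinite-$m$ Bernoulli nets that transfers the identity $\rho_k(\infty, p) = 1$ down to the finite-$m$ conditional across probability, bypassing the need for a fine asymptotic of $P_k(m, p)$ altogether.
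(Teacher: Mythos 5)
Your treatment of the first claim, $\rho(\infty,p)=1$, is correct and in fact more direct than the paper's. The disjoint-band construction gives $1-P_k(m,p)\le(1-\theta_k(p))^{N_m}$ with independent bands, hence $P_k(\infty,p)=1$ for every $k$ (indeed for every $p>0$, since $\theta_k(p)\ge p^k>0$), so every ratio $\rho_k(\infty,p)$ equals $1$ outright. The paper instead argues by contradiction: if some subsequence of $\rho_k(\infty,p)$ had a limit $\rho_0<1$, then $P_n(\infty,p)=P_1(\infty,p)\prod_{i\le n}\rho_i(\infty,p)\to 0$, contradicting $P_n(\infty,p)\ge\theta_n(p)\ge\theta(p)>0$; this is where the hypothesis $p>p_c$ enters. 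Both routes work for this half.

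The second claim, $\lim_{m\to\infty}\rho(m,p)=1$, is where your proposal has a genuine gap, and you have correctly diagnosed it yourself: the two-sided bound $P_k(m,p)\le C(m)\rho(m,p)^k$ with $C(m)$ growing slowly enough in $m$ is not supplied by Lemma \ref{L:rho}, which only gives convergence of the ratios for each fixed $m$ with no uniformity across $m$. Without that prefactor control the choice of $k(m)$ with $C(m)^{1/k(m)}\to 1$ cannot be justified, so the final display is a plan rather than a proof. The paper closes exactly this hole by a different mechanism. Having obtained $\rho^{\ast}:=\lim_{m}\rho(m,p)=\sup_{m}\rho(m,p)$ from the monotonicity in Corollary \ref{co:depofrho}, it supposes $\rho^{\ast}<1$ and invokes the almost-sure rate of Theorem \ref{th:er-re}: for each fixed $m$, almost surely $\left|L_0(m,n)\right|\sim\log_{1/\rho(m,p)}n\le\log_{1/\rho^{\ast}}n$ for large $n$, a bound that under the contradiction hypothesis is uniform over $m$. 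On the other hand, taking $m\ge 3nC$ so that the strip contains the entire length-$n$ light cone of a first-column node, the probability of a full across, i.e.\ a run of length $n$, is at least $\theta_n(p)\ge\theta(p)>0$; since $\log_{1/\rho^{\ast}}n\ll n$, the two statements are incompatible and $\rho^{\ast}=1$. In short, the quantitative prefactor you were missing is replaced by the already-established almost-sure law, used negatively under the assumption $\rho^{\ast}<1$; this is the missing idea you would need to import (or else carry out the coupling you sketch as an alternative) to complete your argument.
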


We note that $\lim_{m\to\infty}\rho(m,p)=1$ in the case of $p>p_c$. Recall that we introduce $\phi(p)$ and its property in Corollary \ref{C:phi}. $\phi(p)\equiv0$ on  $p\in[p_c,1]$. So we have the iterated limit
\begin{equation}\label{eq:rholim}
\lim_{m\to\infty}\lim_{k\to\infty}\rho_k(m,p)=\exp\{-\phi(p)\},
\end{equation}
 when $p\in[p_c,1]$. Recall that in (\ref{cond:m}), we define
 \[
 \mathcal{A}_{c_1,c_2,\delta_1, \delta_2}=\{(m,n): c_1n^{1+\delta_1}\leq m\leq c_2\exp[n(\phi(p)-\delta_2)]\},
 \]
for positive $c_1$, $c_2$, $\delta_1$ and $\delta_2$. 
In the following, we use $\rho_n(m,p)$ instead of $\rho_k(m,p)$ and we will show below the double limit of $\rho_n(m,p)$ is $\exp\{-\phi(p)\}$, when $p<p_c$ as $n\rightarrow\infty$, $m\rightarrow\infty$ and $(m,n)\in\mathcal{A}_{c_1,c_2,\delta_1,\delta_2}$ by Chen-Stein's approximation method (See \cite{twommts}).\\
\\
\textbf{Phase $p<p_c$:}
Recall that in Theorem \ref{T:subexp}, we introduce $\theta_{n}(p)$, which is the probability that there is a significant run of size $n$ connecting the origin and $B(n-1)$. 
In Theorem \ref{th:stab} we introduce $P_{n}(m,p)$, which is the probability that the length of the longest significant run is $n$ when there are exactly $n$ columns. 
To determine the limit of $\rho_{n}(m,p)=\frac{P_{n}(m,p)}{P_{n-1}(m,p)}$, we need to know $P_{n}(m,p)$ when both $n$ and $m$ are very large positive integers.

\begin{theorem}\label{th:ptoinfty}
Let $([1,n]\times[1,m])\cap\mathbb{Z}^2$ be the integer lattice with the probability of nodes being open equal to $p$. 
Let $P_{n}(m,p)$ be the probability of the event that there is a significant run from the first column to the last column of the lattice, which is called an across run (or across) in Lemma \ref{L:rho}. Then if $p<p_c$, we have
\[
P_{n}(m,p)=1-\exp\{-m\theta_n(p)\}+o(1),
\]
as $m\rightarrow\infty, n\rightarrow\infty$ and $(m,n)\in\mathcal{A}_{c_1,c_2,\delta_1,\delta_2}$.
In particular,
we have $\rho_n(m,p)\rightarrow\exp\{-\phi(p)\}$ as $m\rightarrow\infty, n\rightarrow\infty$ and $(m,n)\in\mathcal{A}_{c_1,c_2,\delta_1,\delta_2}$.
\end{theorem}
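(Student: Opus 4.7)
The plan is to apply the Chen-Stein Poisson approximation (in the Arratia-Goldstein-Gordon form, \cite{twommts}). For each row $j = 1, \ldots, m$, let $A_j$ be the event that $(1,j)$ is joined to column $n$ by a significant chain, and set $W = \sum_{j=1}^m \mathbf{1}_{A_j}$, so that $P_n(m,p) = \mathbb{P}(W \geq 1) = 1 - \mathbb{P}(W = 0)$. By translation and the fact that the cone of reachable nodes has vertical half-width $(n-1)C$, every interior row satisfies $\mathbb{P}(A_j) = \theta_n(p)$; the boundary rows contribute $O(n\theta_n(p))$, which is absorbed by the hypothesis $m \geq c_1 n^{1+\delta_1}$. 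Hence $\lambda := \mathbb{E}[W] = m\theta_n(p)(1+o(1))$, and Theorem~\ref{T:subexp} combined with~(\ref{cond:m}) gives $\lambda \to 0$, together with the polynomial lower estimate $\lambda \geq c_1\sigma_1 n^{\delta_1} e^{-n\phi(p)}$, which I will need in order to dominate the Chen-Stein error.

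Two rows $j, j'$ with $|j - j'| > 2(n-1)C$ yield independent events $A_j, A_{j'}$, since the cones of nodes on which they depend are disjoint. Setting the dependency neighborhood $B_j := \{j' : |j - j'| \leq 2(n-1)C\}$, of cardinality $O(nC)$, the Chen-Stein inequality reads
\[
|\mathbb{P}(W=0)-e^{-\lambda}| \leq b_1 + b_2,
\]
with $b_1 = \sum_{j} \sum_{j'\in B_j} \mathbb{P}(A_j)\mathbb{P}(A_{j'})$ and $b_2 = \sum_{j} \sum_{j' \in B_j, j'\neq j} \mathbb{P}(A_j \cap A_{j'})$; the third Chen-Stein term vanishes by independence of $\mathbf{1}_{A_j}$ from $\sigma(\mathbf{1}_{A_{j'}}:j'\notin B_j)$. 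The bound $b_1 \leq m\,(4nC+1)\theta_n(p)^2$, together with $\theta_n(p) \leq \sigma_2 n e^{-n\phi(p)}$ and $m \leq c_2 e^{n(\phi(p)-\delta_2)}$, gives $b_1/\lambda = O(n^2 e^{-n\phi(p)}) \to 0$.

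The delicate piece, which I expect to be the main technical obstacle, is bounding $b_2$. For $j\neq j'$ in $B_j$, I would split $A_j \cap A_{j'}$ into the disjoint-occurrence part $A_j \circ A_{j'}$, bounded by $\theta_n(p)^2$ via the van den Berg-Kesten inequality, and the "shared" part in which the two chains first meet at some node $(k_0, y_0)$. On the shared part I would apply BK to the two disjoint initial segments from $(1,j)$ and $(1,j')$ to $(k_0, y_0)$, use the independence of nodes in columns $\geq k_0$ (other than $(k_0,y_0)$) from the event defining the two initial segments to get an independent chain factor from $(k_0, y_0)$ to column $n$, and finally sum over $y_0$ in column $k_0$ and $k_0\in\{1,\ldots,n\}$ using the exponential decay of Theorem~\ref{T:subexp}. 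To push the resulting bound below $\lambda$ I expect to need a refinement of the crude estimate $\mathbb{P}((1,j)\leftrightarrow(k_0,y_0)) \leq \theta_{k_0}(p)$, most naturally by Cauchy-Schwarz applied to the two-point function $\sum_y \mathbb{P}((0,0)\leftrightarrow(k_0,y))^2$, for which one exploits the subcritical decay that is already encoded by $\phi(p)>0$ in Corollary~\ref{C:phi}. The target is $b_2=o(\lambda)$.

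Granting these estimates, $\mathbb{P}(W=0) = e^{-\lambda} + o(\lambda) = 1 - \lambda + o(\lambda)$ since $\lambda \to 0$, whence $P_n(m,p) = 1 - e^{-m\theta_n(p)} + o(m\theta_n(p))$, which in particular yields the stated $P_n(m,p) = 1 - e^{-m\theta_n(p)} + o(1)$. Applying the same analysis with $n-1$ in place of $n$, and using $1 - e^{-x} = x(1+o(1))$ as $x \to 0^+$ for both $x = m\theta_n(p)$ and $x = m\theta_{n-1}(p)$, we obtain
\[
\rho_n(m,p) = \frac{P_n(m,p)}{P_{n-1}(m,p)} = \frac{\theta_n(p)}{\theta_{n-1}(p)}(1+o(1)) \longrightarrow e^{-\phi(p)}
\]
by Corollary~\ref{T:thetaratio}, completing the proof.
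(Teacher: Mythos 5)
Your framework---the row indicators $A_j$, the count $W$, dependency neighborhoods of width $O(nC)$, $b_3=0$ by exact independence, and $b_1\leq m\cdot O(nC)\cdot\theta_n(p)^2$---is exactly the paper's argument. Where you diverge is $b_2$, and there the proposal breaks: the target $b_2=o(\lambda)$ is unattainable for these indicators. Take $j'=j+1$. The event that $(1,j)$ and $(1,j+1)$ are both open and that the node in column $2$, row $j$ (a neighbor of both) launches a significant run to column $n$ is contained in $A_j\cap A_{j+1}$ and has probability $p^2\theta_{n-1}(p)\geq p^2\theta_n(p)$; hence $b_2\geq 2\sum_{j}\mathbb{P}(A_j\cap A_{j+1})\geq c\,m\,\theta_n(p)=c\lambda$ for a constant $c>0$ depending only on $p$. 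No BK decomposition or Cauchy--Schwarz refinement of the two-point function can beat a lower bound that comes from an explicit sub-event: two chains started at adjacent rows typically merge in the very next column, so the ``shared'' part of $\mathbb{P}(A_j\cap A_{j'})$ is genuinely of order $\theta_n(p)$ for nearby pairs. The paper does not attempt your estimate. It bounds $\mathbb{P}(A_j\cap A_{j'})\leq\mathbb{P}(A_j)\leq\theta_n(p)$ trivially, obtains $b_2=O(nC\cdot m\,\theta_n(p))=O(n^2e^{-n\delta_2})\to 0$ from the upper constraint $m\leq c_2e^{n(\phi(p)-\delta_2)}$ in (\ref{cond:m}), and that already yields the displayed claim $P_n(m,p)=1-e^{-m\theta_n(p)}+o(1)$; for that first assertion your extra machinery is unnecessary.

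Your instinct about the second assertion is correct: once $m\theta_n(p)\to 0$, an additive $o(1)$ error says nothing about the ratio, and one needs errors that are $o(m\theta_n(p))$. (This is precisely where the paper's own derivation is thin: it expands $1-e^{-m\theta_n(p)}=m\theta_n(p)+O(m^2\theta_n^2(p))$ and then drops the $o(1)$ terms from numerator and denominator without showing they are small relative to $m\theta_n(p)$; its error bound $O(n^2e^{-n\delta_2})$ need not be $o(n^{\delta_1}e^{-n\phi(p)})$ when $\phi(p)>\delta_2$.) But since $b_2\asymp\lambda$, Chen--Stein applied to the raw row indicators can never certify relative error $o(1)$; the honest route is to declump---for instance, count only the lowest-indexed row of each maximal block of rows launching an across, or run a second-moment or Bonferroni argument on clumps and show the mean clump size stabilizes---which neither you nor the paper carries out. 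So your proof of the first statement should be replaced by the paper's one-line bound on $b_2$, and your proof of the second statement has a genuine gap at exactly the step you flagged as the main obstacle, with the proposed repair unable to close it.
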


In \cite{JASA06LongSigRun}, the authors provide a method to calculates the values of $\rho(m,p)$ (see Table \ref{tb:rho}), when $m$ is small and fixed by finding out the solution of $\pi=\pi P$, where $P$ is a transition matrix. 
See also (11) in \cite{JASA06LongSigRun}.
\begin{table}[htbp]
\caption{The values of $\rho$ for different values of $m$ and $p$,
when $C=1$.}
\begin{center}
\begin{tabular}{|c|c|c|c|c|c|c|}
\hline
p&0.1&0.2&0.3&0.4&0.5 & 0.6\\
\hline
m=4&0.2444 &   0.4564& 0.6341 & 0.7758& 0.8804 & 0.9482\\
m=8& 0.2654 & 0.4955& 0.6869 & 0.8363 & 0.9383& 0.9876\\
m=10&0.2691 & 0.5022& 0.6958& 0.8467& 0.9486& 0.9930\\
\hline
\end{tabular}
\end{center}
\label{tb:rho}
\end{table}

One can use simulation to find $\phi(p)$ in the case of $p<p_c$ and thus get some idea about $\rho(m,p)$ as $m$ becomes sufficiently large. 
See Figure ~\ref{Fi:rho}. 
The simulation below is done for the length of the longest significant chain in \cite{JASA06LongSigRun} for $n=64$, $m=128$, $C=3$ and $p=0.05$ when nodes are assumed to be independent. 
See Figure~\ref{Fi:freq}. The result is based on $10,000$ simulations.
\begin{figure}[htbp]
\centering
\begin{tabular}{c}
(a)\\  
 \includegraphics[height=2.4in]{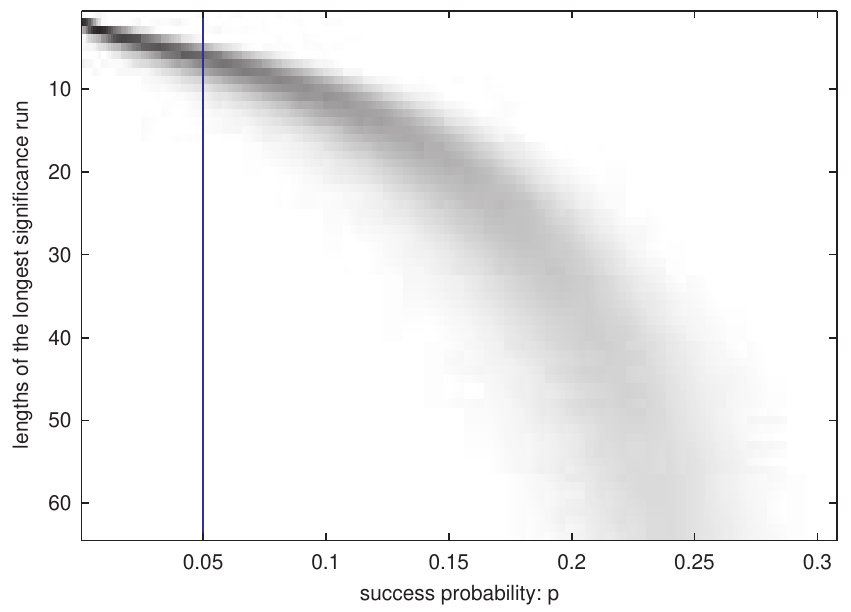} \\
 (b) \\
 \includegraphics[height=2.4in]{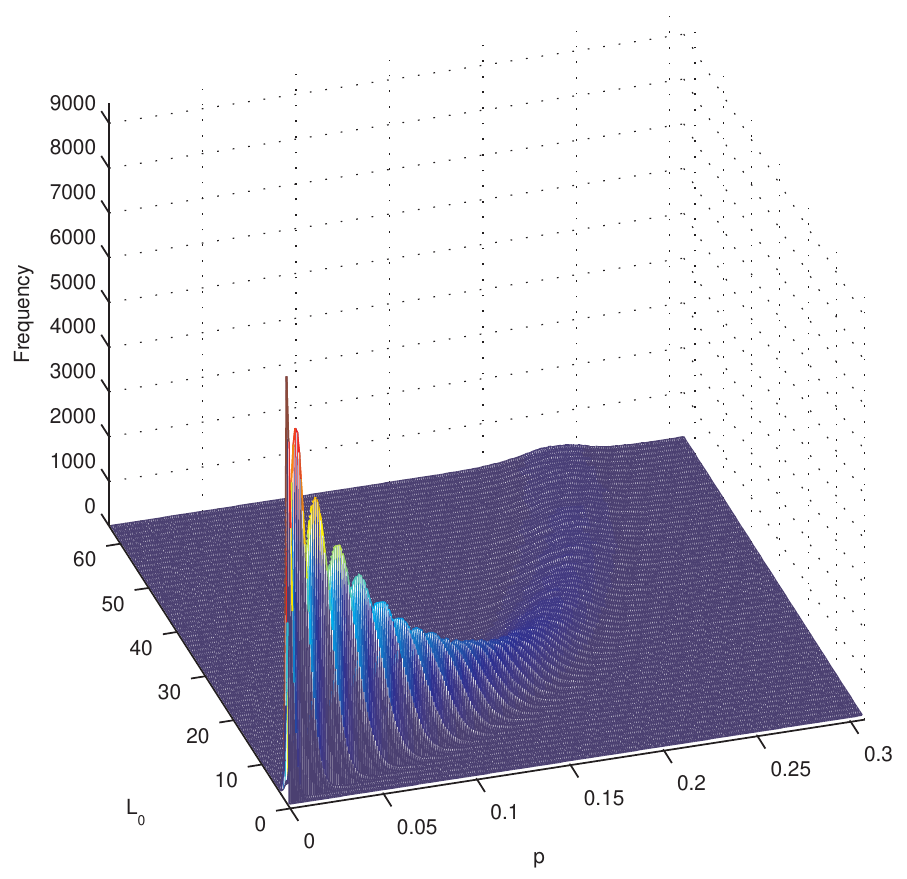} \\
(c)  \\
\includegraphics[height=2.5in]{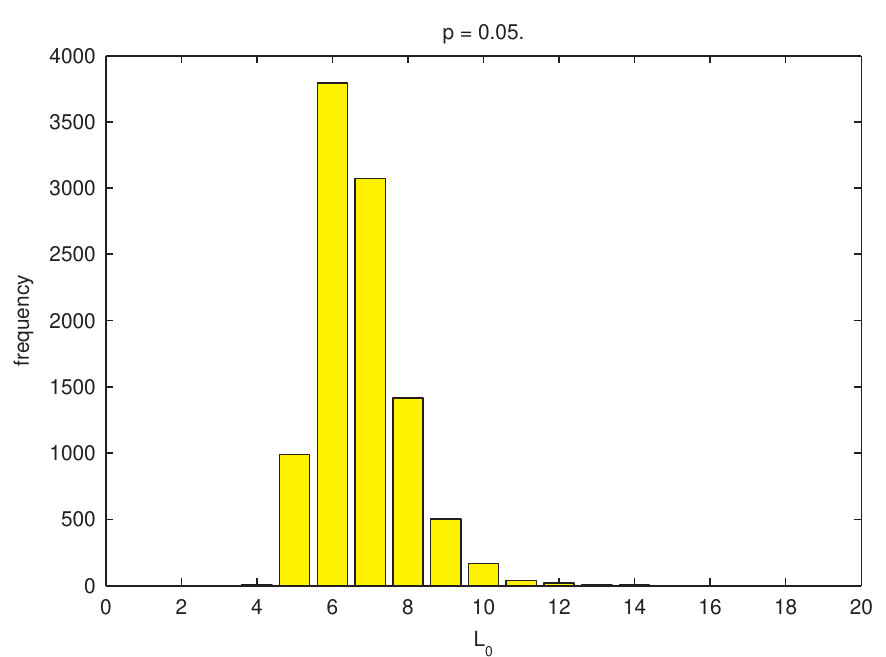} 
\end{tabular}
\caption{(a) An image plot, the distribution of $\left|L_0(m,n)\right|$ (under
$n=64, m=128, C=3$) as a function of $p$ ($0 < p < 0.3075$). The
intensity of the image is proportional to the frequency of
$\left|L_0(m,n)\right|$ (which is specified by the y-coordinate) given a value of
$p$ (which is the x-coordinate) out of $10,000$ simulations. (b) A
mesh plot of the same data as in (a). (c) For $p=0.05$, the
histogram of $L_0$ based on the same $10,000$ simulations. Note
this can be viewed as one vertical slice from (a), or similarly a
slice from (b).} \label{Fi:freq}
\end{figure}

\subsection{Rate of the longest significant run}\label{subsec:longest}

The following is an extension of Theorem 2 in \cite{JASA06LongSigRun} in the case that the Bernoulli net enlarges as $m\rightarrow\infty$and $n\rightarrow\infty$. 
In the following, $\log$ denotes the logarithm with base $e$ unless the base is explicitly specified.
\begin{theorem}\label{th:mainconv}
When $p<p_c$, then as $m\rightarrow\infty$ and $n\rightarrow\infty$, we have that
\begin{equation}\label{mainconveq}
\frac{\left|L_0(m,n)\right|}{\log(mn)}\rightarrow \frac{1}{\phi(p)}, \text{\quad in probability},
\end{equation}
where $\phi(p)$ is a strictly decreasing, continuous function defined in (\ref{def:phi}), which is positive in $(0,p_c)$ and constantly $0$ otherwise.
\end{theorem}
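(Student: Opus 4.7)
My plan is to establish matching upper and lower bounds in probability, both of order $\log(mn)/\phi(p)$, using a first-moment (union-bound) argument above and a truncated independence / second-moment argument below. Throughout let $W_{(i,j)}(k)$ be the indicator that a significant run of length $k$ starts at $(i,j)$, so $\mathbb{E}[W_{(i,j)}(k)] = \theta_k(p)$ by translation invariance (Section~\ref{notation}), and let $N_k = \sum_{(i,j)} W_{(i,j)}(k)$.

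For the upper bound, fix $\varepsilon>0$ and set $k_+ = \lceil (1+\varepsilon)\log(mn)/\phi(p) \rceil$. There are at most $mn$ potential starting vertices, so by Markov and the upper bound $\theta_{k_+}(p) \leq \sigma_2 k_+ \exp\{-k_+\phi(p)\}$ in Theorem~\ref{T:subexp},
\[
\mathbb{P}\bigl(|L_0(m,n)| \geq k_+\bigr) \;\leq\; \mathbb{E}[N_{k_+}] \;\leq\; mn \cdot \sigma_2 k_+ e^{-k_+\phi(p)} \;\leq\; \sigma_2 k_+ (mn)^{-\varepsilon} \;\longrightarrow\; 0,
\]
since $k_+ = O(\log(mn))$. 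This half needs no further structural input.

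For the lower bound, fix $\varepsilon>0$ and set $k_- = \lfloor (1-\varepsilon)\log(mn)/\phi(p)\rfloor$. Partition the grid into $M = \lfloor n/k_-\rfloor$ disjoint vertical strips of width $k_-$ and height $m$. Because the strips touch disjoint node sets, the events ``strip $\ell$ contains an across run'' are mutually independent, each with probability $P_{k_-}(m,p)$. Any across in any strip forces $|L_0(m,n)| \geq k_-$, so
\[
\mathbb{P}\bigl(|L_0(m,n)| < k_-\bigr) \;\leq\; (1-P_{k_-}(m,p))^M \;\leq\; \exp\{-M\,P_{k_-}(m,p)\}.
\]
By the lower bound $\theta_{k_-}(p)\geq \sigma_1 k_-^{-1} e^{-k_-\phi(p)}$ of Theorem~\ref{T:subexp}, $m\theta_{k_-}(p) \geq \sigma_1 m k_-^{-1}(mn)^{-(1-\varepsilon)}$. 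When this quantity is large, Theorem~\ref{th:ptoinfty} gives $P_{k_-}(m,p)\to 1$ and the right side decays immediately. When it is small, the Chen--Stein machinery used in the proof of Theorem~\ref{th:ptoinfty} provides the matching linear lower bound $P_{k_-}(m,p) \gtrsim m\theta_{k_-}(p)$, whence
\[
M\,P_{k_-}(m,p) \;\gtrsim\; \frac{n}{k_-}\cdot \sigma_1 m k_-^{-1}(mn)^{-(1-\varepsilon)} \;=\; \sigma_1 k_-^{-2} (mn)^{\varepsilon} \;\longrightarrow\; \infty,
\]
so $\mathbb{P}(|L_0(m,n)| < k_-) \to 0$. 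Combining the two halves gives the claimed convergence in probability.

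The main obstacle is justifying the lower inequality $P_{k_-}(m,p) \gtrsim m\theta_{k_-}(p)$ uniformly across all growth regimes of $m$ versus $n$, since a plain Markov bound only yields the reverse inequality. I would address this either via a Bonferroni lower bound of the form $P_{k_-}(m,p) \geq m\theta_{k_-}(p) - \binom{m}{2}\max\mathrm{Cov}(W_{(1,j_1)}(k_-),W_{(1,j_2)}(k_-))$, controlling the covariance by the observation that two root vertices separated by more than $2Ck_-$ rows give independent cones, or via a direct Chen--Stein coupling restricted to root vertices on a sublattice spaced $2Ck_-$ apart in rows (thinning the count by a constant factor but killing almost all dependencies). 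Once this uniform comparison is in hand, the remaining bookkeeping---handling $k_-\nmid n$, checking boundary effects near rows $1$ and $m$, and verifying the regime constraint $(m,n)\in\mathcal{A}_{c_1,c_2,\delta_1,\delta_2}$ under which Theorem~\ref{th:ptoinfty} is invoked---is routine.
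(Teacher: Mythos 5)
Your upper bound is exactly the paper's: a union bound over the $mn$ possible starting nodes combined with the upper estimate $\theta_{k}(p)\leq\sigma_2 k e^{-k\phi(p)}$ of Theorem~\ref{T:subexp}, and it is complete as written. Your lower bound is organized differently. The paper does not pass through the across probability $P_{k}(m,p)$ at all: it selects $I=\lceil mn/(\log m\log n)^2\rceil$ root vertices pairwise separated (and separated from the boundary) by at least $\tfrac12(\log m\log n)^2$, observes that the corresponding cones of depth $k^-_{m,n}(\epsilon)=O(\log(mn))$ are disjoint so the $I$ events $\{x^i\leftrightarrow B(x^i_2+k^-_{m,n}(\epsilon)-1)\}$ are i.i.d.\ with probability $\theta_{k^-_{m,n}(\epsilon)}(p)\geq\sigma_1 (k^-_{m,n}(\epsilon))^{-1}e^{-k^-_{m,n}(\epsilon)\phi(p)}$, and concludes $(1-\theta_{k^-_{m,n}(\epsilon)}(p))^I\to0$ since polynomially many independent trials each succeed with probability $(mn)^{-(1-\epsilon)}$ up to polylog factors. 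Your strip decomposition plus the comparison $P_{k_-}(m,p)\gtrsim m\,\theta_{k_-}(p)$ would also work, and your second proposed patch --- thinning the roots to a sublattice spaced $2Ck_-$ apart so that their cones are disjoint --- is in substance exactly the paper's construction (note the thinning factor is $O(\log(mn))$, not constant, but this only costs polylog factors and is harmless).

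One concrete caution about your primary route: you cannot invoke Theorem~\ref{th:ptoinfty} here as a black box. That theorem requires $(m,n)\in\mathcal{A}_{c_1,c_2,\delta_1,\delta_2}$, and in your application the role of its column count is played by $k_-=O(\log(mn))$, so you would need $c_1 k_-^{1+\delta_1}\leq m\leq c_2\exp\{k_-(\phi(p)-\delta_2)\}$. The upper constraint fails when $m$ grows much faster than $n$ (e.g.\ $m=e^{n}$), and the lower one can fail when $m$ grows very slowly relative to $n$; Theorem~\ref{th:mainconv} is asserted for arbitrary joint growth of $m$ and $n$, so this verification is not routine --- it is genuinely false in some regimes. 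The fix is precisely your sublattice/Bonferroni alternative (or the paper's separated-roots argument), which needs no regime restriction because it uses only the deterministic disjointness of the cones and the two-sided bounds of Theorem~\ref{T:subexp}. With that route taken as the actual proof rather than the fallback, your argument is correct and essentially coincides with the paper's.
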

From Theorem \ref{th:mainconv}, it is apparent that asymptotically $m$ and $n$ do not have a significant impact on the length of the longest significant run $\left|L_0(m,n)\right|$. 
We showed that the critical probability $p_c>\frac{1}{2C+1}$ and $\left|L_0(m,n)\right|$ will have significantly different asymptotic behaviors between the case $p<p_c$ and $p>p_c$. 
Therefore, as $C$ and $p$ increases, $\left|L_0(m,n)\right|$ will increase dramatically while the increment of $m$ and $n$ do not have a significant impact on the length $\left|L_0(m,n)\right|$. 
Figure \ref{Fi:ComparisonOfDependCP} and \ref{Fi:comparisonOfDependMN} support this argument. 

\subsection{Extension}\label{subsec:ext}
 This section emphasizes that our results above can be extended to the case of models  in higher dimensions.
\begin{itemize}
\item\emph{Inflating Bernoulli net in dimension $d'=d+1$.} This is the graph with nodes $([1,n]\times[1,m_1]\times\ldots\times[1,m_d])\cap\mathbb{Z}^{d'}$. 
Assume that each node with coordinate $(i,j_1,\ldots,j_d)$, $1\leq i\leq n$, $1\leq j_k\leq m_k$, $k=1,\ldots,d$ is associated with a Bernoulli(p) random variables, where $p\in[0,1]$ is given. 
Equip this graph with oriented edges $(i,j_1,\ldots,j_d)\rightarrow(i+1, j_1+s_1,\ldots, j_d+s_d)$, where $s_k=1,\ldots,C_k$, $k=1\ldots, d$ for prescribed $C_k\in\mathbb{Z}^{+}$. 
We say a chain to be significant if all the nodes along the chain are significant and denote $L_0(n,m_1,\ldots,m_d)$ to be the longest significant run in this model with length $\left|L_0(n,m_1,\ldots,m_d)\right|$.
\end{itemize}

By Theorem \ref{th:mainconv}, it is easy to see that we have the following asymptotic rate of the longest significant run.
\begin{theorem}\label{asymRateInHighDim}
Let $\phi^d(p)$, defined in (\ref{phiInHigherDimension}), be the higher dimensional version of $\phi(p)$. As $n\rightarrow\infty$, $m_1\rightarrow\infty, \ldots, m_d\rightarrow\infty$, we have that
\begin{equation}\label{th:mainconvHighDim}
\frac{\left|L_0(n,m_1,\ldots,m_d)\right|}{\log(nm_1\ldots m_d)}\rightarrow\frac{1}{\phi^d(p)} \text{\quad in probability},
\end{equation}
\end{theorem}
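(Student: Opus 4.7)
The plan is to transcribe the proof of Theorem \ref{th:mainconv} into the $(d+1)$-dimensional setting, using Theorem \ref{phiInHighDim} in place of Theorem \ref{T:subexp} and the higher-dimensional analog of Theorem \ref{th:ptoinfty}. As in the two-dimensional case, only the sub-critical regime $p<p_c^d$ is meaningful, since otherwise $\phi^d(p)=0$ and the right-hand side of \eqref{th:mainconvHighDim} is vacuously $+\infty$; I would state this restriction up front.

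For the upper bound, I would run a straightforward first-moment / union-bound argument. Let $N=nm_1\cdots m_d$ denote the total number of nodes in the lattice. By translation invariance of the law, each node is the starting point of a significant run of length at least $k$ with probability $\theta_k^d(p)$, and by the upper bound in Theorem \ref{phiInHighDim} this is at most $\sigma_2^d k^d e^{-k\phi^d(p)}$. Hence
\[
\mathbb{P}\bigl(|L_0(n,m_1,\ldots,m_d)|\geq k\bigr)\;\leq\;N\cdot \sigma_2^d k^d\,e^{-k\phi^d(p)}.
\]
Plugging in $k=\lceil (1+\varepsilon)\log N/\phi^d(p)\rceil$, the right-hand side decays like $k^d N^{-\varepsilon}\to 0$ for every fixed $\varepsilon>0$, giving $|L_0|/\log N\leq (1+\varepsilon)/\phi^d(p)$ with high probability.

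For the matching lower bound I would first lift Theorem \ref{th:ptoinfty} to dimension $d+1$: for $(k,m_1,\ldots,m_d)$ in the analog of $\mathcal{A}_{c_1,c_2,\delta_1,\delta_2}$, the probability $P_k(m_1,\ldots,m_d,p)$ that the slab of the first $k$ columns contains an across significant chain satisfies $1-\exp\{-m_1\cdots m_d\,\theta_k^d(p)\}+o(1)$. With this in hand, take $k=\lfloor(1-\varepsilon)\log N/\phi^d(p)\rfloor$; the lower bound in Theorem \ref{phiInHighDim} yields
\[
m_1\cdots m_d\,\theta_k^d(p)\;\geq\;\sigma_1^d\,m_1\cdots m_d\,k^{-d}\,e^{-k\phi^d(p)}\;\longrightarrow\;\infty,
\]
so an across of length $k$ exists in the first slab with probability tending to $1$, whence $|L_0|\geq k$ with high probability. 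Combining the two bounds and letting $\varepsilon\downarrow 0$ proves convergence in probability.

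The main obstacle is the Chen-Stein step behind the higher-dimensional Theorem \ref{th:ptoinfty}. For each node $v$ in the first column one sets $I_v=\mathbf{1}\{v\text{ starts an across}\}$, writes $W=\sum_v I_v$, and must bound the dependency sum $\sum_{v}\sum_{w\in B_v}\mathbb{P}(I_v=1,I_w=1)$, where $B_v$ is the set of first-column nodes whose admissible $k$-step chains can share a node with some chain from $v$. The set $B_v$ has size polynomial in $k$, and nearby pairs give genuine positive correlation, so one has to show these polynomial losses are absorbed by the $k^d$-prefactors in the two-sided bound of Theorem \ref{phiInHighDim} and remain negligible compared with the Poisson mean $m_1\cdots m_d\,\theta_k^d(p)$. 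This is exactly the 2D bookkeeping of Theorem \ref{th:ptoinfty} carried out with the cross-section $\{1\}\times[1,m_1]\times\cdots\times[1,m_d]$ in place of $\{1\}\times[1,m]$; no new probabilistic idea is required, which is why the paper is content to omit it.
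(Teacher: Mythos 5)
Your upper bound is exactly the paper's first-moment/union-bound argument (transcribed from Theorem \ref{th:mainconv} with Theorem \ref{phiInHighDim} supplying the tail bound) and is fine. The lower bound, however, has a genuine gap: you confine the search for a long run to the slab of the first $k$ columns, and the quantity you claim tends to infinity does not. With $N=nm_1\cdots m_d$ and $k=\lfloor(1-\varepsilon)\log N/\phi^d(p)\rfloor$, the bound of Theorem \ref{phiInHighDim} gives only
\[
m_1\cdots m_d\,\theta_k^d(p)\;\geq\;\sigma_1^d\,m_1\cdots m_d\,k^{-d}e^{-k\phi^d(p)}\;\asymp\;\frac{m_1\cdots m_d}{k^d\,N^{1-\varepsilon}}\;=\;\frac{N^{\varepsilon}}{n\,k^{d}},
\]
which tends to $0$, not $\infty$, whenever $n\gtrsim N^{\varepsilon}$. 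This is not an exotic regime: in the balanced case $m_1=\cdots=m_d=n$ one has $N=n^{d+1}$ and $N^{\varepsilon}/n=n^{(d+1)\varepsilon-1}\to 0$ for every $\varepsilon<1/(d+1)$, and convergence in probability requires the lower bound for all small $\varepsilon$. Relatedly, the higher-dimensional Theorem \ref{th:ptoinfty} you invoke is only proved under the constraint $(k,m)\in\mathcal{A}_{c_1,c_2,\delta_1,\delta_2}$, whose upper inequality $m_1\cdots m_d\leq c_2\exp\{k(\phi^d(p)-\delta_2)\}=o(N^{1-\varepsilon'})$ is itself a restriction on the relative growth of $n$ and the $m_i$ that the theorem being proved does not impose; so even where your computation would go through, the Chen--Stein input is not available without extra hypotheses.

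The paper's argument (the proof of Theorem \ref{th:mainconv}, which Theorem \ref{asymRateInHighDim} transcribes) avoids both problems by not restricting to one slab: it selects $I\asymp N/(\log N)^{O(1)}$ starting nodes scattered over the \emph{entire} box, pairwise separated (and separated from the boundary) by enough that the $I$ events $\{x^i\leftrightarrow B(x^i_1+k-1)\}$ are exactly independent with common probability $\theta_k^d(p)$. Then the failure probability is $(1-\theta_k^d(p))^{I}\leq\exp\{-I\,\theta_k^d(p)\}$ with $I\,\theta_k^d(p)\gtrsim N^{\varepsilon}/\mathrm{polylog}(N)\to\infty$, uniformly in how $n$ compares to $m_1,\ldots,m_d$, and no Poisson approximation is needed. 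To repair your proof you should replace the single-slab across event by this scattered-seeds construction (or at least tile all $\lfloor n/k\rfloor$ disjoint slabs and use independence across slabs together with well-separated seeds within each cross-section).
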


\section{Applications}
\label{chp:Applications}
In this section, we are going to see some applications of the above theory in hypothesis testing problems.
In Section \ref{subsec:algorithm}, we first introduce the dynamic programming (DP) algorithm to find the longest significant run in an image.
In Section \ref{subsec:egBernoulli}, we use the example of detecting an anomalous run in a Bernoulli net to illustrate our theory on constructing asymptotically powerful test.
In Section \ref{subsec:egMultiscale}, we consider the multi-scale detection of filamentary structure. 
We first review the results in the literature and then we apply our theory on longest run to solve this problem.
The last application of target tracking problems is shown in Section \ref{subsec:egTarget}. 
We propose to apply our longest run theory to detect potential target.
We show that our method provides a reliable threshold such that the false alarm probability vanishes very quickly as we get more and more sample points.

Through this section, let $L_0(n,m)$ and $\left|L_0(n,m)\right|$ denote the longest significant run and  the length of the longest significant run in $([1,n]\times[1,m])\cap\mathbb{Z}^2$, respectively. 

\subsection{Dynamic programming algorithm finding $\left|L_0(n,m)\right|$}\label{subsec:algorithm}
For a node $(i,j)\in ([1,n]\times[1,m])\cap\mathbb{Z}^2$, we use $z(i,j)=1$ $(=0)$ to denote the significance (insignificance) of node $(i,j)$. 
When $X_{i,j}\sim\text{Bernoulli}(p_0)$, we have $z(i,j) = x(i,j)$.
Given a realization $\{X(i,j): 1\leq i\leq m, 1\leq j\leq n\}$, let $Y_1$ be an array $\{Y_1(i,j): 1\leq i\leq m, 1\leq j\leq n\}$, such that
\begin{eqnarray*}
Y_1(i,1)&=&z(i,1), \text{for } i=1,\ldots, m;\\
Y_1(i,j)&=&z(i,j)[1+\max_{i'\in\Omega(i)}Y_1(i', j-1)], \\
          &&\text{for } i=1,\ldots, m, j=2,\ldots,n,
\end{eqnarray*}
where $\Omega(i)=\{i': \left|i'-i\right|\leq C, 1\leq i'\leq m\}$ denotes the set containing neighboring indices of $i$. Finally, the value  $\left|L_0(n)\right|$ can be computed as follows:
\[
\max_{(i,j)\in\mathcal{S}}Y_1(i,j).
\]
It is not hard to see that this algorithm takes $Cmn$ time for $C>0$.

\subsection{Detection of an anomalous run in a Bernoulli net}\label{subsec:egBernoulli}

In this subsection, we consider the problem of detecting an anomalous run in Bernoulli net. 
For simplicity, we only state the low dimension case i.e., $([1,n]\times[1,m])\cap\mathbb{Z}^2$. Let $\mathcal{L}(n,m)$ be a class of chains in $([1,n]\times[1,m])\cap\mathbb{Z}^2$, where a chain is defined as a subset of nodes which is connected as in (\ref{def:chain}). 
Under the null hypothesis, each node $(i,j)$ is i.i.d. associated with a random variable $X_{i,j}$, which has Bernoulli distribution with parameter $p_0$,  i.e.,
\[
\mathbb{H}_0(n,m): X_{i,j}\sim\text{Bernoulli}(p_0), i.i.d., \forall (i,j).
\]
Under the alternative hypothesis, where there exists an unknown  chain $L\in\mathcal{L}(n,m)$, and the variables with index in $L$ have a Bernoulli distribution with parameter $p_1>p_0$, i.e.,
\begin{eqnarray*}
\mathbb{H}_{1}(n,m): &&X_{i,j}\sim\text{Bernoulli}(p_1), \forall (i,j)\in L; \\
&&X_{i,j}\sim\text{Bernoulli}(p_0), \forall (i,j)\not\in L,\\
&& \mbox{for some unknown } L. 
\end{eqnarray*}
Denote the length of the anomalous chain $L$ by $\left|L\right|$. 
For this detection problem, we may consider the test based on the longest significant run in the Bernoulli net $([1,n]\times[1,m])\cap\mathbb{Z}^2$. 
By Erd$\ddot{o}$s-R$\acute{e}$nyi law (\cite{NLawofLN}), the longest significant run in $L$ almost surely has length $\log_{1/p_1}\left|L\right|$ as $|L|\rightarrow\infty$. Thus if 
\begin{equation}\label{cond:sep}
\log_{1/p_1}\left|L\right|>\log(nm)/\phi(p),
\end{equation}
then the two hypotheses can be separated significantly. 
Let $T$ be such a test, if 
\[
\left|L_0(n,m)\right|>\log(nm)/\phi(p),
\]
then we reject $\mathbb{H}_0(n,m)$; otherwise accept $\mathbb{H}_0(n,m)$. 

For a test $T$ , if $T=1$ , we reject $\mathbb{H}_0$  and accept $\mathbb{H}_0$ otherwise; then if 
\begin{equation}\label{cond:assym0}
\mathbb{P}(T=0\big|\mathbb{H}_1)+\mathbb{P}(T=1\big|\mathbb{H}_0)\rightarrow0,
\end{equation} $T$ is called asymptotically powerful test in \cite{IEEE} and this criterion (\ref{cond:assym0}) is widely used in cluster detection literatures (See for example \cite{trailInMaze, NetworksPoly, ClustDetecPerco, AnomaClusterInNetwork}). 
\begin{theorem}
Under the condition (\ref{cond:sep}), the test $T$, which is based on the length of the longest significant run, is an asymptotically powerful test.
\end{theorem}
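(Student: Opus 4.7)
The plan is to establish the two summands in (\ref{cond:assym0}) separately: the Type I error $\mathbb{P}(T=1\mid\mathbb{H}_0)$ and the Type II error $\mathbb{P}(T=0\mid\mathbb{H}_1)$, each vanishing as $n,m\to\infty$. Throughout, I read the $\phi(p)$ appearing in condition (\ref{cond:sep}) and in the test threshold as $\phi(p_0)$, since the threshold is calibrated to the null distribution; implicitly one assumes $p_0<p_c$, otherwise $\phi(p_0)=0$ and the threshold is degenerate.

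For the Type I error, under $\mathbb{H}_0(n,m)$ every node is i.i.d.\ Bernoulli$(p_0)$, so Theorem \ref{th:mainconv} applies directly and yields
\[
\frac{|L_0(n,m)|}{\log(nm)}\xrightarrow{\mathbb{P}}\frac{1}{\phi(p_0)}.
\]
Consequently, for any fixed $\eta>0$,
\[
\mathbb{P}\!\left(|L_0(n,m)|>(1+\eta)\frac{\log(nm)}{\phi(p_0)}\,\Big|\,\mathbb{H}_0\right)\longrightarrow 0.
\]
Thus, up to an arbitrarily small inflation of the test threshold, the Type I error vanishes.

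For the Type II error, under $\mathbb{H}_1(n,m)$ the nodes indexed by the embedded chain $L$ are i.i.d.\ Bernoulli$(p_1)$ with $p_1>p_0$. Restricted to $L$, the configuration is a one-dimensional Bernoulli$(p_1)$ sequence of length $|L|$, and the classical Erd\H{o}s--R\'enyi law for the longest head run (equivalently, the $m=1$ case of Theorem \ref{th:er-re}, using $\rho(1,p_1)=p_1$) gives that the longest significant sub-run inside $L$ has length $\log_{1/p_1}|L|\,(1+o_p(1))$. Since $|L_0(n,m)|$ dominates the longest significant run inside $L$,
\[
|L_0(n,m)|\geq\log_{1/p_1}|L|\,(1+o_p(1)).
\]
Condition (\ref{cond:sep}) then forces $|L_0(n,m)|>\log(nm)/\phi(p_0)$ with probability tending to $1$, so $\mathbb{P}(T=0\mid\mathbb{H}_1)\to 0$.

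The one genuinely delicate point is that Theorem \ref{th:mainconv} provides only convergence in probability, whereas the test is stated with the exact threshold $\log(nm)/\phi(p_0)$. Because the separation in (\ref{cond:sep}) is strict, there is slack to absorb an arbitrarily small multiplicative margin $\eta$ on both sides, so the natural remedy is to replace the threshold by $(1+\eta)\log(nm)/\phi(p_0)$ for small $\eta>0$; both error bounds then close. A sharper treatment, if desired, would invoke a Chen--Stein type second-moment argument analogous to Theorem \ref{th:asymdist} to control $|L_0(n,m)|$ at the exact threshold under $\mathbb{H}_0$, but the calibration above already establishes the asymptotic power criterion (\ref{cond:assym0}).
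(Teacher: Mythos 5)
Your proposal is correct and follows essentially the same route as the paper's own (one-line) proof: decompose the error into the Type I term, controlled by Theorem \ref{th:mainconv} under $\mathbb{H}_0$, and the Type II term, controlled by the Erd\H{o}s--R\'enyi law applied to the embedded chain $L$ together with condition (\ref{cond:sep}). Your further observation that convergence in probability only controls exceedances of $(1+\eta)\log(nm)/\phi(p_0)$ rather than of the exact threshold $\log(nm)/\phi(p_0)$ identifies a real imprecision that the paper's terser argument glosses over, and your fix---absorbing a small multiplicative margin into the strict separation assumed in (\ref{cond:sep})---is the appropriate one.
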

\begin{proof}
If (\ref{cond:sep}) holds, then by Theorem \ref{th:mainconv}, it is easy to see that
\begin{equation}\label{assym0}
\begin{split}
&\mathbb{P}(\left|L_0(n,m)\right|>\log(nm)/\phi(p)\big|\mathbb{H}_0(n,m))\\
+&\mathbb{P}(\left|L_0(n,m)\right|
\leq\log(nm)/\phi(p)\big|\mathbb{H}_{1,L}(n,m))\\
\rightarrow &0,
\end{split}
\end{equation}
as $(n,m)\rightarrow(\infty,\infty)$.
\end{proof}

In general, this detection problem can be extended to an exponential model, for instance, the following detection problem in the model with normal distribution,
\begin{eqnarray*}
\mathbb{H}^N_0(n,m): &&X_{i,j}\sim N(0,1), i.i.d., \forall(i,j);\\
&&versus\\
\mathbb{H}^N_{1}(n,m): &&X_{i,j}\sim N(\mu,1), \forall (i,j)\in L; \\
&&X_{i,j}\sim N(0,1), \forall (i,j)\not\in L,\\
&&\mbox{for some unknown } L \mbox{ and } \mu >0.
\end{eqnarray*}

After thresholding the values at each node, it is equivalent to the detection problem in the Bernoulli net. 
We are going to discuss this problem in our future work. 
The test based on the length of the longest significant chain has also been considered in \cite{ClustDetecPerco, perim1, perim4}. 

\subsection{Multi-scale detection of filamentary structure}\label{subsec:egMultiscale}
In this section, we will revisit the problem of multi-scale detection of filamentary structure. 
This has been studied in \cite{Aos06Filament}, which we review in Section \ref{subsec:review}.
We then revisit the problem and apply our proposed theory to it in Section \ref{subsec:revisit}.
\subsubsection{Background}
\label{subsec:review}
To be self-contained, we will recall the problem of the length of the longest significant run proposed in \cite{Aos06Filament}, where the authors present a detection method for some filamentary structure in a background of uniform random points. 
Suppose we have $N$ data points $X_i\in[0,1]^2$, which at first glance seem to be uniformly distributed in the unit square. 
Here, for $1<\alpha\leq2$, we define that $\text{H$\ddot{o}$lder}(\alpha,\beta)$ is the class of functions $g:[0,1]\rightarrow[0,1]$ with continuous derivative $g\rq{}$ that obeys
\[
\left|g\rq{}(x)-g\rq{}(y)\right|\leq\alpha\beta\left|x-y\right|^{\alpha-1}.
\]
Consider the problem of testing
\begin{eqnarray*}
\mathbb{H}_0: &&X_i\stackrel{\mbox{i.i.d.}}{\sim}\text{Uniform}(0,1)^2,\\
versus\\
\mathbb{H}_1(\alpha,\beta):&&X_i\stackrel{\mbox{i.i.d.}}{\sim}(1-\epsilon_N)\text{Uniform}(0,1)^2\\
&&+\epsilon_N\text{Uniform}(\text{graph}(f)),\\
&&\mbox{with unknown } f\in\text{H$\ddot{o}$lder}(\alpha,\beta),
\end{eqnarray*}
where $\text{graph}(f)$ is the graph of the function $f$ within the area $[0,1]^2$. In other words, for the problem of testing, we believe that a relatively small fraction $\epsilon_n$ of points lie on a smooth curve in the plane.

In \cite{JASA06LongSigRun}, the detection model mentioned in \cite{Aos06Filament} is partially considered and the authors present the convergence rate and the asymptotic distribution of the longest significant run on a Bernoulli Net. 
However, the row number of the model in \cite{JASA06LongSigRun} is fixed, while in \cite{Aos06Filament} the vertical size of the model is increasing very fast when the number of random points tends to infinity. 
Besides, the nodes in \cite{JASA06LongSigRun} are assumed to be independent while in \cite{Aos06Filament} the nodes are only associated. See \cite{AORV}.

We will review the model in \cite{Aos06Filament} first. 
Suppose we have $N$ random points uniformly distributed in the square $[0,1]\times[0,1]$. 
In particular, we use $J=\lceil \log_2(N)\rceil$ to denote its dyadic logarithm. 
The variable $j$ will index dyadic scales $2^{-j}$ and will range over $0\le j\le J$. 
We fix a positive integer $S>1$ to control the maximum of $|\text{slope}|$ we will be able to detect.

Let $R(j,k,\ell_1,\ell_2)$ be a parallelogram with vertical sides that is $\omega=2^{-j}$ wide by $t=2^{-(J-j)+1}$ high, where $j$ runs through our set of scale indices $\{0,\ldots,J\}$. 
The regions in question have a midline that bisects them vertically and will be tilted at a variety of angles. 
And notice that these regions are highly anisotropic.

The parameters $k$ and $\ell_{i}, i=1,2$, control the horizontal location of the regions and the vertical location and the slope of the midline. 
There is an underlying assumption that we are only interested in regions whose major axis has a slope bounded in absolute value by $S$. 

To get a vivid impression of this model, see Figure~\ref{Fi:para} and Figure~\ref{Fi:square} below. Let $\delta_1=\frac{t}{4}$ and $\delta_2=\frac{t}{4\omega}$ (these depend implicitly on $j$ and $N$). 
The parallelogram $R(j,k,\ell_1,\ell_2)$ will be centered at $c=((k+\frac{1}{2})\omega,\ell_1\delta_1)$ and its middle line will have slope $s=\ell_2\delta_2$. Here $0\le k<\omega^{-1}$, $\ell_1$ runs through the set $0,\ldots,\delta^{-1}_1-1$ and $\ell_2$ runs through the set
$-S\delta^{-1}_2,\ldots,0,\ldots, S\delta^{-1}_2$. 
We gather all such regions at level (scale) $j$ in
$\mathcal{R}(j)=\{R(j,k,\ell_1,\ell_2):k,\ell_1,\ell_2\}$ and therefore we have
$2^j\times 2^{J-j+1}\times S\cdot2^{J-2j+2}+1$ or $O(N^2)$ parallelograms in total. 
To organize the regions, we define
a directed graph $\mathcal{G}(j)=(\mathcal{V}(j),\mathcal{E}(j))$, with vertices $\mathcal{V}(j)$ and edges $\mathcal{E}(j)$.

\begin{figure}[htbp]
\centering
\includegraphics[height=2.5in]{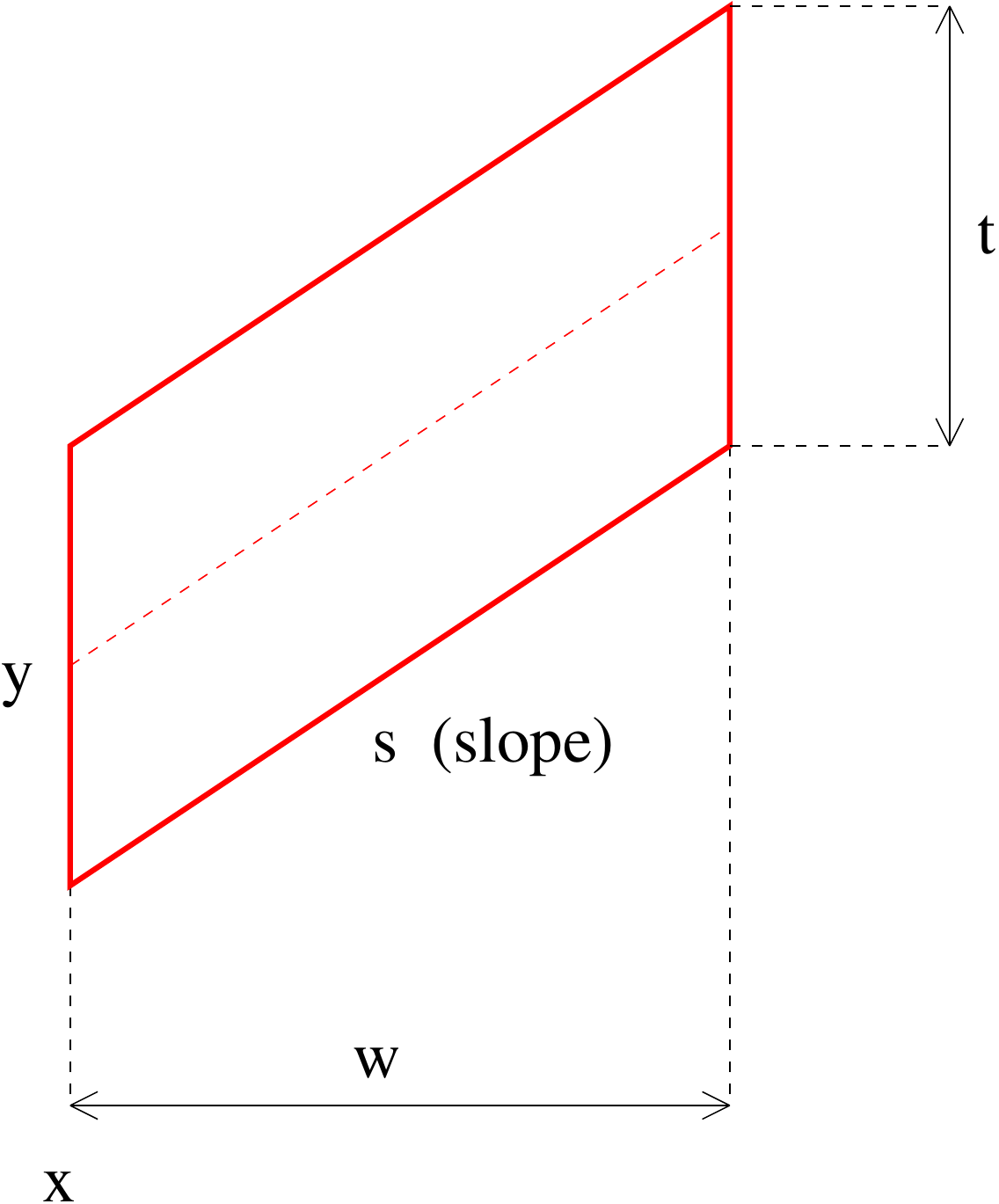}
\caption{An Anisotropic `Strip' $R$}
\label{Fi:para}
\end{figure}

\begin{figure}[htbp]
\centering
\includegraphics[height=2.9in]{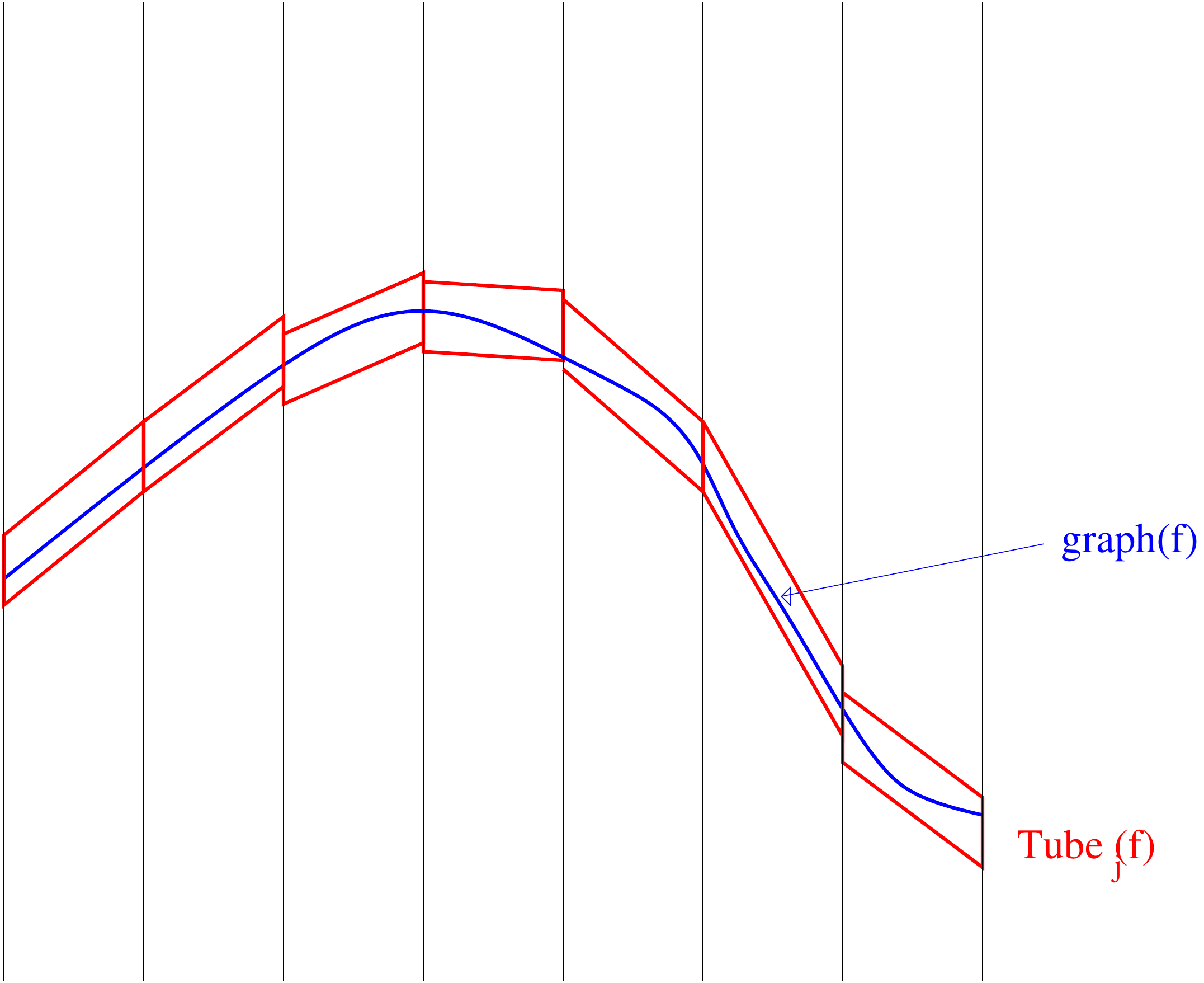}
\caption{$graph(f)$ (in blue) covered by $Tube_j(f)$ (in red).}
\label{Fi:square}
\end{figure}

The vertices are simply the regions $\mathcal{R}(j)$, i.e., $\mathcal{V}(j)\equiv\mathcal{R}(j)$. 
The edges connect regions by good continuation, namely, to regions that are horizontally adjacent, and that have altitudes and slopes that are nearly the same-less than $t$ and $\frac{t}{\omega}$ apart, respectively. 
Formally, we have the directed edges in $\mathcal{E}(j)$ as
\begin{equation}\label{connectionConstraint}
(k,\ell_1,\ell_2)\rightarrow(k+1,\ell_1+\ell_2+u,\ell_2+v),
\end{equation}
where $|u|\le 4, |v|\le 4$ and we call (\ref{connectionConstraint}) the connectivity of edges. 
The mapping between these discrete parameters is intended to insure that the regions pack together horizontally and that they are fairly closely spaced in both vertical position and slope.

For every region $R\in\mathcal{R}(j)$, we count the number of the points that fall into $R$, denoted by $N(R,j)$.
We define a significance indicator, which is nonzero when the counts $N(R,j)$ exceeds a prescribed threshold $N^{\ast}$, i.e., 
\begin{equation}\label{eq:mem}
s(R)=\textbf{1}_{\{N(R,j)>N^{\ast}\}}.
\end{equation}
We say that $N^\ast$ is the counting threshold in the following. The significance indicator may be viewed as a label on the regions $R$, producing a sequence of a labeled graphs
\[
\Sigma(j)=(\mathcal{V}(j),\mathcal{E}(j),\sigma(j)),
\]
where $\sigma(j)=(s(R))$ gives the labels on $R\in\mathcal{R}(j)$. We call this the $j$-th significance graph.

In each significance graph, we employ a depth-first search algorithm to explore all significance paths
\[
\pi=(R_1,R_2,\ldots,R_m),
\]
that is, sequence of vertices that are:
\begin{itemize}
\item[(a)] all significant, $s(R_k)=1$;
\item[(b)] all connected, $(R_k,R_{k+1})\in\mathcal{E}(j)$.
\end{itemize}
We record the maximum path length in each significant graph as follows:
\[
\left|L^{\max}_{N,j}\right|=\max\{\text{length}(\pi): \pi \text{\quad is a significant path in\quad}\Sigma(j)\},
\]
\[
\left|L^{\max}_N\right|=\max_{j} \left|L^{\max}_{N,j}\right|.
\]
The decision of the hypothesis testing problem is that we compare $\left|L^{\max}_N\right|$ with a length threshold: 
If $\left|L^{\max}_N\right|\leq \left|L^{\ast}_N\right|$, accept $\mathbb{H}_0$; if $\left|L^{\max}_N\right|>\left|L^{\ast}_N\right|$, then reject $\mathbb{H}_0$.

We call $\left|L^{\ast}_N\right|$ the decision threshold in the following. 
Under the assumption that $N$ points are randomly distributed in the square $[0,1]\times[0,1]$, the counting threshold determines the probability of $\{s(R)=1\}$. Because the area of each region is $\frac{2}{N}$, we have the following,
\[
\mathbb{P}(s(R)=1)=\mathbb{P}(\text{Bin}(N,\frac{2}{N})>N^{\ast}),
\]
where $\text{Bin}(N,\frac{2}{N})$ denotes the random variable with Binomial distribution of parameters $N$ and
$\frac{2}{N}$. Because $\text{Poisson}(2)$ is an approximation of $\text{Bin}(N,\frac{2}{N})$ when $N$ is sufficiently
large, we use $\text{Poisson}(2)$ instead in the following. 
The main result of this multi-scale detection method in \cite{Aos06Filament} is the following:
\begin{theorem}\label{mainResultOfAos06}
There is a single choice of threshold $N^{\ast}$ and $\left|L^{\ast}_N\right|$ so that for every $\alpha\in(1,2]$ and $\beta>0$, there is $T_{\ast}(\alpha,\beta,S)$ such that for each $\epsilon_{N}>T_{\ast}N^{-\alpha/(1+\alpha)}$
\[
\mathbb{P}(\text{test rejects } \mathbb{H}_0\big|\mathbb{H}_1(\alpha,\beta,S))\rightarrow 1, \quad\text{as}\quad n\rightarrow\infty,
\]
and at the same time
\[
\mathbb{P}(\text{test rejects } \mathbb{H}_0\big|\mathbb{H}_0)\rightarrow 0,\quad\text{as}\quad n\rightarrow\infty.
\]
\end{theorem}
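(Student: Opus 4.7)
The plan is to apply Theorem \ref{th:mainconv} to each of the $O(\log N)$ scale graphs $\Sigma(j)$ under the null, and then to exhibit an explicitly long significant run along the curve under the alternative. I would first analyze the null. Under $\mathbb{H}_0$, for each fixed $j$ the counts $N(R,j)$ are (approximately) Poisson$(2)$-distributed marginals, so $\mathbb{P}(s(R)=1)=p:=\mathbb{P}(\mathrm{Poisson}(2)>N^{\ast})$. The good-continuation rule (\ref{connectionConstraint}) makes each vertex connectible to at most $9\times 9=81$ successors. By Theorem \ref{T:p-c} (and the higher-dimensional extension leading to Theorem \ref{criticalProbInHighDim}), the critical probability $p_c$ of the corresponding pseudo-tree is at least $1/81$, so I fix $N^{\ast}$ large enough that $p=p_0/81$ with $p_0\in(0,1)$, putting us strictly in the subcritical phase $p<p_c$. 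Theorem \ref{th:mainconv} then yields, for each $j$, $|L^{\max}_{N,j}|/\log(m_j n_j)\to 1/\phi(p)$ in probability. Since the total number of parallelograms at each scale is $O(N^2)$, we get $|L^{\max}_{N,j}|\le (2/\phi(p))\log N\,(1+o(1))$ with high probability, and a union bound over the $O(\log N)$ scales controls $|L^{\max}_N|=\max_j|L^{\max}_{N,j}|$. Taking $|L^{\ast}_N|=(2/\phi(p))\log N + \sqrt{\log N}$ then makes $\mathbb{P}(|L^{\max}_N|>|L^{\ast}_N|\mid\mathbb{H}_0)\to 0$.

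Next I would turn to the alternative. Fix a scale $j^{\ast}$ matched to the curve's smoothness: the H\"older$(\alpha,\beta)$ condition implies that over a horizontal window of width $\omega=2^{-j}$, the curve $f$ deviates from its tangent line by at most $\beta\,2^{-j\alpha}$, which must fit within the parallelogram height $t=2^{-(J-j)+1}$. Balancing $\beta\,2^{-j\alpha}\asymp 2^{-(J-j)}$ gives $2^{-j^{\ast}}\asymp N^{-1/(1+\alpha)}$. At this scale, the curve is covered by a connected chain of $\asymp 2^{j^{\ast}}\asymp N^{1/(1+\alpha)}$ parallelograms whose midlines' altitudes and slopes vary slowly enough to satisfy (\ref{connectionConstraint}). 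Each such parallelogram contains an expected number $2+\epsilon_N N\cdot 2^{-j^{\ast}}(1+o(1))$ of observations. Under the hypothesis $\epsilon_N>T_{\ast}N^{-\alpha/(1+\alpha)}$, the ``signal'' contribution is at least $T_{\ast}$, so choosing $T_{\ast}=T_{\ast}(\alpha,\beta,S)$ large enough—depending on $N^{\ast}$ via a Chernoff tail bound on the Poisson/Binomial count—makes each on-curve parallelogram significant with probability $1-o(N^{-1/(1+\alpha)})$. A further union bound then produces a fully significant chain of length $\Omega(N^{1/(1+\alpha)})$, which is polynomially larger than $|L^{\ast}_N|=O(\log N)$, hence $\mathbb{P}(\mathrm{reject}\mid\mathbb{H}_1)\to 1$.

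The main obstacle will be the dependence among the significance indicators $\{s(R)\}$: distinct parallelograms overlap and therefore share points, so the collection is not i.i.d.\ as Theorem \ref{th:mainconv} literally requires. I would dispatch this in two steps. First, the family $\{N(R,j)\}$ is a monotone functional of the underlying Poisson process, hence positively associated; since $\{|L^{\max}_{N,j}|\le k\}$ is a decreasing event in the counts, FKG gives a bound by a product measure from below, while a coupling that partitions $\mathcal{R}(j)$ into a constant number of sublattices of geometrically disjoint (hence independent) parallelograms supplies a comparable upper bound. Either comparison lets Theorem \ref{th:mainconv} be invoked in the dependent setting with the same $\phi(p)$ up to a factor absorbed into $(1+o(1))$. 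A secondary subtlety is that the matched scale $j^{\ast}$ depends on the unknown $\alpha$, but since the test statistic is $\max_j|L^{\max}_{N,j}|$, the appropriate scale is automatically selected; one only needs to check $j^{\ast}\in\{0,\ldots,J\}$, which holds since $1/(1+\alpha)\in[1/3,1/2]$ for $\alpha\in(1,2]$.
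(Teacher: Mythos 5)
Your analysis of the null hypothesis is essentially the argument the paper uses (it takes $p=\mathbb{P}(\mathrm{Poisson}(2)>N^{\ast})<1/81\le p_c$, invokes the longest-run rate over the $O(N^2)$ nodes per scale to get a threshold of order $\frac{2\log N}{\phi(p)}$, and handles the overlap-induced dependence by association/FKG exactly as you suggest). The scale matching $2^{-j^{\ast}}\asymp (N\beta)^{-1/(1+\alpha)}$ and the computation that each on-curve parallelogram receives an expected signal contribution of order $T_{\ast}$ are also correct.

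The gap is in your conclusion under $\mathbb{H}_1$. With a \emph{constant} $T_{\ast}$, the signal mean in each on-curve parallelogram is a constant $\lambda^{\ast}$, so the probability that a given on-curve parallelogram is significant is some fixed $p^{\ast}=\mathbb{P}(\mathrm{Poisson}(\lambda^{\ast})>N^{\ast})<1$, bounded away from $1$. It cannot be made $1-o(N^{-1/(1+\alpha)})$ unless $\lambda^{\ast}$ (hence $T_{\ast}$) grows like $\log N$, which would only prove detectability at $\epsilon_N\gtrsim N^{-\alpha/(1+\alpha)}\log N$, a logarithmically weaker statement than the theorem. Consequently your union bound fails: the probability that \emph{all} $\asymp N^{1/(1+\alpha)}$ tube parallelograms are simultaneously significant is $(p^{\ast})^{N^{1/(1+\alpha)}}\to 0$, so there is no fully significant chain of polynomial length, and the claimed ``polynomially larger than $|L^{\ast}_N|$'' separation does not exist. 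The correct argument (as in the paper's revisit and in the cited source) applies the Erd\H{o}s--R\'enyi law \emph{within} the containing tube of length $\asymp 2^{j^{\ast}}$: the longest significant run along the tube has length about $c_J\log_{1/p^{\ast}}2\asymp\frac{1}{1+\alpha}\log_{1/p^{\ast}}N$, which is only logarithmic. The entire content of the theorem is that this logarithmic quantity can be made to exceed the logarithmic null threshold $\frac{2\log N}{\phi(p_0)}$ by choosing $p^{\ast}$ close enough to $1$, i.e.\ by choosing $\lambda^{\ast}$ and hence $T_{\ast}(\alpha,\beta,S)=2\lambda^{\ast}\beta^{1/(1+\alpha)}\sqrt{1+S^2}$ large enough as a constant. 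That comparison of two competing logarithms, and the resulting quantitative constraint defining $T_{\ast}$, is the step your proposal is missing. (Note also that the paper itself does not prove this particular statement; it quotes it from the reference and proves its own sharper variant with the thresholds $N^{\ast}$, $|L^{\ast}_N|$, $p^{\ast}$, $\lambda^{\ast}$ specified in Section \ref{subsec:revisit}.)
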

\begin{remark}
We give the specifications of the foregoing thresholds. 
\begin{enumerate}
\item \textbf{$N^{\ast}$}: In \cite{Aos06Filament}, the authors define $N^{\ast}$ such that
\begin{equation}\label{def:N}
p=\mathbb{P}(\text{Poisson}(2)>N^{\ast})<\frac{p_0}{81},
\end{equation}
where $p_0\in(0,1)$ is some chosen number. 
\item \textbf{$ \left|L^{\ast}_N\right|$}: With the help of Erd$\ddot{o}$s-R$\acute{e}nyi$ law, the authors define the decision threshold
\begin{equation}\label{decisionThreshold}
\left|L^{\ast}_N\right|\equiv3\log_{1/p_0}N.
\end{equation}
\item \textbf{$T_{\ast}(\alpha, \beta, S)$}: The specification of $T_{\ast}(\alpha, \beta, S)$ in Theorem \ref{mainResultOfAos06} is a little bit complicated. First define 
\begin{equation}\label{def:pAstO}
p^{\ast}=p_0^{\frac{1}{18}},
\end{equation}
Let $\lambda^{\ast}$ be a constant that satisfies
\[
\mathbb{P}(\text{Poisson}(\lambda^{\ast})<N^{\ast})\leq \frac{1-p\ast}{2}.
\]
Then $T_{\ast}(\alpha, \beta, S)=2\lambda^{\ast}\beta^{\frac{1}{1+\alpha}}\sqrt{1+S^2}$. See \cite{Aos06Filament} for more details.

\end{enumerate}
\end{remark}
\subsubsection{A revisit using the theory of longest chain}
\label{subsec:revisit}
In this part, we will apply our theory to the model in \cite{Aos06Filament} for the detection problem.
Consider the problem of testing
\begin{eqnarray*}
\mathbb{H}_0: &&X_i\stackrel{\mbox{i.i.d.}}{\sim}\text{Uniform}(0,1)^2,\\
versus\\
\mathbb{H}_1(\alpha,\beta):&&X_i\stackrel{\mbox{i.i.d.}}{\sim}(1-\epsilon_N)\text{Uniform}(0,1)^2\\
&&+\epsilon_N\text{Uniform}(\text{graph}(f)),\\
&&\mbox{with unknown } f\in\text{H$\ddot{o}$lder}(\alpha,\beta),
\end{eqnarray*}
where  $N$ is the total number of points in $[0,1]\times[0,1]$ and $\epsilon_N>T_{\ast}N^{-\frac{\alpha}{1+\alpha}}$ is the portion of the points lying on the graph of the function. 

We can see that when the number of random points $N$ in $[0,1]\times[0,1]$ goes to infinity, the background of uniform random points can be treated as sampled from a (spatial) Poisson process. One of the properties of this Poisson process is that for any subregion $\Omega$  in the unit square, the number of points in this region, denoted by $N(\Omega)$, also has the Poisson distribution with parameter $N\cdot\left|\Omega\right|$, where $\left|\Omega\right|$ is the area of $\Omega$, i.e., $N(\Omega)\sim\text{Poi}(N\cdot|\Omega|)$. 
Another property of the (spatial) Poisson process is that for any two non-overlapping regions $\Omega_1$ and $\Omega_2$ in the unit square, the number of points in $\Omega_1$ and the number of points in $\Omega_2$, i.e., $N(\Omega_1)$ and $N(\Omega_2)$ respectively, are independent.

Let us now rephrase the main results of \cite{AORV} here.
\begin{definition}\label{def:assoc}
Let $T_1,T_2,\ldots,T_n$ be $n$ associated random variables if $\text{Cov}[f(\textbf{T}),g(\textbf{T})]\geq 0$, where $\textbf{T}=(T_1,T_2,\ldots,T_n)$, for all nondecreasing functions f and g, for which the expectations $\mathbb{E}(f), \mathbb{E}(g)$ and $\mathbb{E}(fg)$ exist.
\end{definition}
It is known that
\begin{itemize}
\item any subset of associated random variables are associated;
\item nondecreasing functions of associated random variables are associated;
\item independent random variables are associated
\item let $x_1,\ldots,x_n$ be associated binary random variables, then
\[
\mathbb{P}(x_1=s,\ldots,x_n=s)\geq\mathbb{P}(x_1=s)\ldots\mathbb{P}(x_n=s),
\]
where $s$ can be either $0$ or $1$.
\end{itemize}

Regarding the parallelograms defined in Section \ref{subsec:review}, we have the following lemma.
\begin{lemma}
As the number of points in the unit square tends to infinity, The number of random points in two parallelograms $R_1$ and $R_2$ are associated. Furthermore, they are independent if $R_1$ and $R_2$ are non-overlapping.
\end{lemma}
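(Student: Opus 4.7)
The plan is to reduce the claim to the spatial Poisson-process fact that the paper has already recalled in the paragraph preceding the lemma, together with the bullet-point closure properties of associated random variables listed immediately above. As $N\to\infty$, the uniform sample of $N$ points in $[0,1]^2$ is asymptotically a spatial Poisson process of intensity $N$ (this is nothing more than the Binomial-to-Poisson limit used throughout Section \ref{subsec:review}, where $N(R)\sim\mathrm{Bin}(N,|R|)$ with $|R|=2/N$ is replaced by $\mathrm{Poisson}(N\cdot|R|)$). Thus in the limit one has the two defining properties: for any measurable $\Omega$, $N(\Omega)\sim\mathrm{Poisson}(N\cdot|\Omega|)$, and counts on pairwise disjoint Borel sets are mutually independent.

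The second assertion is then immediate: if $R_1$ and $R_2$ are non-overlapping parallelograms, then $\{N(R_1),N(R_2)\}$ are independent because they are counts on disjoint Borel sets of a Poisson process. For the first (association) assertion, I would carry out the standard disjoint decomposition. Write
\[
A_1=R_1\setminus R_2,\qquad A_2=R_2\setminus R_1,\qquad A_0=R_1\cap R_2,
\]
which are pairwise disjoint and satisfy $R_1=A_1\cup A_0$ and $R_2=A_2\cup A_0$. Consequently
\[
N(R_1)=N(A_1)+N(A_0),\qquad N(R_2)=N(A_2)+N(A_0),
\]
and the triple $\bigl(N(A_1),N(A_2),N(A_0)\bigr)$ is a vector of \emph{independent} Poisson random variables by the disjoint-regions property. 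By the first and third bullets preceding the lemma, independent random variables are associated, so this triple is associated. Both $N(R_1)$ and $N(R_2)$ are coordinatewise nondecreasing (in fact linear with nonnegative coefficients) functions of that triple, so by the second bullet (nondecreasing functions of associated variables are associated) the pair $\bigl(N(R_1),N(R_2)\bigr)$ is associated. This is exactly the first conclusion of the lemma.

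The only delicate point is the asymptotic passage from the exact $\mathrm{Bin}(N,2/N)$ model at finite $N$ to the Poisson process limit; once we are in the limit, independence on disjoint pieces is automatic and no further calculation is needed. If one prefers a non-asymptotic version, the same argument works verbatim for the finite-$N$ multinomial model once one verifies association of multinomial count vectors (which follows, again, from the closure of association under nondecreasing functions applied to the indicator variables $\mathbf 1\{X_i\in A\}$ for a disjoint decomposition). I do not foresee a substantive obstacle: the whole proof is a bookkeeping exercise in the closure properties of association combined with the disjoint-pieces decomposition $R_k=A_k\cup A_0$.
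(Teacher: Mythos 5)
Your proof is correct, but it takes a different route from the paper's. The paper disposes of the lemma in a one-line remark by invoking the identity $2\,\mathrm{Cov}(f,g)=\mathbb{E}\bigl[(f(R_1,R_2)-f(R_1',R_2'))(g(R_1,R_2)-g(R_1',R_2'))\bigr]$ with $(R_1',R_2')$ an i.i.d.\ copy, and asserting the right-hand side is nonnegative; that pointwise sign argument is airtight only in one dimension, so as written the paper's remark is really a pointer to the standard Esary--Proschan--Walkup induction rather than a complete proof. You instead decompose into the disjoint pieces $A_1=R_1\setminus R_2$, $A_2=R_2\setminus R_1$, $A_0=R_1\cap R_2$, use the Poisson-process property that counts on disjoint regions are independent, and then chain together exactly the closure properties the paper lists before the lemma (independent variables are associated; nondecreasing functions of associated variables are associated). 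This buys you two things: a proof that rests only on facts already quoted in the text, and a treatment of the independence claim for non-overlapping $R_1,R_2$, which the paper's remark does not address at all. One caveat on your closing aside: for the finite-$N$ multinomial model the argument does \emph{not} go through verbatim, because multinomial counts on disjoint cells are negatively correlated (the indicators $\mathbf{1}\{X_i\in A_j\}$ over disjoint $A_j$ cannot simultaneously equal one), hence not associated in the positive sense; this does not damage your proof of the lemma as stated, since the statement is explicitly asymptotic and your main argument lives entirely in the Poisson limit.
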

\begin{remark}
This lemma can be simply proved by considering $E[(f(R_1,R_2) - f(R'_1,R'_2))(g(R_1,R_2) - g(R'_1,R'_2))] \geq 0$, where $f,g$ are two nondecreasing functions of $R_1,R_2$, and $R'_1,R'_2$ are i.i.d. copy of $R_1,R_2$.
\end{remark}
Indeed, if we use $s(R)$ to denote the state (significant or non-significant) of parallelogram $R$, then we have
\begin{equation}\label{ineq:signState}
\mathbb{P}(s(R_1)=a, s(R_2)=a)\geq\mathbb{P}(s(R_1)=a)\mathbb{P}(s(R_2)=a),
\end{equation}
where $a$ is either $0$ or $1$. The equality in (\ref{ineq:signState}) holds when $R_1$ does not overlap with $R_2$.

For a multi-scale detection problem, we construct an array of nodes in
\begin{equation}\label{def:V}
\mathcal{V}\equiv[1,2^j]\times[1,2^{J-j+1}]\times[-S2^{J-2j+1},S2^{J-2j+1}]\cap\mathbb{Z}^3,
\end{equation}
where $J=\lceil\log_2(N)\rceil$ and $0\leq j\leq J$. For any nodes 
\[
(k,\ell_1,\ell_2)\in[1,2^j]\times[1,2^{J-j+1}]\times[-S2^{J-2j+1},S2^{J-2j+1}]\cap\mathbb{Z}^3
\]
in the array, the three components represent the location index, the altitude index and the slope index, respectively. In light of the nodes in two dimension, we might consider $m=2^{J-j+1}\times(2S\cdot2^{J-2j+1}+1)$ nodes in the same strip as a column and thus there are $n=2^j$ columns in total. For any node $(k,\ell_1,\ell_2)$, it can be connected to
\begin{equation}\label{eq:connfash}
\begin{split}
&(k+1,\ell_1+\ell_2+u,\ell_2+v)\\
\in&[1,2^j]\times[1,2^{J-j+1}]\times[-S2^{J-2j+1},S2^{J-2j+1}]\cap\mathbb{Z}^3,
\end{split}
\end{equation}
where $|u|\leq 4$, $|v|\leq 4$. Each node is associated with a parallelogram in the algorithm mentioned in \cite{Aos06Filament} and therefore it is open with probability
\[
p=\mathbb{P}(N(R)>N^{\ast})\rightarrow\mathbb{P}(\text{Poisson}(2)>N^{\ast}), \text{\quad as\quad} N\rightarrow\infty,
\]
where $N(R)$ is the number of points in the parallelogram $R$ and $N^{\ast}$ is a counting threshold to be specified later. Due to the structure of the model in \cite{Aos06Filament}, the nodes in different columns are independent and all the nodes here are associated as $N\rightarrow\infty$. 

Consider the Pseudo-tree model in dimension $3$, as in Section \ref{extenPseuTree},
\[
V^2=\{(i,j_1,j_2)\in\mathbb{Z}^2: -4i\leq j_1\leq 4i, -4i\leq j_2\leq 4i, i\geq 0\},
\]
with oriented edges $(i,j_1,j_2)\rightarrow(i+1,j_1+s_1,j_2+s_2)$, where $\left|s_1\right|\leq 4$ and $\left|s_2\right|\leq 4$. We denote $\theta^2_k(p)$ to be the probability that there is a significant run of length at least $k$ starting at the origin and $p^2_c$ to be the critical probability. 
Revisiting the proofs of Theorems \ref{T:p-c}, \ref{T:subexp} and \ref{th:mainconv} together with their generalized results in Theorems \ref{criticalProbInHighDim}, \ref{phiInHighDim} and \ref{asymRateInHighDim}, we find that these results do not depend on the independence of nodes in the same column. 
The condition that nodes are associated in the same strip is sufficient for these theorems. 
By Theorem \ref{phiInHighDim}, there exist positive constants $\sigma^2_1$ and $\sigma^2_d$, independent of $p$, and there exists a unique function $\phi^2(p)$, which is strictly decreasing and positive when $p<p_c$; constantly $0$ otherwise, such that 
\[
\sigma^2_1 k^{-2}\exp\{-k\phi^2(p)\}\leq\theta^2_k(p)\leq\sigma^2_2 k^2\exp\{-k\phi^2(p)\} 
\]
 for any $k\geq 1$. In particular, it follows that 
\[
-\frac{\log\theta^2_k(p)}{k}\rightarrow\phi^2(p).
\]
Since each node in the array can be connected with at most $81$ nodes in the next column and hence $p_c\geq\frac{1}{81}$ by Theorem \ref{criticalProbInHighDim}.

Though in \cite{Aos06Filament}, the authors consider all scales in $\{j:0\leq j\leq J$ for $J=\lceil\log_{2}N\rceil\}$, we will consider $\{j:0\leq j\leq \lceil\frac{J+\log_2\beta}{1+\alpha}\rceil\}$ only. 
We shall point out here that the restriction on $j$ is a fairly reasonable assumption for the following reasons. First notice that if we choose $j>\lceil\frac{J+\log_2\beta}{1+\alpha}\rceil$, then the range of the slope index $[-S2^{J-2j+1},S2^{J-2j+1}]$ will be fairly small. 
Hence the parallelograms will be almost horizontal rectangles. Moreover, under $\mathbb{H}_1(\alpha,\beta)$, for scales $j\leq\lceil\frac{J+\log_2\beta}{1+\alpha}\rceil$, the parallelograms in the same column will be more overlapping which yields more significant nodes and hence the longer length of the significant runs. 
And it is easier to separate the null hypothesis $\mathbb{H}_0$ from the alternative hypotheses $\mathbb{H}_1(\alpha,\beta)$. 
The most important reason is that, in \cite{Aos06Filament}, the authors point out that under $\mathbb{H}_1(\alpha,\beta)$, there is some scalar $j^{\ast}$ such that the graph of the function is completely covered by a tube of parallelograms in this scale like the case in Figure \ref{Fi:square}. We call this \textit{containing tube $T_{j^{\ast}}(f)$}. 
It is shown that $j^{\ast}=\lceil\frac{J+\log_2\beta}{\alpha+1}\rceil$ (See Lemma 2.1-2.3 and their proofs in \cite{Aos06Filament}). In other words, using only scalars $j\leq\lceil\frac{J+\log_2\beta}{1+\alpha}\rceil$ is enough to cover the graph hence detect the filamentary structure under $\mathbb{H}_1(\alpha,\beta)$. 
Thus, it actually can save work to consider only the scales no larger than $\lceil\frac{J+\log_2\beta}{1+\alpha}\rceil$ without loss of generality. 
In case that $\alpha\in(1,2]$ and $\beta>0$ are unknown, it is possible to use $0.5001J$ instead of $\lceil\frac{J+\log_2\beta}{\alpha+1}\rceil$ for the reason that $\lceil\frac{J+\log_2\beta}{\alpha+1}\rceil\leq 0.5001J$ as $J\rightarrow\infty$. 
Denote $\lceil\frac{J+\log_2\beta}{\alpha+1}\rceil$ by $c_J$, which is the scale under which the whole graph of the function is guaranteed to be in a series of parallelograms, as shown in Figure \ref{Fi:square}.

Now we specify the asymptotic thresholds for our purpose. 
These thresholds are better and more intuitive than those in \cite{Aos06Filament}. 
Specifically, applying our theory, we set the threshold parameters as follows.

\begin{itemize}
\item Let the membership threshold $N^{\ast}$ satisfy the following property:
\[
p_0=\mathbb{P}(\text{Poisson}(2)>N^{\ast})<\frac{1}{81}\leq p_c.
\]
Here we can take $N^{\ast}=6$ so that $p_{0}\approx0.0045338<\frac{1}{81}$. 
\item Let the decision threshold $\left|L^{\ast}_N\right|$ be
\[
(1+\delta_3)\frac{2J\log 2}{\phi(p_0)},
\]
for some small $\delta_3>0$.
\item Define $p^{\ast}$ to be
\begin{equation}\label{def:pAst}
\exp\{-\phi(p_0)\frac{c_J(1-\delta_3)}{2J(1+\delta_3)}\}.
\end{equation}
\item We choose $\lambda^{\ast}$ such that
\begin{equation*}
\mathbb{P}(\text{Poisson}(\lambda^{\ast})>N^{\ast})>p^{\ast}.
\end{equation*}
\item Finally, we define $T_{\ast}(\alpha,\beta, S)$ to be $2\lambda^{\ast}\beta^{\frac{1}{1+\alpha}}\sqrt{1+S^2}$. 
\begin{equation}\label{ineq:lambda}
\mathbb{P}(N(R)>N^{\ast})>p^{\ast},
\end{equation}
where $N(R)$ is the number of points in the parallelogram $R$.
\end{itemize}

Let $\left|L^{\max}_N\right|$ denote the length of the longest significant run in Lattice $\mathcal{V}$, defined in (\ref{def:V}). 
Note that there are $4S2^{2J-2j}+2^{J+1}$ nodes in $\mathcal{V}$.
We will have the following results.
\begin{corollary}
 Under $\mathbb{H}_0$, by Theorem \ref{asymRateInHighDim}, for any small $\epsilon$ and $\delta_3>0$, with probability at least $1-\epsilon$, we have
\begin{equation}\label{errorTypeI}
\begin{split}
\left|L^{\max}_N\right|\leq&(1+\delta_3/2)\frac{\log(4S2^{2J-2j}+2^{J+1})}{\phi(p_0)}\\
\leq&(1+\delta_3)\frac{2J\log2}{\phi(p_0)},
\end{split}
\end{equation}
as $N\rightarrow\infty$.
\end{corollary}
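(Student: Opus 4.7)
The plan is to reduce the statement to a scale-by-scale application of Theorem~\ref{asymRateInHighDim} followed by a union bound over the $O(\log N)$ admissible scales, and then to verify the elementary calculus inequality that converts the scale-dependent bound into the uniform bound $(1+\delta_3)\,2J\log 2/\phi(p_0)$.

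First, I would fix the probabilistic setup under $\mathbb{H}_0$. Because the $N$ points are i.i.d.\ uniform on $[0,1]^2$ and each parallelogram $R\in\mathcal{R}(j)$ has area $2/N$, the count $N(R)$ is $\mathrm{Bin}(N,2/N)$, which converges in distribution to $\mathrm{Poisson}(2)$ as $N\to\infty$. Hence the open probability satisfies $p=\mathbb{P}(s(R)=1)\to\mathbb{P}(\mathrm{Poisson}(2)>N^{\ast})=p_0$, and by the choice $N^{\ast}=6$ we have $p_0<1/81\leq p_c$, placing us in the sub-critical regime where $\phi(p_0)>0$. Moreover, as emphasized just before the corollary, the state variables within a single significance graph $\Sigma(j)$ are associated (independent across columns, associated within a column), which is exactly the weakening of independence under which Theorems \ref{criticalProbInHighDim}--\ref{asymRateInHighDim} remain valid.

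Next, for each fixed $j\in\{0,1,\ldots,c_J\}$ the graph $\Sigma(j)$ is an inflating Bernoulli net in dimension $d'=3$ with $n=2^j$ columns and $m_1m_2=2^{J-j+1}\bigl(2S\cdot 2^{J-2j+1}+1\bigr)$ nodes per column, so the total number of nodes is bounded above by $4S\cdot 2^{2J-2j}+2^{J+1}$. Applying Theorem~\ref{asymRateInHighDim} (with $\phi^d$ identified with $\phi$ in the statement of the corollary) gives
\[
\mathbb{P}_{\mathbb{H}_0}\!\left(|L^{\max}_{N,j}|\leq (1+\delta_3/2)\,\frac{\log(4S\cdot 2^{2J-2j}+2^{J+1})}{\phi(p_0)}\right)\longrightarrow 1
\]
as $N\to\infty$. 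Since the admissible scales are $j=0,\ldots,c_J$ with $c_J=O(\log N)$, a union bound over scales still produces probability tending to $1$, so with probability at least $1-\epsilon$ for $N$ large enough the displayed inequality holds simultaneously for all $j\in\{0,\ldots,c_J\}$. Taking the maximum over $j$ then yields the first inequality of \eqref{errorTypeI} for $|L^{\max}_N|=\max_j|L^{\max}_{N,j}|$.

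Finally, I would verify the deterministic inequality
\[
(1+\delta_3/2)\,\log\bigl(4S\cdot 2^{2J-2j}+2^{J+1}\bigr)\leq (1+\delta_3)\cdot 2J\log 2
\]
for every $j\in\{0,\ldots,c_J\}$ and all sufficiently large $J$. The left-hand side is maximized at $j=0$, where it equals $(1+\delta_3/2)\bigl[2J\log 2+\log(4S+2^{1-J})\bigr]=(1+\delta_3/2)\cdot 2J\log 2+O(1)$, while the right-hand side exceeds it by $(\delta_3/2)\cdot 2J\log 2\to\infty$; so the bound holds uniformly in $j$ once $J$ is large. Combining the two bounds delivers \eqref{errorTypeI}.

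The main obstacle is justifying the union bound step: Theorem~\ref{asymRateInHighDim} as stated gives convergence in probability for each scale, so one needs a sufficiently fast, uniform-in-scale probability estimate to combine $O(\log N)$ events. This is handled by the exponential upper tail embedded in Theorem~\ref{phiInHighDim} (the bound $\theta_k^d(p)\leq\sigma_2^d k^d e^{-k\phi^d(p)}$), which yields a polynomially decaying probability of exceeding $(1+\delta_3/2)\log(\cdot)/\phi(p_0)$ at each scale, comfortably absorbing the $O(\log N)$ factor from the union bound. The secondary subtlety is keeping track of the association between nodes sharing a column; this is already addressed by the paragraph preceding the corollary, which verifies that the proofs of the relevant theorems only use association rather than independence within a column.
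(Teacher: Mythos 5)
Your proposal is correct and follows essentially the same route the paper intends: the corollary is stated as a direct consequence of Theorem~\ref{asymRateInHighDim} applied scale by scale, with the second inequality being the elementary comparison $\log(4S\cdot 2^{2J-2j}+2^{J+1})\leq 2J\log 2+O(1)$. Your additional care about the union bound over the $O(\log N)$ scales (using the polynomially decaying tail from the proof of Theorem~\ref{th:mainconv} rather than bare convergence in probability) is a detail the paper leaves implicit, and it is handled correctly.
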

\begin{corollary}
Under $\mathbb{H}_1(\alpha,\beta)$, with probability at least $1-\epsilon$ and for large $N$, by the Erd$\ddot{o}$s- R$\acute{e}$nyi Law and the Egoroff's Theorem (\cite{Egoroff1988}), we have that the length of the significant run in the tube $T_{j^{\ast}}(f)$ containing the function $f$, denoted by $\left|L_{j^{\ast}}(f)\right|$, satisfies
\begin{eqnarray}
\left|L_{j^{\ast}}(f)\right|&>&(1-\delta_3)c_J\log_{1/p^{\ast}}2\label{ineq:first}\\
&=&(1+\delta_3)\frac{2J\log2}{\phi(p_0)}\label{ineq:second}\\
&=&\left|L^{\max}_{N}\right|\label{ineq:third},
\end{eqnarray}
as $N\rightarrow\infty$. 
\end{corollary}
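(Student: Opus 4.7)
The plan is to establish the three assertions in order, proving inequality (\ref{ineq:first}) from Erd\"os-R\'enyi applied to the containing tube, verifying (\ref{ineq:second}) by direct substitution, and deducing (\ref{ineq:third}) from the preceding corollary on $|L^{\max}_N|$.

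For (\ref{ineq:first}), the key object is the containing tube $T_{j^*}(f)$ at scale $j^* = c_J = \lceil(J+\log_2\beta)/(\alpha+1)\rceil$. By Lemma 2.1--2.3 of \cite{Aos06Filament}, this choice of scale guarantees that $\mathrm{graph}(f)$ is entirely enclosed in a chain of $2^{j^*}$ horizontally consecutive parallelograms: the horizontal width $\omega=2^{-j^*}$ and height $t=2^{-(J-j^*)+1}$ satisfy $t/\omega^{\alpha}\asymp 2\beta$, so H\"older$(\alpha,\beta)$ regularity keeps $f$ within one parallelogram over each horizontal slab. I would then bound the expected count in a parallelogram $R$ of the tube from below by $\epsilon_N N \cdot \omega\sqrt{1+S^2}$, coming from the $\epsilon_N N$ points sampled uniformly along $\mathrm{graph}(f)$, whose total arclength is at most $\sqrt{1+S^2}$. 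Using $\omega = 2^{-j^*}$, the hypothesis $\epsilon_N > T_\ast N^{-\alpha/(1+\alpha)}$, and $T_\ast = 2\lambda^\ast\beta^{1/(1+\alpha)}\sqrt{1+S^2}$, this lower bound simplifies to $\lambda^\ast$ (plus the usual background contribution of about $2$). By the definition of $\lambda^\ast$ in (\ref{ineq:lambda}), each $R$ in the tube is then significant with probability at least $p^\ast$.

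With this, I would apply the Erd\"os-R\'enyi law to the chain of $2^{c_J}$ parallelograms in $T_{j^\ast}(f)$. The significance indicators are associated Bernoulli variables (by the remark preceding (\ref{ineq:signState})) dominating i.i.d.\ Bernoulli$(p^\ast)$ along the chain; hence the longest significant sub-run $|L_{j^\ast}(f)|$ stochastically dominates the longest run of $1$'s in a Bernoulli$(p^\ast)$ sequence of length $2^{c_J}$. The Erd\"os-R\'enyi law yields $|L_{j^\ast}(f)|/\log_{1/p^\ast}(2^{c_J})\to 1$ almost surely, and Egoroff's theorem (\cite{Egoroff1988}) converts this into the statement that for every $\epsilon,\delta_3>0$, with probability at least $1-\epsilon$ and for all sufficiently large $N$,
\[
|L_{j^\ast}(f)| > (1-\delta_3)\,c_J\log_{1/p^\ast} 2,
\]
which is (\ref{ineq:first}).

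The identity (\ref{ineq:second}) is then a direct computation: substituting the definition (\ref{def:pAst}) of $p^\ast$ gives $-\log p^\ast = \phi(p_0)\,c_J(1-\delta_3)/[2J(1+\delta_3)]$, so $\log_{1/p^\ast} 2 = 2J(1+\delta_3)\log 2 /[\phi(p_0)\,c_J(1-\delta_3)]$, and multiplying by $(1-\delta_3)c_J$ yields $(1+\delta_3)(2J\log 2)/\phi(p_0)$. Finally, (\ref{ineq:third}) is already proved by the previous corollary, in which the Type-I error analysis via Theorem \ref{asymRateInHighDim} gives $|L^{\max}_N|\le(1+\delta_3)(2J\log 2)/\phi(p_0)$ on a $(1-\epsilon)$-probability event; intersecting with the event from (\ref{ineq:first}) gives the full chain of inequalities on a $(1-2\epsilon)$-probability event, which suffices after relabeling $\epsilon$. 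I expect the main obstacle to be the quantitative expected-count bound in the first step: one must carefully exploit the H\"older regularity together with the specific aspect ratio of the scale-$c_J$ parallelograms to show that the curve segment inside $R$ has arclength at least $\omega\sqrt{1+S^2}/(\text{const})$ and that it indeed lies in $R$, so that the $\epsilon_N N$ on-curve points fall into $R$ in the claimed number.
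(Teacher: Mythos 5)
Your proposal follows essentially the same route as the paper, which justifies this corollary in only two sentences: inequality (\ref{ineq:first}) is attributed to the fact that (\ref{ineq:lambda}) holds for every parallelogram in the containing tube $T_{j^{\ast}}(f)$, so the Erd\H{o}s--R\'enyi law plus Egoroff applied to the chain of $2^{c_J}$ tube parallelograms gives the lower bound, and (\ref{ineq:second}) is exactly the substitution of (\ref{def:pAst}) that you carry out; your write-up is in fact more detailed than the paper's. The one point to repair is your appeal to association when invoking Erd\H{o}s--R\'enyi: FKG for the (decreasing) events ``block $i$ is not all significant'' bounds $\mathbb{P}(\text{no block is all significant})$ from \emph{below}, which is the wrong direction for the lower-bound half of the law; the justification that actually works, and that the paper relies on, is that the tube contains one parallelogram per column and counts in distinct, non-overlapping columns are independent under the spatial Poisson approximation, so the tube indicators genuinely dominate an i.i.d.\ Bernoulli$(p^{\ast})$ sequence.
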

The inequality (\ref{ineq:first}) is due to the fact that (\ref{ineq:lambda}) holds for each parallelogram in the containing tube $T_{j^{\ast}}(f)$. 
The equality (\ref{ineq:second}) is due to the definition of $p^{\ast}$ in (\ref{def:pAst}). 

We thus can get the following conclusion:
\begin{theorem}
When $\epsilon_N>T_{\ast}(\alpha,\beta, S)N^{-\frac{\alpha}{1+\alpha}}$, by (\ref{errorTypeI}) to (\ref{ineq:third}), the test based on the length of the longest significant chain is asymptotically powerful, i.e.,
\begin{eqnarray*}
\mathbb{P}(\left|L_N^{\max}\right|>\left|L_{N}^{\ast}\right|\big|\mathbb{H}_0)&\rightarrow&0, \text{\quad as\quad} N\rightarrow\infty,\label{typeIerror}\\
\mathbb{P}(\left|L_N^{\max}\right|<\left|L_{N}^{\ast}\right|\big|\mathbb{H}_1(\alpha,\beta))&\rightarrow&0, \text{\quad as\quad} N\rightarrow\infty.\label{typeIIerror}
\end{eqnarray*}
\end{theorem}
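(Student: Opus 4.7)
The plan is to derive the theorem as a direct concatenation of the two corollaries immediately preceding it, so the bulk of the work consists of unpacking those corollaries and checking that the assumption $\epsilon_N > T_{\ast}(\alpha,\beta,S) N^{-\alpha/(1+\alpha)}$ feeds correctly into the second one. For the Type I statement I would invoke only the first corollary: under $\mathbb{H}_0$, the bound (\ref{errorTypeI}) reads $|L_N^{\max}| \leq (1+\delta_3)\,2J\log 2/\phi(p_0)$ with probability at least $1-\epsilon$, and the right-hand side is, by construction, exactly the decision threshold $|L_N^{\ast}|$. Since $\epsilon > 0$ is arbitrary, this gives $\mathbb{P}(|L_N^{\max}|>|L_N^{\ast}| \mid \mathbb{H}_0)\to 0$ as $N\to\infty$.

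For the Type II statement I would first note that any significant run lying inside the containing tube $T_{j^{\ast}}(f)$ is a fortiori a significant run in the full multi-scale node set $\mathcal{V}$, so $|L_N^{\max}| \geq |L_{j^{\ast}}(f)|$. It therefore suffices to show $|L_{j^{\ast}}(f)| > |L_N^{\ast}|$ with probability tending to $1$, which is precisely the content of the chain (\ref{ineq:first})--(\ref{ineq:third}) in the second corollary. Among the three steps in that chain, (\ref{ineq:second}) is a pure algebraic identity forced by the definition (\ref{def:pAst}) of $p^{\ast}$, (\ref{ineq:third}) is the definition of the decision threshold, and (\ref{ineq:first}) is the substantive one: it is an Erd\H{o}s--R\'enyi type lower bound on the longest significant run along a tube of parallelograms whose individual significance probabilities are each at least $p^{\ast}$.

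To justify that lower bound one must verify the hypothesis (\ref{ineq:lambda}). The argument is that under $\mathbb{H}_1(\alpha,\beta)$ the containing tube at scale $j^{\ast}=\lceil(J+\log_2\beta)/(1+\alpha)\rceil$ consists of parallelograms that each cover a curve segment of horizontal length $\omega = 2^{-j^{\ast}}$, so its intersection with $\mathrm{graph}(f)$ has length at least $\omega\sqrt{1+S^2}$. The expected number of curve-points deposited there is $\epsilon_N N \cdot \omega \sqrt{1+S^2}$, and the assumed bound $\epsilon_N > T_{\ast}N^{-\alpha/(1+\alpha)}$ combined with $\omega = 2^{-j^{\ast}} \geq 2^{-1}\beta^{-1/(1+\alpha)}N^{-1/(1+\alpha)}$ makes this exceed $\lambda^{\ast}$ by the choice $T_{\ast}=2\lambda^{\ast}\beta^{1/(1+\alpha)}\sqrt{1+S^2}$. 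By the defining property of $\lambda^{\ast}$, each parallelogram is then significant with probability exceeding $p^{\ast}$, which is (\ref{ineq:lambda}). Finally, Egoroff's theorem converts the almost-sure Erd\H{o}s--R\'enyi convergence into the high-probability bound (\ref{ineq:first}) uniformly over $N$ large.

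The main obstacle, as I see it, is that the parallelogram indicators $s(R)$ along the tube are not independent but merely associated (overlapping parallelograms share points). Fortunately, the Erd\H{o}s--R\'enyi type lower bound only requires a stochastic lower bound on the length of a significant run, and by the association inequality (\ref{ineq:signState}) the joint probability of an all-significant block of parallelograms is bounded below by $(p^{\ast})^k$; the one-dimensional Erd\H{o}s--R\'enyi law then applies to an i.i.d.\ $\mathrm{Bernoulli}(p^{\ast})$ coupling along the spine of the tube, yielding exactly (\ref{ineq:first}). Once this coupling is in place, combining the two corollaries and taking $\epsilon,\delta_3\downarrow 0$ delivers both convergences simultaneously.
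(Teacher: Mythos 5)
Your top-level structure is exactly the paper's: the paper offers no separate proof of this theorem, stating it as an immediate consequence of the two preceding corollaries, and your Type~I argument (corollary one plus arbitrariness of $\epsilon$) and Type~II argument ($|L_N^{\max}|\geq |L_{j^\ast}(f)|$ plus the chain (\ref{ineq:first})--(\ref{ineq:third})) are precisely that concatenation. The algebraic check that (\ref{def:pAst}) forces $(1-\delta_3)c_J\log_{1/p^\ast}2=(1+\delta_3)\tfrac{2J\log 2}{\phi(p_0)}$ is correct, and your verification of (\ref{ineq:lambda}) from $\epsilon_N>T_\ast N^{-\alpha/(1+\alpha)}$ is the intended one (modulo a slip: the curve's mass in a parallelogram of width $\omega$ is at least $\omega/\sqrt{1+S^2}$, not $\omega\sqrt{1+S^2}$; the factor $\sqrt{1+S^2}$ in $T_\ast$ is there to cancel a denominator, though the conclusion that the Poisson mean exceeds $\lambda^\ast$ survives either way).

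The one step in your write-up that would not go through as stated is the dependence-handling for (\ref{ineq:first}). You propose to use association, via $\mathbb{P}(s(R_1)=1,\ldots,s(R_k)=1)\geq (p^\ast)^k$, to justify an i.i.d.\ $\mathrm{Bernoulli}(p^\ast)$ coupling and then apply the Erd\H{o}s--R\'enyi lower bound. But the Erd\H{o}s--R\'enyi \emph{lower} bound on the longest run needs an \emph{upper} bound on the probability that every one of many disjoint candidate blocks fails to be all-significant; positive association (FKG for decreasing events) bounds that probability from \emph{below} by the product of the marginals, which is the wrong direction and gives nothing. A lower bound on each block's success probability alone cannot rule out that the failures are perfectly correlated. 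The correct repair, which the paper implicitly relies on, is that the parallelograms along the spine of $T_{j^\ast}(f)$ sit in distinct columns and are non-overlapping, hence their counts are independent under the spatial Poisson limit, so the classical Erd\H{o}s--R\'enyi law applies directly without any association argument. With that substitution your proof is sound and coincides with the paper's.
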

 
\subsection{Target tracking problems}\label{subsec:egTarget}
In this subsection, we study the target tracking problem. 
We first provide the background towards this problem in Section \ref{subsec:back}.
Then we pose a hypothesis testing problem and apply our theory to it in Section \ref{subsec:targetrevisit}.
\subsubsection{Background}\label{subsec:back}
In this subsection, we discuss another application of the theory.
Let $X_i \in \{0,1\}^{m}$, where $m$ is an integer.
We have $X_{i,j} =0$ or $1$, where $X_{i,j}$ denotes the $j$th
entry of $X_i$. Here $i$ is a time index and $j$ is a location
index. 
$X_{i,j}=1$ (or $0$) corresponds to a target being present
(or absent) at location $j$ and time $i$.

We introduce the following probabilistic model to minic the motion of targets over time. From $X_i$ to $X_{i+1}, 1 \le i \le n-1$, we have:
\begin{enumerate}
\item Initialize $X_{i+1,j} = 0$ for all $j$.

\item If $X_{i,j}=0$, then set $X_{i+1,j}=X_{i+1,j}+1$ with probability
    $p_0$ (corresponding to a newly emerging object).
\item If $X_{i,j}=1$, there are four sub-cases:
\begin{enumerate}
  \item $X_{i+1,j-1} =X_{i+1,j-1} +1$  with probability $p_1$ (shifting left)
  \item $X_{i+1,j  } =X_{i+1,j}+1$  with probability $p_2$ (remain the same
      location)
  \item $X_{i+1,j+1} =X_{i+1,j+1}+1$  with probability $p_3$ (shifting right)
  \item do nothing, with probability $1 - p_1 - p_2 - p_3$ (object
      vanishes).
\end{enumerate}

Apparently, we must impose $0 < p_i < 1$, for $i=0,1,2,3$ and $p_1+p_2+p_3
< 1$.

\item Finally, we take $X_{i+1,j} = \min(1, X_{i+1,j})$ to ensure that each
    one of them is either one or zero. Note $X$ form the ground truth regarding the presence and locations
of the targets.

Below we consider how observations are generated.

\item Set $z_{ij} = x_{ij} + \epsilon_{ij}$, where $\epsilon_{ij}
    \stackrel{\mbox{i.i.d.}}{\sim} N(0,\sigma^2)$, where $\sigma^2$ is a
    parameter, e.g., $\sigma^2=1$.

    Note $Z_i = \{z_{ij}, j=1,2,\ldots, m\}$ is the observation at
    time $i$.

\end{enumerate}

In \cite{AAOMT}, a hidden state Markov process model is mentioned.
In the above case, it is as follows:
\begin{math}
\begin{array}{ccccc ccccc}
X_0 & \to & X_1 & \to & X_2 & \to & X_3 & \to & X_4 & \to \cdots \\
&& \downarrow && \downarrow && \downarrow && \downarrow   \\
& & Z_1 & & Z_2 & & Z_3 & & Z_4 &.
\end{array}
\end{math}
For our purpose, we may not emphasize this Markovian aspect of the problem. Though it is important in the estimation problem.

We pose a hypothesis testing problem in this case, i.e.,
\begin{equation}\label{eq:target}
\mathbb{H}_0: \quad\text{all}\quad X_{i,j}=0\quad\text{versus}\quad\mathbb{H}_1: \quad\text{some}\quad X_{i,j}=1.
\end{equation}
The idea behind the hypothesis testing problem is to say whether there is some newly emerging object at certain location and time or the image just consists of white noisy pixels. 
This null hypothesis setting corresponds to the scenario where there is no target at all time and $p_0$ is set to be $p_0 = 0$.
We will use the theory of the longest chain to solve this problem in the following subsection.

\subsubsection{A revisit using the theory of longest chain}\label{subsec:targetrevisit}
In this part, we will use our theory to estimate an upper bound of the length of the longest significant run in the target tracking problem for an array of size $m$-by-$n$. 
Under the null hypothesis $\mathbb{H}_0$, the image of size $[1,m]\times[1,n]$ is just a white noise image and $Z_{i,j}=\epsilon_{i,j}$ where $ \epsilon_{i,j}\stackrel{\mbox{i.i.d.}}{\sim} N(0,\sigma^2)$. 
For an arbitrary node $Z_{i,j}$ to be significant, we should provide a member threshold $Z^{\ast}$, i.e., the node is significant if $Z_{i,j}>Z^{\ast}$ and insignificant otherwise.
Let $\mathcal{V}$ be the set of nodes under consideration, i.e.,
\[
\mathcal{V}\equiv\{(i,j)\in\mathbb{Z}^2: 1\leq i\leq n, 1\leq j\leq m\}.
\]
Let $\mathcal{E}$ be the set of edges from $(i,j)\in\mathcal{V}$ to $(i+1, j+s)\in\mathcal{V}$ such that $\left|s\right|\leq 1$.
 Let $p=\mathbb{P}(Z_{i,j}>Z^{\ast})$ be the probability of a node $(i,j)\in\mathcal{V}$ to be significant. In order to make a decision, we need to count the length of the longest significant nodes among all the chains along the edges in $\mathcal{E}$, i.e., the chains of the following form
\begin{eqnarray*}
&&\{(i, j_0), (i+1, j_1), \ldots, (i+\ell, j_\ell): \\
&&\left|j_{k+1}-j_{k}\right|\leq 1, k=0,\ldots, \ell-1\}.
\end{eqnarray*}
 
We will use the length of the longest significant run, denoted by $\left|L^{\max}_{T}(m,n)\right|$, as a statistic for the test. 
And a little bit more consideration yields that under the null hypothesis $\mathbb{H}_0$, the chain of significant nodes has the same structure as in (\ref{def:chain}) with $C=1$.

We can apply our theory to find a reasonable threshold. By Theorem \ref{T:p-c}, the critical probability $p_c$ for the graph $(\mathcal{V},\mathcal{E})$ satisfies that $p_c\geq\frac{1}{2C+1}$. 
Therefore, we may choose $Z^{\ast}$ such that $p=\mathbb{P}(Z_{i,j}>Z^{\ast})<\frac{1}{3}$ for $(i,j)\in\mathcal{V}$. Thus if $m$ is constant, by Theorem \ref{th:er-re}, for any $\epsilon_1>0$, there exist $\rho(m,p)\in(0,1)$ and $N\in\mathbb{Z}^{+}$ such that when $n\geq N$, we have
\[
\left|L^{\max}_{T}(m,n)\right|\leq(1+\delta_4)\log_{1/\rho(m,p)}n\text{ with probability } 1-\epsilon_1,
\]
for any $\delta_4>0$. If $m\rightarrow\infty$, $n\rightarrow\infty$, then by Theorem \ref{th:mainconv}, for any $\epsilon_2>0$ we have
\[
\left|L^{\max}_{T}(m,n)\right|\leq(1+\delta_4)\frac{\log(mn)}{\phi(p)}\text{ with probability } 1-\epsilon_2.
\]
So let the decision threshold $\left|L^{\ast}_T\right|$ be $(1+\delta_4)\log_{1/\rho(m,p)}n$ if $n\rightarrow\infty$ with fixed $m$; and $(1+\delta_4)\frac{\log(mn)}{\phi(p)}$ if $n\rightarrow\infty, m\rightarrow\infty$.
Since the forgoing $\epsilon_1$ and $\epsilon_2$ are arbitray, if it happens that
\begin{equation}
\label{eq:targetTest}
\left|L^{\max}_{T}(m,n)\right|>\left|L^{\ast}_T\right|,
\end{equation}
then we can always reject $\mathbb{H}_0$ with false positive probability close to $0$ asymptotically.

In summary, we will have the following corollary.
\begin{corollary}
The longest run test defined in (\ref{eq:targetTest}) can always reject $\mathbb{H}_0$ with type I error close to $0$ asymptotically to the hypothesis testing problem in (\ref{eq:target}).
\end{corollary}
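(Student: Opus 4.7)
The plan is to observe that under $\mathbb{H}_0$ the test statistic $|L^{\max}_T(m,n)|$ reduces exactly to the length of the longest significant run in an i.i.d.\ Bernoulli net of the kind studied in Section \ref{Chp:BernoulliNet}, so that the corollary will follow by a direct appeal to the asymptotic rate results already established for such nets, namely Theorem \ref{th:er-re} in the fixed-$m$ regime and Theorem \ref{th:mainconv} in the regime where both $m$ and $n$ diverge.

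First I would verify the reduction. Under $\mathbb{H}_0$ the observations satisfy $Z_{i,j}=\epsilon_{i,j}$ with $\epsilon_{i,j}\stackrel{\text{i.i.d.}}{\sim}N(0,\sigma^2)$, so that the thresholded labels $\mathbf{1}\{Z_{i,j}>Z^{\ast}\}$ are i.i.d.\ Bernoulli$(p)$ with $p=\mathbb{P}(Z_{i,j}>Z^{\ast})$. The graph $(\mathcal{V},\mathcal{E})$ described just before the corollary is precisely the Bernoulli-net graph with connectivity constant $C=1$, and its chains have the form (\ref{def:chain}), so $|L^{\max}_T(m,n)|$ is equal in distribution to $|L_0(m,n)|$ under $\mathbb{H}_0$. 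Moreover, the threshold $Z^{\ast}$ has been chosen so that $p<1/3\leq p_c$, where the inequality $p_c\geq 1/3$ comes from Theorem \ref{T:p-c} with $C=1$; this puts us in the sub-critical phase required by Theorem \ref{th:mainconv}.

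Next I would split into the two stated regimes. When $m$ is held fixed while $n\to\infty$, Theorem \ref{th:er-re} gives $|L_0(m,n)|/\log_{1/\rho(m,p)}n\to 1$ almost surely, so that for any $\delta_4>0$,
\[
\mathbb{P}\!\left(|L^{\max}_T(m,n)|>(1+\delta_4)\log_{1/\rho(m,p)}n\,\Big|\,\mathbb{H}_0\right)\longrightarrow 0.
\]
When instead $m\to\infty$ and $n\to\infty$, Theorem \ref{th:mainconv} gives $|L_0(m,n)|/\log(mn)\to 1/\phi(p)$ in probability, and the same argument yields
\[
\mathbb{P}\!\left(|L^{\max}_T(m,n)|>(1+\delta_4)\log(mn)/\phi(p)\,\Big|\,\mathbb{H}_0\right)\longrightarrow 0.
\]
In each case the right-hand side inside the probability is, by construction, exactly the decision threshold $|L^{\ast}_T|$ prescribed just before the corollary, so combining the two displays yields $\mathbb{P}(|L^{\max}_T(m,n)|>|L^{\ast}_T|\mid\mathbb{H}_0)\to 0$, which is the desired Type~I error statement.

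There is no substantive obstacle: the only point requiring genuine care is confirming that Theorem \ref{th:mainconv} is applicable, which reduces to checking that the choice $p<1/3$ places us strictly below $p_c$; this follows from Theorem \ref{T:p-c}. Everything else is a direct unpacking of the two limit theorems and the definition of $|L^{\ast}_T|$.
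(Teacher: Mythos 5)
Your proposal is correct and follows essentially the same route as the paper: reduce to an i.i.d.\ Bernoulli net with $C=1$ under $\mathbb{H}_0$, check $p<1/3\leq p_c$ via Theorem \ref{T:p-c}, and then invoke Theorem \ref{th:er-re} for fixed $m$ and Theorem \ref{th:mainconv} for $m,n\to\infty$ to show the threshold $\left|L^{\ast}_T\right|$ is exceeded with vanishing probability. No substantive differences from the paper's argument.
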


If we have more information on the structure of the observation matrix under the alternative hypothesis (such as condition (\ref{cond:sep})), we can even obtain a stronger result that the longest significant test is asymptotically powerful.

\section{Conclusion and Future work}
\label{sec:conclusion}

In this paper, we  first study  the Pseudo-tree model. 
We find the upper and lower bounds of the asymptotic probability to have a run with length $k$.
By exploring the connection between the  Pseudo-tree model and the inflating Bernoulli net.
We then develop the asymptotic rate of the length of the longest significant run in an inflating Bernoulli net as $m\rightarrow\infty$ and $n\to\infty$. 
 We further apply our theory to the image detection problem to find the reasonable thresholds, which yields a reliable detection procedure. 
It is of interests to learn the value of the function $\phi(p)$ in the future. 
Also for the portion of the nodes in the suspiciously curve, $\epsilon_N$, we develop a lower bound, which guarantees a reliable test. 
However, it remains our future work to find the minimum bound of $\epsilon_N>0$, below which there is no powerful statistical test. 
Also it is not easy to find $\rho(m,p)$ when $m\geq 12$, especially when we drop the independence assumption among the nodes within the same column.

%
%
%
%
%
%

\section{Proofs} 
\label{sec:proofs}

We present all the proofs in this section. 
Our proof techniques in Section \ref{chp:PseudoTreeModel} and Section \ref{Chp:BernoulliNet} mainly come from the percolation theories, association results, and the Chen-Stein Poisson approximations.
The last two techniques are also partly used in \cite{JASA06LongSigRun}.
The proofs in this work are more complicated than in the finite $m$ scenario \cite{JASA06LongSigRun}.

\subsection{Proof of Theorem \ref{T:p-c}}

The proof of Theorem \ref{T:p-c} requires the following definition and lemma. See \cite{GG1989}.
\begin{definition}\label{D:ide}
Let $V$ be the set of nodes in the pseudo-tree and we take the sample space as
\[
\Omega=\prod_{v\in V}\{0,1\}.
\]
We take $\mathcal{F}$ to be the $\sigma$-field of subsets of $\Omega$ generated by the finite-dimensional cylinders. 
We say an event $A\in\mathcal{F}$ is increasing, if the indicator function of $A$ satisfies $I_{A}(X_1)\leq I_{A}(X_2)$ whenever $X_1\leq X_2$, where $X_1, X_2$ are two realizations on $V$, i.e., $X_1:V\rightarrow\{0,1\}$, where $X_1(i,j)=1$ if the node $(i,j)$ is significant and $X_1(i,j)=0$ otherwise and $X_2$ has the same definition. 
Analogously, we say $A$ decreasing set if $\overline{A}$, the complement of $A$, is increasing.
\end{definition}

\begin{lemma}\label{L:FKG} (FKG Inequality)
If $A$ and $B$ are both increasing (or both decreasing) events in the lattice, then we have
$\mathbb{P}(A\bigcap B)\geq\mathbb{P}(A)\mathbb{P}(B)$.
\end{lemma}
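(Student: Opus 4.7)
The plan is to prove the FKG inequality in the Bernoulli product-measure setting by the standard induction-on-coordinates argument, first establishing it for nondecreasing functions and then specializing to indicator functions of increasing events. Passing to the infinite lattice is handled by an approximation argument using the cylinder $\sigma$-field introduced in Definition \ref{D:ide}.

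First I would reformulate the conclusion as a covariance inequality: it suffices to show that for any two bounded nondecreasing functions $f,g:\{0,1\}^V\to\mathbb{R}$ one has $\mathbb{E}[fg]\ge\mathbb{E}[f]\,\mathbb{E}[g]$. Taking $f=\mathbf{1}_A$ and $g=\mathbf{1}_B$ (indicators of increasing events are nondecreasing) then yields the lemma; the decreasing case follows by replacing $f,g$ with $-f,-g$. For the base case of a single Bernoulli coordinate $X$, I would observe that for any independent copy $Y$ of $X$ and any nondecreasing $f,g$, the product $(f(X)-f(Y))(g(X)-g(Y))$ is pointwise nonnegative, since $f$ and $g$ are monotone in the same direction. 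Taking expectations gives
\[
0\le \mathbb{E}\bigl[(f(X)-f(Y))(g(X)-g(Y))\bigr]=2\mathbb{E}[fg]-2\mathbb{E}[f]\,\mathbb{E}[g],
\]
which is the one-coordinate statement.

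For the inductive step, assuming the inequality holds for $n-1$ coordinates, I would fix nondecreasing $f,g:\{0,1\}^n\to\mathbb{R}$ and define
\[
F(x_1,\ldots,x_{n-1})=\mathbb{E}_{X_n}[f(x_1,\ldots,x_{n-1},X_n)],\qquad G(x_1,\ldots,x_{n-1})=\mathbb{E}_{X_n}[g(x_1,\ldots,x_{n-1},X_n)].
\]
Monotonicity of $f,g$ in each coordinate makes $F,G$ nondecreasing functions of $(x_1,\ldots,x_{n-1})$. Conditioning on the first $n-1$ coordinates and applying the one-variable case in the last coordinate gives $\mathbb{E}[fg\mid x_1,\ldots,x_{n-1}]\ge F\cdot G$, so $\mathbb{E}[fg]\ge\mathbb{E}[FG]$. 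The induction hypothesis applied to $F,G$ then yields $\mathbb{E}[FG]\ge\mathbb{E}[F]\,\mathbb{E}[G]=\mathbb{E}[f]\,\mathbb{E}[g]$, closing the induction.

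To upgrade from finite to the (countably infinite) vertex set $V$ of the pseudo-tree model, I would use the $\sigma$-field generated by finite-dimensional cylinders from Definition \ref{D:ide}. Any increasing event $A\in\mathcal{F}$ can be approximated by the increasing cylinder events $A_n=\{X:\mathbb{P}(A\mid X_{v_1},\ldots,X_{v_n})>1/2\}$ (or more cleanly, by martingale convergence applied to $\mathbb{E}[\mathbf{1}_A\mid \mathcal{F}_n]$, whose versions remain nondecreasing in the configuration), and similarly for $B$; the finite-coordinate inequality passes to the limit by bounded convergence. The main obstacle I anticipate is bookkeeping in this approximation step: I must ensure that the conditional expectations I use as approximants are themselves monotone in the configuration so that the finite-dimensional FKG inequality applies before taking the limit. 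Once that is verified, the inequality $\mathbb{P}(A\cap B)\ge\mathbb{P}(A)\mathbb{P}(B)$ is immediate.
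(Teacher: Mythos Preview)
Your argument is the standard Harris--FKG induction for product Bernoulli measures and is correct; the one-coordinate base case via an independent copy, the conditioning step, and the martingale-convergence passage to the infinite lattice are all sound (in particular, your observation that $\mathbb{E}[\mathbf{1}_A\mid\mathcal{F}_n]$ is coordinatewise nondecreasing whenever $A$ is increasing is exactly what is needed to apply the finite-dimensional inequality before taking limits). The paper, by contrast, does not supply a proof at all: it simply cites Section~2.2 of Grimmett~\cite{GG1989} for the bond-percolation version and remarks that the site-percolation version follows by replacing edges with nodes in that argument. Since Grimmett's proof is precisely the induction-on-coordinates argument you wrote out, you have effectively reproduced what the paper defers to the reference; there is no substantive difference in approach, only in the level of detail provided.
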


The significant edge version of this lemma can be found in Section 2.2 of \cite{GG1989}. 
The intuition behind this lemma is that if there is an open path joining vertex $u$ to vertex $v$, then it becomes more likely that there is an open path joining vertex $x$ to vertex $y$ than without  a path from $u$ to $v$. 
Replacing edge by node in the proof in \cite{GG1989}, the significant node version can be shown analogously.

\begin{proof}[Proof of Theorem \ref{T:p-c}]
Recall that $\theta_k(p)=\mathbb{P}_p(0\leftrightarrow B(k-1))$. The event $\{0\leftrightarrow B(k)\}$ happens if and only if there is an open node $x\in B(1)$, such that the origin $(0,0)$ is open and the event $\{x\leftrightarrow B(k)\}$ occurs. Of course, $\text{card}(B(1))=2C+1$. Therefore we have,
\[
\{0\leftrightarrow B(k)\}=\{\bigcup_{\{x\in B(1)\}}((0\leftrightarrow x)\cap (x\leftrightarrow B(k)))\}
\]
This implies that
\begin{eqnarray}
\label{eq:simp}
\begin{split}
&1-\mathbb{P}_p(0\leftrightarrow B(k))\\
=&\mathbb{P}_p(\bigcap_{\{x\in B(1)\}}\overline{(0\leftrightarrow x)\bigcap
(x\leftrightarrow B(k))})\\
\geq&\prod_{\{x\in B(1)\}}\mathbb{P}_p(\overline{(0\leftrightarrow x)\bigcap (x\leftrightarrow B(k))})\\
=&(1-p\theta_{k}(p))^{(2C+1)},
\end{split}
\end{eqnarray}
where the inequality is due to Lemma \ref{L:FKG} and the fact that $\{0\leftrightarrow x\}\cap\{x\leftrightarrow B(k)\}$
is an increasing event and that $\mathbb{P}_p(x\leftrightarrow B(k))=\mathbb{P}_p(0\leftrightarrow B(k-1))$ for any
given $x\in B(1)$.

So by (\ref{eq:simp}), we have that
\begin{eqnarray*}
\theta_{k+1}(p)&=&\mathbb{P}_p(0\leftrightarrow B(k))\\
&\leq&1-(1-p\mathbb{P}_p(0\leftrightarrow B_{k-1}))^{(2C+1)}\\
&=&1-(1-p\theta_{k}(p))^{(2C+1)}.
\end{eqnarray*}
Given this, we investigate the function
\[
f(x)=1-(1-px)^{r},
\]
where $r\in\mathbb{Z}^{+}$. We have
\[
f'(x)=rp(1-px)^{r-1}>0,\] 
and 
\[ f''(x)=-r(r-1)p^2(1-px)^{r-2}<0, \forall x\in(0,1).
\]
So the function $f(x)$ is always strictly increasing and concave down and $f(0)=0$. 
Besides, one can see that $f'(0)=rp$ and from this we have $f(x)$ is always under the line $y=x$ if $p<\frac{1}{r}$. 
Let $x_0$ be an arbitrary number in $(0,1)$ and generate a sequence $\{x_n\}_{n\geq 0}$, such that $x_{n+1}=f(x_n)$ for $n\geq 0$. 
Since $f(x)<x$ when $x\in(0,1)$ and $p<\frac{1}{r}$, the sequence $\{x_n\}_{n\geq0}$ is strictly decreasing. 
On the other hand, it is easy to see that $x_{n}\geq 0$ for any $n\geq 0$. Because a bounded decreasing sequence must have a limit, we have that
\[
0\leq x^{\ast}\equiv\lim_{n\rightarrow\infty}x_n.
\]
By the continuity of $f(x)$, one can easily see that
\[
f(x^{\ast})=\lim_{n\rightarrow\infty}f(x_n)=\lim_{n\rightarrow\infty}x_{n+1}=x^{\ast}.
\]
Since $f(x)<x$ on $(0,1)$, it is obvious that the limit of the sequence $\{x_n\}_{n\geq 0}$ is 0, i.e., $x^{\ast}=0$ for any starting point $x_0\in(0,1)$. 

Hence when $p<\frac{1}{(2C+1)}$, it leads to
\begin{eqnarray*}
&&\theta(p)\\
&=&\lim_{k\to\infty}\theta_{k+1}(p)\\
&=&\lim_{k\to\infty}\mathbb{P}(0\leftrightarrow B(k))\\
&\leq&\lim_{k\to\infty}1-
(1-p\theta_{k}(p))^{(2C+1)}\\
&=&0.
\end{eqnarray*}
According to the definition $p_c \vcentcolon= \sup\{p\in[0,1]:\theta(p)=0\}$, it follows that $p_c\geq\frac{1}{(2C+1)}$.
\end{proof}

\subsection{Proof of Theorem \ref{T:subexp}}
Before proving the theorem, let us state the sub-additivity lemma which can be found in \cite{GG1989}.
\begin{lemma}\label{L:sub}\textbf{Sub-additive limit theorem.}
If $(x_r: r\geq1)$ is sub-additive, i.e., $x_{m+n}\leq x_m+x_n$ for all $m,n$, then $\lambda=\lim_{r\to\infty}\{\frac{x_r}{r}\}$ exists and satisfies $-\infty\leq\lambda<\infty$. 
Furthermore, we have
\[
\lambda=\inf\{\frac{x_m}{m}:m\geq1\}
\]
and thus $x_m\geq m\lambda$ for all $m$.
\end{lemma}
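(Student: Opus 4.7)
The plan is to prove this classical Fekete-type result by the standard division-with-remainder argument. First I would \emph{define} $\lambda := \inf_{m\geq 1}\{x_m/m\}$; since $x_1/1$ is a finite real number, the infimum lies in $[-\infty,\infty)$, which already gives the quantitative bound $-\infty\leq\lambda<\infty$ as well as the trivial one-sided inequality $x_r/r\geq\lambda$ for every $r\geq 1$ (which, once convergence is established, automatically yields $x_m\geq m\lambda$).

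The substance is therefore to show that $\limsup_{r\to\infty} x_r/r\leq\lambda$, for this combined with $\liminf_{r\to\infty} x_r/r\geq\lambda$ forces the limit to exist and equal $\lambda$. Fix $\epsilon>0$ and choose $m$ with $x_m/m<\lambda+\epsilon$ (this is possible by the definition of infimum; if $\lambda=-\infty$ one replaces $\lambda+\epsilon$ by an arbitrarily large negative number and argues identically). For any $r\geq m$, write $r=qm+s$ with $0\leq s<m$, and iterate sub-additivity to obtain
\begin{equation*}
x_r \;=\; x_{qm+s} \;\leq\; q\,x_m + x_s,
\end{equation*}
where for $s=0$ the term $x_s$ is simply omitted (or one sets it to $0$ by convention, noting that the only role of $x_s$ is to be bounded as $r\to\infty$).

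Dividing by $r$ gives $x_r/r \leq (q/r)\,x_m + x_s/r$. Since $0\leq s<m$ is bounded while $r\to\infty$, we have $q/r\to 1/m$ and $x_s/r\to 0$ (the latter because $x_s$ ranges over the finite set $\{x_0,x_1,\dots,x_{m-1}\}$, so it is bounded above by a constant depending only on $m$). Hence
\begin{equation*}
\limsup_{r\to\infty}\frac{x_r}{r} \;\leq\; \frac{x_m}{m} \;<\; \lambda+\epsilon.
\end{equation*}
Letting $\epsilon\downarrow 0$ yields $\limsup_{r\to\infty}x_r/r\leq\lambda$, and combining with the pointwise lower bound $x_r/r\geq\lambda$ proves $\lim_{r\to\infty}x_r/r=\lambda=\inf_{m\geq 1}x_m/m$. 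The only mild subtlety is the treatment of the remainder term $x_s$, which must be handled as a bounded quantity independent of $r$; no other obstacle arises, so I expect this to be a short and routine proof.
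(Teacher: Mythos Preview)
Your argument is the standard Fekete division-with-remainder proof and is correct. The paper itself does not supply a proof of this lemma at all: it simply states the result and refers the reader to Grimmett's book \cite{GG1989}, so there is no ``paper's own proof'' to compare against. One cosmetic point: when you write that $x_s$ ranges over $\{x_0,x_1,\dots,x_{m-1}\}$, recall the sequence is only defined for $r\geq 1$; you already handled the case $s=0$ separately, so just say $x_s$ ranges over the finite set $\{x_1,\dots,x_{m-1}\}$ (with the $s=0$ case yielding $x_r\leq q\,x_m$ directly).
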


\begin{proof}[Proof of Theorem \ref{T:subexp}]
Given a positive integer $l$, it is not hard to show that 
\[
\text{card}(B(l))=(2lC+1).
\]
Since the event $\{0\leftrightarrow B(l+k-1)\}$ occurs if and only if there is some $z\in B(l)$ such that both $\{0\leftrightarrow z\}$ and $\{z\leftrightarrow B(l+k-1)\}$ occur, we have
\begin{eqnarray*}
&&\theta_{k+l}(p)=\mathbb{P}_p(0\leftrightarrow B(k+l-1))\\
&=&\mathbb{P}_p(\bigcup_{\{z\in B(l)\}}
(\{0\leftrightarrow z\}\cap\{z\leftrightarrow B(k+l-1)\})) \\
&\leq&\sum_{\{z\in B(l)\}}\mathbb{P}_p(\{0\leftrightarrow z\}\cap\{z\leftrightarrow B(k+l-1)\})\\
&=&\frac{1}{p}\sum_{\{z\in B(l)\}}\mathbb{P}_p(\{0\leftrightarrow z\})\mathbb{P}_p(\{z\leftrightarrow
B(k+l-1)\})\\
&=&\frac{1}{p}\sum_{\{z\in B(l)\}}\mathbb{P}_p(\{0\leftrightarrow z\})\mathbb{P}_p(\{0\leftrightarrow
B(k-1)\}),
\end{eqnarray*}
where the third ``$=$'' is due to conditional probability,
\begin{eqnarray*}
&&\mathbb{P}_p(\{0\leftrightarrow z\}\cap\{z\leftrightarrow B(k+l-1)\})\\
&=&\mathbb{P}_p(\{z\leftrightarrow B(k+l-1)\}|\{0\leftrightarrow z\})\mathbb{P}_p(\{0\leftrightarrow z\})\\
&=&\mathbb{P}_p(\{z\leftrightarrow B(k+l-1)\}|z=1)\mathbb{P}_p(\{0\leftrightarrow z\})\\
&=&\frac{\mathbb{P}_p(\{z\leftrightarrow B(k+l-1)\})}{\mathbb{P}_p(z=1)}\mathbb{P}_p(\{0\leftrightarrow z\})\\
&=&\frac{1}{p}\mathbb{P}_p(\{0\leftrightarrow z\})\mathbb{P}_p(\{z\leftrightarrow B(k+l-1)\})
\end{eqnarray*}
and the last equality is due to the fact that
\[
\mathbb{P}_p(\{z\leftrightarrow B(k+l-1)\})=\mathbb{P}_p(\{0\leftrightarrow B(k-1)\}), \forall z\in B(l).
\]
Notice that $\mathbb{P}_p(0\leftrightarrow z)\leq p\theta_{l}(p)$ for any $z\in B(l)$. We have
\[
\theta_{k+l}(p)\leq(2lC+1)\theta_{l}(p)\theta_{k}(p).
\]
On the other hand, for any $z\in B(l)$, we have
\begin{eqnarray*}
\theta_{k+l}(p)&=&\mathbb{P}_p(\{0\leftrightarrow B(k+l-1)\})\\
&\geq&\mathbb{P}_p(\{0\leftrightarrow z\}\cap \{z\leftrightarrow B(k+l-1)\})\\
&=&\frac{1}{p}\mathbb{P}_p(\{0\leftrightarrow z\})\mathbb{P}_p(\{z\leftrightarrow B(k+l-1)\})\\
&=&\frac{1}{p}\mathbb{P}_p(\{0\leftrightarrow z\})\mathbb{P}_p(\{0\leftrightarrow B(k-1)\}),
\end{eqnarray*}
Notice that
\[
\theta_{l}(p)\leq\frac{1}{p}\mathbb{P}_p(\bigcup_{\{z\in B(l)\}}\{0\leftrightarrow z\})\leq\frac{1}{p}\sum_{\{z\in B(l)\}}
\mathbb{P}_p(\{0\leftrightarrow z\}).
\]
It follows to have a node $z\in B(l)$ such that
\[
\frac{1}{p}\mathbb{P}_p(\{0\leftrightarrow z\})\geq\frac{\theta_{l}(p)}{(2lC+1)}.
\]
Thus we have
\[
\theta_{k+l}(p)\geq\frac{1}{(2lC+1)}\theta_{l}(p)\theta_{k}(p).
\]
If we let $g(l)=\log(2lC+1)$, then the inequalities we get so far are:
\begin{eqnarray*}
\log(\theta_{k+l}(p))&\leq&\log(\theta_{k}(p))+\log(\theta_{k}(p))+g(l); \\
\log(\theta_{k+l}(p))&\geq&\log(\theta_{k}(p))+\log(\theta_{l}(p))-g(l).
\end{eqnarray*}
Notice that $g(k+l)-g(k)=\log(1+\frac{2lC}{2kC+1})\leq\log2$ if $l\leq k$.
Therefore, we have
\begin{eqnarray}
&&\log(\theta_{k+l}(p))+g(k+l)+\log2\nonumber\\
&\leq&\log(\theta_{k}(p))+g(k)+\log2 \label{ineq:subadd}\\
&&+\log(\theta_{l}(p))+g(l)+\log2;\nonumber \\
\nonumber\\
&&\log(\theta_{k+l}(p))-g(k+l)-\log2\nonumber\\
&\geq&\log(\theta_{k}(p))-g(k)-\log2  \label{ineq:supadd}\\
&&+\log(\theta_{l}(p))-g(l)-\log2.\nonumber
\end{eqnarray}
Then by Lemma \ref{L:sub}, we have
\begin{eqnarray*}
\phi(p)&:=&\lim_{k\to\infty}-\frac{1}{k}\{\log(\theta_{k}(p))\}\\
&=&\lim_{k\to\infty}-\frac{1}{k}\{\log(\theta_{k}(p))+g(k)+\log2\}\\
&=&\lim_{k\to\infty}-\frac{1}{k}\{\log(\theta_{k}(p))-g(k)-\log2\}.
\end{eqnarray*}
This leads to
\begin{eqnarray}
\log(\theta_{k}(p))+g(k)+\log2&\geq&-k\phi(p);\label{ineq:upper}\\
-\log(\theta_{k}(p))+g(k)+\log2&\geq&k\phi(p);\label{ineq:lower}
\end{eqnarray}
for all $k\geq 1$. The theorem now follows (\ref{ineq:upper}) and (\ref{ineq:lower}) easily. 
\end{proof}

\subsection{Proof of Corollary \ref{T:thetaratio}}

\begin{proof}
By inequality (\ref{ineq:subadd}), we know that the sequence 
\[
\{\log(\theta_{k}(p))+g(k)+\log2\}_{k\in\mathbb{N}}
\]
is a sub-additive sequence. Thus by Lemma \ref{L:sub} we have
\begin{eqnarray*}
-\phi(p)&=&\lim_{k\to\infty}\frac{\log(\theta_k(p))+g(k)+\log2}{k}\\
&=&\inf_{k\in\mathbb{N}}\frac{\log(\theta_k(p))+g(k)+\log2}{k}
\end{eqnarray*}
Therefore, for any $\epsilon_0>0$, there exists some large $k_0$ such that when $k>k_0$, we have
\[
-\phi(p)\leq\frac{\log(\theta_k(p))}{k}+\epsilon_0,
\]
which leads to
\[
\exp(-\phi(p))\leq(\theta_k(p))^{\frac{1}{k}}\exp(\epsilon_0), \forall k>k_0.
\]
By inequality (\ref{ineq:supadd}), we know that $\{g(k)+\log2-\log(\theta_k(p))\}_{k\in\mathbb{N}}$ is a sub-additive sequence, therefore we have
\begin{eqnarray*}
&&g(k)+\log2-\log(\theta_k(p))\\
&\leq& g(k-1)+\log2-\log(\theta_{k-1}(p))\\
&&+g(1)+\log2-\log(\theta_1(p)).
\end{eqnarray*}
Divide by $k$ on the left and by $k-1$ on the right. It is easy to see that for any $\epsilon_1>0$, there exists some large $k_1$ such that when $k>k_1$, we have
\[
\frac{\log(\theta_k(p))}{k}\geq\frac{\log(\theta_{k-1}(p))}{k-1}-\epsilon_1.
\]
It follows that when $k>\max\{k_0,k_1\}$, we have
\[
\exp(-\phi(p))\leq(\theta_k(p))^{\frac{1}{k}}\exp(\epsilon_0)\leq\frac{\theta_k(p)}{\theta_{k-1}(p)}\exp(\epsilon_0+\epsilon_1/k).
\]
By the same technique using (\ref{ineq:supadd}) and (\ref{ineq:subadd}), we can show that for any $\epsilon_2$, when $k>k_2$ for some large $k_2$, we have
\[
\exp(-\phi(p)+\epsilon_2)\geq(\theta_k(p))^{\frac{1}{k}}\geq\frac{\theta_k(p)}{\theta_{k-1}(p)}.
\]
Since $\epsilon_0$, $\epsilon_1$ and $\epsilon_2$ are arbitray, we have that
\[
\lim_{k\to\infty}\frac{\theta_k(p)}{\theta_{k-1}(p)}=\exp\{-\phi(p)\}.
\]
\end{proof}

\subsection{Proof of Corollary \ref{C:phi}}
Before going to the proof of this corollary, we will first introduce the following lemma.
\begin{lemma}\label{L:nonincre}
Let $A$ be an increasing event which depends only on  finitely many nodes of a lattice. 
Then
$\frac{\log\mathbb{P}_p(A)}{\log p}$ is a non-increasing function of $p$.
\end{lemma}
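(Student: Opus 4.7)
I would reformulate the monotonicity of $\phi(p) := \log\mathbb{P}_p(A)/\log p$ as a concavity statement for $G(x) := \log\mathbb{P}_{e^x}(A)$ on $(-\infty,0]$, and then establish that concavity via a coupling argument together with elementary calculus.

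For the reduction, set $x = \log p$, so that $\phi(p) = G(x)/x$ and $G(0) = \log\mathbb{P}_1(A) = 0$ since the all-open configuration belongs to any non-empty increasing event. If $G$ is concave on $(-\infty,0]$, the tangent line to $G$ at any $x_0 < 0$ lies above $G$, so its value at $0$ gives $0 = G(0) \le G(x_0) - x_0 G'(x_0)$; dividing by $x_0 < 0$ reverses the inequality to $G(x_0)/x_0 \le G'(x_0)$, which by the quotient rule is precisely the sign condition $(G(x)/x)' \le 0$ at $x_0$. Hence $G(x)/x$ is non-increasing in $x$, and therefore $\phi(p) = G(\log p)/\log p$ is non-increasing in $p$.

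The second step is the product coupling. For any $p,q \in (0,1]$, let $\{Y_v\}_{v\in V}$ and $\{Z_v\}_{v\in V}$ be two independent i.i.d.\ Bernoulli families with parameters $p$ and $q$ respectively, and set $X_v := Y_v Z_v$, so $X_v \sim \mathrm{Bernoulli}(pq)$. Because $A$ is increasing and $X \le Y$, $X \le Z$ coordinatewise, the event $\{X \in A\}$ is contained in $\{Y \in A\} \cap \{Z \in A\}$; by independence of $Y$ and $Z$, this yields the submultiplicativity $\mathbb{P}_{pq}(A) \le \mathbb{P}_p(A)\mathbb{P}_q(A)$. Taking logarithms, $G$ is subadditive on $(-\infty,0]$ with $G(0)=0$.

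The main obstacle is to upgrade subadditivity of $G$ to full concavity of $G$. Since $A$ depends on only finitely many nodes, $\mathbb{P}_p(A)$ is a polynomial in $p$ and $G$ is $C^\infty$, so concavity reduces to the pointwise condition $G''(x) \le 0$, which by a direct calculation is equivalent to the log-elasticity $\eta(p) := p\mathbb{P}_p'(A)/\mathbb{P}_p(A)$ being non-increasing in $p$. Russo's formula $\mathbb{P}_p'(A) = \sum_v \pi_v(p)$, with $\pi_v(p)$ the probability that $v$ is pivotal for $A$, lets one identify $\eta(p)$ with the expected number of open pivotal nodes, conditional on $A$. The hard step is to show that this conditional expectation is non-increasing in $p$; I would attempt it by combining the FKG inequality (which provides positive correlation among the increasing events $\{v \text{ open and pivotal for } A\}$) with a monotone coupling in $p$, since intuitively an increase in $p$ pushes the random configuration deeper into the interior of $A$ and thereby reduces the count of pivotal sites. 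Once concavity of $G$ is in hand, the tangent-line argument from the first step completes the proof.
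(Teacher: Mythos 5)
The paper does not actually prove this lemma; it invokes Theorem 2.38 of Grimmett \cite{GG1989} and notes that the bond-percolation proof there adapts verbatim to sites. Your first two steps are fine: the reduction showing that concavity of $G(x)=\log\mathbb{P}_{e^x}(A)$ together with $G(0)=0$ would give monotonicity of $G(x)/x$, and the product coupling yielding $\mathbb{P}_{pq}(A)\le\mathbb{P}_p(A)\mathbb{P}_q(A)$, are both correct. But the step you flag as the main obstacle is not merely hard --- it is false. Take three independent sites and $A=\{v_1 \text{ open}\}\cup\{v_2,v_3 \text{ both open}\}$, so $\mathbb{P}_p(A)=p+p^2-p^3$. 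Then $\eta(p)=p\,\mathbb{P}_p'(A)/\mathbb{P}_p(A)=(1+2p-3p^2)/(1+p-p^2)$ equals $1$ at $p=0$ and rises to about $1.10$ near $p=0.2$ before decreasing; equivalently $G''(x)>0$ whenever $e^x<\sqrt{5}-2$. So $G$ is genuinely convex on part of its domain, your identification $\eta(p)=\mathbb{E}_p[N\mid A]$ (number of open pivotal sites given $A$) is correct but that quantity is \emph{not} non-increasing (at small $p$ the event is typically realized by $v_1$ alone, giving $N=1$; as $p$ grows it is increasingly realized by the pair $v_2,v_3$, giving $N=2$), and no FKG or coupling argument can rescue a false claim. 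Note also that you cannot fall back on submultiplicativity alone: a subadditive $G$ with $G(0)=0$ need not have monotone difference quotients (integer-valued counterexamples exist), which is presumably why you reached for concavity in the first place.

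The conclusion of the lemma is nevertheless true, because it only requires the chord slopes \emph{from the origin} to be monotone, which is strictly weaker than concavity. The standard proof establishes exactly this: for every real $\gamma\ge 1$ one shows $\mathbb{P}_{p^{\gamma}}(A)\le\mathbb{P}_p(A)^{\gamma}$, which with $p_1=p_2^{\gamma}$, $\gamma=\log p_1/\log p_2$, is precisely the assertion that $\log\mathbb{P}_p(A)/\log p$ is non-increasing. For integer $\gamma$ this is your iterated submultiplicativity; for general $\gamma$ one argues by induction on the number of relevant sites, conditioning on the state of one site to write $\mathbb{P}_p(A)=p\,\mathbb{P}_p(A^v)+(1-p)\mathbb{P}_p(A_v)$ with $A_v\subseteq A^v$ increasing on fewer sites, and then applying the elementary two-point inequality $p^{\gamma}u^{\gamma}+(1-p^{\gamma})v^{\gamma}\le\left(pu+(1-p)v\right)^{\gamma}$ for $0\le v\le u$ and $\gamma\ge1$. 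If you want a self-contained proof to include, that induction is the route to take; your concavity program should be abandoned.
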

The proof of this lemma can be found in \cite{GG1989}. 
Though the author in \cite{GG1989} shows the proof in a bond percolation problem, it is very easy to adjust the proof for our purpose and we omit the details.

\begin{proof}[Proof of Corollary \ref{C:phi}]
It is easy to see $\phi(p)=0$ if $p>p_c$. Indeed, since 
\[
\mathbb{P}_{p}(0\leftrightarrow B(k))\geq\theta(p)>0,
\]
it leads to
\begin{eqnarray*}
0&\leq&\phi(p)=\lim_{k\to\infty}-\frac{\log\mathbb{P}_{p}(0\leftrightarrow B(k-1))}{k}\\
&\leq&\lim_{k\to\infty}-\frac{\log\theta(p)}{k}=0,
\end{eqnarray*}
when $p>p_c\geq \frac{1}{(2C+1)}$.

Since $\theta_{k}(p)$ only depends on the status of finitely many sites, $-\frac{1}{k}\log(\theta_{k}(p))$
is a continuous function of $p$ for any $k\geq 1$. So it is sufficient to show that $-\frac{1}{k}\log(\theta_{k}(p))$
converges to $\phi(p)$ uniformly on $(0,1]$. By (\ref{ineq:upper}) and (\ref{ineq:lower}), we have for any $p\in(0,1]$
\[
\left|\phi(p)+\frac{1}{k}\log(\theta_{k}(p))\right|\leq\frac{1}{k}(g(k)+\log2)\rightarrow0\text{\quad as\quad} k\to\infty,
\]
which does not depend on $p$ at all. So $\phi(p)$ is a continuous function of $p$ on $(0,1]$. And it follows the fact that $\phi(p_c)=0$, since
\[
\phi(p_c)=\lim_{p\downarrow p_c}\phi(p)=0,
\]
by continuity of $\phi(p)$. To prove the strict monotonicity of $\phi(p)$ when $0<p<p_c$, we notice that $\{0\leftrightarrow B(k-1)\}$ is an increasing event which only depends on finitely many edges. Thus we apply the Lemma \ref{L:nonincre} to have that
\[
\frac{\log\mathbb{P}_{a}(0\leftrightarrow B(k-1))}{\log a}\geq\frac{\log\mathbb{P}_{b}(0\leftrightarrow B(k-1))}{\log b}
\quad \text{if} \quad a\leq b.
\]
If we divide the above by $k$ and take the limit as $k\rightarrow\infty$, then we have
\[
\phi(a)\geq\phi(b)\frac{\log(\frac{1}{a})}{\log(\frac{1}{b})}\quad\text{if}\quad 0<a\leq b\leq1.
\]
So if $0<a<b<p_c$, we have $\phi(a)>\phi(b)$. 
Thus the function $\phi(p)$ is strictly decreasing on $(0,p_c)$.

To prove that $\lim_{p\to 0}\phi(p)=\infty$, we use $\chi(k)$ and $\chi^{\ast}(k)$ to denote the number of all runs and significant runs respectively in the Pseudo-tree model that connect $0$ and $B(k-1)$ respectively. 
It is not hard to see that $\chi(k)=(2C+1)^{k-1}$ and $\mathbb{E}_{p}(\chi^{\ast}(k))=p^{k}\times\chi(k)$. Therefore, we have the following
\begin{eqnarray*}
\theta_{k}(p)&=&\mathbb{P}_{p}(0\leftrightarrow B(k-1))\\
&=&\mathbb{P}_{p}(\chi^{\ast}(k)\geq 1)\\
&\leq&\mathbb{E}_{p}(\chi^{\ast}(k))\\
&=&p^{k}\times (2C+1)^{k}
\end{eqnarray*}
So this will lead to the following fact
\begin{eqnarray*}
\lim_{k\to\infty}-\frac{\log\theta_{k}(p)}{k}&\geq& -\log(p\times (2C+1)).
\end{eqnarray*}
So as $p\rightarrow0$, obviously $\phi(p)\rightarrow\infty$. 
\end{proof}

\subsection{Proof of Corollary \ref{co:depofrho}}
\begin{proof}
Given a realization 
\[
t_{i,j}\sim\text{Uniform}(0,1), 1\leq i\leq n, 1\leq j\leq m,
\]
let $x^{\ast}_1\geq x^{\ast}_2>0$ be such that $p_1=\mathbb{P}(t_{i,j}>x^{\ast}_1)$ and $p_2=\mathbb{P}(t_{i,j}>x^{\ast}_2)$. Since $t_{i,j}>x^{\ast}_1$ implies that $t_{i,j}>x^{\ast}_2$, one can easily see that each significant node under threshold $x^{\ast}_1$ must be significant under $x^{\ast}_2$ and therefore $\left|L_0(m_1,n,p_1)\right|\leq\left|L_0(m_1,n,p_2)\right|$ which by Theorem \ref{th:er-re} leads to $\rho(m_1,p_1)\leq \rho(m_1,p_2)$. Similarly, it is not hard to see that $\left|L_0(m_1,n,p_1)\right|\leq\left|L_0(m_2,n,p_1)\right|$ since if $m_1\leq m_2$, $([1,n]\times[1,m_1])\cap\mathbb{Z}^2\subset([1,n]\times[1,m_2])\cap\mathbb{Z}^2$. Thus $\rho(m_1,p_1)\leq\rho(m_2,p_1)$.
\end{proof}

\subsection{Proof of Theorem \ref{T:kolmo}}
\begin{proof}
Recall $\theta(p)$, defined in Subsection \ref{defOfTheta}, is the probability that there is a significant run starting from a certain node and heading towards right forever when the probability of a node to be open is $p$. 
Let $C(i,j)$ be a significant run starting at $(i,j)$. In particular, $C$ is the one starting at $(0,0)$. 
The event that there exists an infinite open cluster in the array does not depend on the status of finitely many columns of nodes. 
Thus by the Kolmogorov zero-one law, $\mu(p)$ can only be either $0$ or $1$.
 If $p>p_c$, then of course $\theta(p)>0$. We have 
\[
\mu(p)\geq\mathbb{P}(|C|=\infty)=\theta(p)>0,
\]
which implies that $\mu(p)=1$ by the zero-one law.
On the other hand, because $\mathbb{P}(|C(i,j)|=\infty)=\mathbb{P}(|C|=\infty)=\theta(p), \forall (i,j)$, if $\theta(p)=0$ or $p<p_c$, we have
\[
\mu(p)\leq\sum_{(i,j)}\mathbb{P}(|C(i,j)|=\infty)=0.
\]
\end{proof}

\subsection{Proof of Theorem \ref{th:RhoLimSup}}
\begin{proof}
We first prove
\[
\rho(\infty,p)=\lim_{k\to\infty}\rho_{k}(\infty,p)=\lim_{k\to\infty}\frac{P_{k}(\infty,p)}{P_{k-1}(\infty,p)}=1,
\]
in the case of $p>p_c$. Suppose that $\rho_{k}(\infty,p)$ does not converge to $1$. Then since $[0,1]$
is a compact set, there must exist a subsequence of $\{\rho_{k}(\infty,p)\}_{k\in\mathcal{K}}$, $\mathcal{K}\subset\mathbb{Z}^{+}$, such that
\[
\lim_{k(\in\mathcal{K})\rightarrow\infty}\rho_{k}(\infty, p)=\rho_0,
\]
for some $\rho_0$ in $[0,1)$. And there exists some constant $\rho_{0}^{\ast}\in(0,1)$ slightly bigger than $\rho_0$, such that
\[
\rho_{k}(\infty, p)<\rho_0^{\ast},
\]
for any sufficiently large $k\in\mathcal{K}$ . Therefore, we have
\begin{equation}\label{eq:ptorho}
P_{n}(\infty,p)\leq\prod_{k(\in\mathcal{K})\leq n}\rho_{k}(\infty,p)\leq\prod_{k(\in\mathcal{K})
\leq n}\rho_{0}^{\ast},
\end{equation}
since $P_{n}(\infty,p)$, the probability of having an across when there are exactly $n$ columns, is equal to
$P_{1}(\infty,p)\times\prod_{i=1}^{n}\rho_{i}(\infty,p)$, which is no larger than
\[
\prod_{k(\in\mathcal{K})\leq n}\rho_{0}^{\ast}.
\]
It leads to the fact that
\[
P_{n}(\infty,p)\rightarrow0, \text{\quad as\quad} n\rightarrow\infty.
\]
On the other hand, it is easy to see that $P_{n}(\infty,p)\geq\theta_n(p)\geq\theta(p)>0$ for any $n$ when $p>p_{c}$,
where $\theta_n(p)$ and $\theta(p)$ are defined in subsection \ref{notation}. 
So there is a contradiction. 
Therefore we should have the following,
\[
\rho_{k}(\infty,p)\rightarrow 1, \text{\quad as\quad}k\rightarrow\infty,
\]
when $p>p_c$.

Now we prove the first equality of (\ref{eq:RhoLimSup}) under $p>p_c$. 
By Corollary \ref{co:depofrho}, we know the limit of $\rho(m,p)$ exists as $m$ goes to $\infty$ and thus we have the following
\[
\lim_{m\to\infty}\rho(m,p)=\sup_{m\in\mathbb{Z}^{+}}\{\rho(m,p)\}:=\rho^{\ast}.
\]
If we had $\rho^{\ast}<1$, then notice the fact that $\rho(m,p)\leq\rho^{\ast}$
for every $m$, thus it would lead to
\begin{equation}\label{lim:conv}
\frac{\left|L_{0}(m,n)\right|}{\log_{1/\rho(m,p)}n}\rightarrow 1, \text{\quad as\quad} n\rightarrow\infty,
\end{equation}
almost surely, for every $m$. We would have $\left|L_{0}(m,n)\right|\leq\log_{1/\rho^{\ast}}n$ with probability $1$, when $n$ is sufficiently large for every $m$.
On the other hand, in the array of $\mathbb{Z}^{+}\times\mathbb{Z}$,
we would have positive probability $(\geq\theta_{n}(p)\geq\theta(p)>0)$ that there is a significant run connecting the origin and the $n$th column. 
This leads to the fact that we have an across in the first $n$ columns with positive probability
for any positive integer $n$. 
So given $n$ sufficiently large, we may choose $m(\geq 3n\cdot C)$ to be
sufficiently large such that the model contains all the possible significant runs in the first $n$ columns starting at the origin. 
Therefore with positive probability ($>\theta(p)$), we have an across in the first $n$ columns which contradicts (\ref{lim:conv}) above because $\log_{1/\rho^{\ast}}n\ll n$ when $n$ is large. 
The proof of the theorem is completed.
\end{proof}

\subsection{Proof of Theorem \ref{th:ptoinfty}}
Before the proof, let us recall the definitions of three constants in \cite{twommts}. 
Let $I$ be an arbitrary index set, and for $\alpha\in I$, let $X_{\alpha}$ be a Bernoulli random variable with $p_{\alpha}\equiv\mathbb{P}(X_{\alpha}=1)=1-\mathbb{P}(X_{\alpha}=0)>0$. 
For each $\alpha\in I$, suppose we have chose $B_{\alpha}\subset I$ with $\alpha\in B_{\alpha}$. We think of $B_{\alpha}$ as a \lq\lq{}neighborhood of dependence\rq\rq{} for $\alpha$, such that $X_{\alpha}$ is independent or nearly independent of all of the $X_{\beta}$ for $\beta$ not in $B_{\alpha}$. 
Define
\begin{eqnarray*}
b_1&\equiv&\sum_{\alpha\in I}\sum_{\beta\in B_{\alpha}}p_{\alpha}p_{\beta},\\
b_2&\equiv&\sum_{\alpha\in I}\sum_{\alpha\not=\beta\in B_{\alpha}}p_{\alpha\beta}, \text{ where } p_{\alpha\beta}=\mathbb{E}(X_{\alpha}X_{\beta}),\\
b_3&\equiv&\sum_{\alpha\in I}s_{\alpha},
\end{eqnarray*}
where 
\[
s_{\alpha}\equiv\mathbb{E}\left|\mathbb{E}\{X_{\alpha}-p_{\alpha}\big|\sigma(X_{\beta}: \beta\in I-B_{\alpha})\}\right|.
\]
The following theorem can be found  in \cite{twommts}.

\begin{theorem}
When $b_1 + b_2 +b_3 \rightarrow 0$, the random variable defined by
\[
W\equiv\sum_{\alpha\in I}X_{\alpha},
\]
approximately has a Poisson distribution with mean
\[
\lambda\equiv\mathbb{E}W=\sum_{\alpha\in I}p_{\alpha}.
\]
\end{theorem}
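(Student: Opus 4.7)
The plan is to prove the stronger quantitative statement
\[
d_{TV}(\mathcal{L}(W), \text{Poisson}(\lambda)) \leq \min(1, 1/\lambda) \cdot (b_1 + b_2 + b_3),
\]
from which the stated Poisson convergence follows immediately once $b_1+b_2+b_3 \to 0$. My tool of choice is Stein's method, which is built precisely for such dependency-controlled sums of Bernoulli indicators.

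First, I would set up Stein's characterization of the Poisson law: $Z \sim \text{Poisson}(\lambda)$ if and only if $\mathbb{E}[\lambda f(Z+1) - Z f(Z)] = 0$ for every bounded $f:\mathbb{Z}_+ \to \mathbb{R}$. For any test set $A \subset \mathbb{Z}_+$ and $h = \mathbf{1}_A$, I would solve Stein's equation
\[
\lambda f(j+1) - j f(j) = h(j) - \mathbb{E}h(Z)
\]
and invoke the classical bounds $\|f_h\|_\infty \leq 1$ and $\|\Delta f_h\|_\infty \leq \min(1, 1/\lambda)$, where $\Delta f(j) = f(j+1) - f(j)$. Taking a supremum over $A$ reduces the problem to bounding $|\mathbb{E}[\lambda f(W+1) - W f(W)]|$ uniformly in such $f$.

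Next, for each $\alpha \in I$, I would decompose $W = X_\alpha + V_\alpha + Y_\alpha$, where $V_\alpha = \sum_{\beta \in B_\alpha \setminus \{\alpha\}} X_\beta$ collects the local-but-not-self terms and $Y_\alpha = \sum_{\beta \in I \setminus B_\alpha} X_\beta$ collects the far terms. Using $\lambda = \sum_\alpha p_\alpha$ and $W = \sum_\alpha X_\alpha$, I would expand
\[
\mathbb{E}[\lambda f(W+1) - W f(W)] = \sum_{\alpha \in I} \bigl( p_\alpha \mathbb{E}[f(W+1)] - \mathbb{E}[X_\alpha f(W)] \bigr),
\]
and in each summand insert and subtract the idealized quantity $p_\alpha \mathbb{E}[f(Y_\alpha + 1)]$. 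This produces three error pieces. Replacing $f(W+1)$ by $f(Y_\alpha + 1)$ inside $p_\alpha \mathbb{E}[f(W+1)]$ costs at most $\|\Delta f\|_\infty \cdot p_\alpha \mathbb{E}[X_\alpha + V_\alpha]$, which sums to $b_1$ (up to minor regrouping). The analogous replacement inside $\mathbb{E}[X_\alpha f(W)]$, noting that $X_\alpha = 1$ shifts only the local block, sums to $b_2$. Finally, the residual discrepancy $\mathbb{E}[(X_\alpha - p_\alpha) f(Y_\alpha + 1)]$ is handled by conditioning on $\sigma(X_\beta : \beta \notin B_\alpha)$, pulling the $\sigma$-measurable factor $f(Y_\alpha+1)$ outside, and bounding the residual by $\|f\|_\infty \cdot s_\alpha$, which sums to $b_3$.

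The main obstacle is the telescoping bookkeeping in the last step: arranging the insertions so that each of $b_1, b_2, b_3$ emerges with its intended meaning and no redundant factors inflate the bound. In particular, the refined estimate $\|\Delta f_h\|_\infty \leq \min(1, 1/\lambda)$ (rather than merely $\|f_h\|_\infty \leq 1$) requires a careful analysis of the explicit solution to Stein's equation via the recursion $f(j+1) = (j f(j) + h(j) - \mathbb{E}h(Z))/\lambda$; this is the classical technical piece, and in an applied paper such as ours I would be content to cite it from \cite{twommts} rather than reprove it, focusing the energy on verifying that the chosen neighborhoods $B_\alpha$ drive $b_1+b_2+b_3 \to 0$ in each concrete application.
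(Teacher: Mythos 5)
Your proposal is correct in outline, but note that the paper does not prove this statement at all: it is imported verbatim from the Chen--Stein reference \cite{twommts} (Arratia, Goldstein and Gordon), and the only thing the paper later uses is exactly the quantitative form you propose to establish, namely $|\mathbb{P}(W=0)-e^{-\lambda}|\leq\min\{1,1/\lambda\}(b_1+b_2+b_3)$. Your sketch is the standard Stein--Chen argument and the bookkeeping is right: the decomposition $W=X_\alpha+V_\alpha+Y_\alpha$, the identity $\mathbb{E}[X_\alpha f(W)]=\mathbb{E}[X_\alpha f(Y_\alpha+V_\alpha+1)]$, and the three substitutions do produce $\|\Delta f\|_\infty\sum_{\beta\in B_\alpha}p_\alpha p_\beta$, $\|\Delta f\|_\infty\sum_{\beta\in B_\alpha\setminus\{\alpha\}}p_{\alpha\beta}$, and $\|f\|_\infty s_\alpha$ respectively, which sum to the advertised $b_1,b_2,b_3$. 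Two small caveats: first, the sharp Stein-factor bounds are $\|\Delta f_h\|_\infty\leq(1-e^{-\lambda})/\lambda\leq\min(1,1/\lambda)$ but $\|f_h\|_\infty\leq\min(1,c\lambda^{-1/2})$ for a universal constant $c$, so the $b_3$ term strictly carries a slightly different factor than the one you quote (this is immaterial for the qualitative conclusion $b_1+b_2+b_3\to 0\Rightarrow$ Poisson convergence, and also immaterial in this paper's application since $b_3=0$ there); second, since you end by proposing to cite the Stein-equation bounds from \cite{twommts} anyway, your route collapses in practice to the same citation the paper makes --- the added value of your writeup is that it records why the three constants $b_1,b_2,b_3$ have the meanings the paper informally ascribes to them.
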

\begin{proof}[Proof of Theorem \ref{th:ptoinfty}]
Let $Z_i$ be the indicator that there is a significant run from $(i,1)$ to the $n$th column, where $i=1,\ldots,m$. 
Let $W_{n,m}$ be the number of nodes in the first column from which an across significant run starts, i.e.,
\[
W_{n,m}=\sum_{i=1}^{m}Z_i.
\]
Obviously that
\[
P_{n}(m,p)=1-\mathbb{P}(W_{n,m}=0).
\]
The main idea of the Poisson approximation is that under certain conditions 
\[
\mathbb{P}(W_{n,m}=0)
\]
can be approximated by Poisson$(\lambda)$ where the Poisson parameter $\lambda$ will be computed below.

To verify the conditions for the Poisson approximation, we first define the neighborhood of $i$, $1\leq i\leq m$, as
\[
N(i)=\{j:|i-j|<2\cdot n\cdot C+1, 1\leq j\leq m\}.
\]
Define three constants $b_1$, $b_2$ and $b_3$ as in \cite{twommts} which depend on $n$, $m$, $C$ and {p}. 
Let $\sigma(Z_j: Z_j\not\in N(i))$ be the $\sigma$-algebra generated by $\{Z_j: Z_j\not\in N(i)\}$. If $j\not\in N(i)$, then clearly $|j-i|\geq 2\cdot n\cdot C+1$ which leads to the fact that $Z_{i}$ and $Z_{j}$ are independent. For $b_3$, we have
\begin{eqnarray*}
b_3&=&\sum_{i=1}^{m}\mathbb{E}\left|\mathbb{E}(Z_i-\mathbb{E}(Z_i))\big|\sigma(Z_j: Z_j\not\in N(i))\right|\\
&=&0,
\end{eqnarray*}

For $b_1$, we have
\begin{eqnarray*}
b_1 &=& \sum_{i=1}^{m}\sum_{j\in N(i)}\mathbb{E}_{p}(Z_i)\mathbb{E}_{p}(Z_j)\\
       &=& \sum_{i=1}^{m}\sum_{j\in N(i)}\mathbb{P}_{p}(Z_{i}=1)\mathbb{P}_{p}(Z_{j}=1)\\
       &\displaystyle \stackrel{\text{(fact 1)}}\leq& \sum_{i=1}^{m}\theta_{n}(p)\sum_{j\in N(i)}\theta_{n}(p)
\end{eqnarray*}
By Theorem \ref{T:subexp}, when $p<p_c$, we have a constant $\sigma>0$ and $\phi(p)>0$ such that 
\[
\theta_n(p)\leq \sigma\cdot n\exp\{-n\phi(p)\}.
\]
And therefore, it follows that
\begin{eqnarray*}    
b_1 &\leq& m\cdot (2n\cdot C+1)\cdot\sigma^{2}\cdot n^{2}\cdot\exp\{-2n\cdot\phi(p)\}\\
       &\leq& O(n^3\cdot\exp\{-n\cdot(\delta_2+\phi(p))\}).
\end{eqnarray*}

For $b_2$, we have
\begin{eqnarray*}
b_2 &=&\sum_{i=1}^{m}\sum_{j\in N(i),j\not=i}\mathbb{E}_{p}(Z_{i}\cdot Z_{j})\\
       &=&2\sum_{i=1}^{m}\sum_{j\in N(i),j>i}\mathbb{E}_{p}(Z_{i}\cdot Z_{j})\\
       &=&2\sum_{i=1}^{m}\sum_{j\in N(i),j>i}\mathbb{P}_{p}(Z_i=1 \text{\quad and\quad}Z_j=1)\\
       &=&2\sum_{i=1}^{m}\mathbb{P}_{p}(Z_i=1)\cdot\sum_{j\in N(i),j>i}\mathbb{P}_{p}(Z_j=1|Z_i=1)\\
      &\leq& 2\sum_{i=1}^{m}\mathbb{P}_{p}(Z_i=1)\cdot (n\cdot C+1)\\
      &\displaystyle \stackrel{\text{(fact 1, 2)}}\leq& O(m\cdot n^{2}\cdot\exp\{-n\cdot\phi(p)\})\\
      &\displaystyle \stackrel{\text{(fact 3)}}\leq& O(n^2\cdot\exp\{-n\cdot\delta_2\}).
\end{eqnarray*}

In the foregoing, we have used the following fact:
\begin{enumerate}
\item $\mathbb{P}(Z_i=1)\leq\theta_{n}(p), \forall i=1,\ldots, m$ and $0<p<p_c$;
\item $\theta_{n}(p)\leq O(n\cdot\exp\{-n\cdot\phi(p)\})$;
\item $O(n^{1+\delta_1})\leq m\leq O(\exp\{n\cdot(\phi(p)-\delta_2)\})$ for some sufficiently small $\delta_1, \delta_2>0$.
\end{enumerate}
Loosely speaking, $b_1$ measures the neighborhood size, $b_2$ measures the expected number of neighbors of a given occurrence and $b_3$ measures the dependence between an event and the number of occurrences outside its neighborhood. 
Now let us consider the Poisson parameter $\lambda$ which is $\mathbb{E}(W_{n,m})$. 
When $O(n^{1+\delta_1})\leq m$ for some sufficiently small $\delta_1>0$, by Theorem \ref{T:subexp} it is easy to see that
\[
\lambda \approx m\theta_n(p)=m\cdot\exp\{-n\cdot(\phi(p)+o(1))\}
\]
as $m$ sufficiently large since $O(n^{1+\delta_1})\leq m$ is enough to relieve the boundary effects. By Theorem 1 of \cite{twommts}, the Poisson approximation gives
\begin{eqnarray*}
&&|\mathbb{P}(W_{n,m}=0)-\exp\{-\lambda\}|\\
&\leq&\min\{1,\frac{1}{\lambda}\}\cdot(b_1+b_2+b_3)\\
 &\leq& O(n^2\cdot\exp\{-n\cdot(\delta_2+\phi(p))\})+O(n^2\cdot\exp\{-n\cdot\delta_2\})\\
 &\leq& O(n^2\cdot\exp\{-n\cdot\delta_2\}).
\end{eqnarray*}
Therefore, under the sub-critical phase, i.e., $p<p_c$, if $m,n$ are sufficiently large with $O(n^{1+\delta_1})\leq m\leq O(\exp\{n\cdot(\phi(p)-\delta_2)\})$, then we have
\begin{eqnarray}
&&\mathbb{P}(W_{n,m}=0)\\
&=&\exp\{-m\theta_n(p)\}+o(1)\\
&=&\exp\{-m\cdot\exp\{-n\cdot(\phi(p)+o(1))\}\}+o(1).
\end{eqnarray}

Note that $m\theta_n(p)=m\exp\{-n(\phi(p)+o(1))\}\leq O(\exp\{-n(\delta_2+o(1)\})$ can be sufficiently small if $n$ is sufficiently large.
Since $1-\exp\{-x\}= x+O(x^2)$ as $x\rightarrow 0$, when $p<p_c$ by Corollary \ref{T:thetaratio} we have
\begin{eqnarray*}
\rho_n(m,p)&=&\frac{P_{n}(m,p)}{P_{n-1}(m,p)}\\
    &=&\frac{1-\exp\{-m\theta_n(p)\}+o(1)}{1-\exp\{-m\theta_{n-1}(p)\}+o(1)}\\
    &=&\frac{m\theta_n(p)+O(m^2\theta^2_n(p)\})+o(1)}{m\theta_{n-1}(p)+O(m^2\theta^2_{n-1}(p)\})+o(1)}\\
    &\rightarrow&\frac{\theta_n(p)}{\theta_{n-1}(p)}\rightarrow\exp\{-\phi(p)\}.
\end{eqnarray*}
as $m\rightarrow\infty, n\rightarrow\infty$ and $(m,n)\in\mathcal{A}_{c_1,c_2,\delta_1,\delta_2}$.
\end{proof}

\subsection{Proof of Theorem \ref{th:mainconv}}
\begin{proof}
This proof was first used in \cite{ClustDetecPerco} for a regular lattice model. Let $\theta^x_k(p)$ denote the probability that $x=(x_1,x_2)\in([1,m]\times[1,n])\cap\mathbb{Z}^2$ connects the $x_2+k-1$-th column, which is denoted by $B(x_2+k-1)$, with a significant chain. One can easily see that $\theta^x_k(p)=\theta_k(p)$. Recall the definition of $\phi(p)$ in the following,
\[
\phi(p)=-\lim_{k\to\infty}\frac{\log\theta_k(p)}{k}=-\lim_{k\to\infty}\frac{\log\theta^x_k(p)}{k}.
\]
Let $\epsilon<1/2$ be a small positive number and $k^+_{m,n}(\epsilon)=\lceil(1+\epsilon)\log(mn)/\phi(p)\rceil$. By the second inequality in (\ref{eq:subexp}), it is not hard to see that
\begin{eqnarray*}
&&\mathbb{P}(\left|L_0(m,n)\right|>k^+_{m,n}(\epsilon))\\
&=&\mathbb{P}(\bigcup_{x\in([1,m]\times[1,n])\cap\mathbb{Z}^2}(x\leftrightarrow B(x_2+k^+_{m,n}(\epsilon)-1)))\\
&\leq&\sum_{x\in([1,m]\times[1,n])\cap\mathbb{Z}^2}\mathbb{P}(x\leftrightarrow B(x_2+k^+_{m,n}(\epsilon)-1))\\
&\leq&mn\sigma_2 k^+_{m,n}(\epsilon)\exp\{-k^+_{m,n}(\epsilon)\phi(p)\}
\end{eqnarray*}
Since $\sigma_2$ is a constant and $\phi(p)>0$ when $p<p_c$, when $m$ and $n$ are sufficiently large, it follows that
\begin{eqnarray*}
&&mn\sigma_2 k^+_{m,n}(\epsilon)\exp\{-k^+_{m,n}(\epsilon)\phi(p)\}\\
&\leq&mn\exp\{-(1-\epsilon/2)k^+_{m,n}(\epsilon)\phi(p)\}\\
&\leq&mn\exp\{-(1-\epsilon/2)(1+\epsilon)\log(mn)\}\\
&=&(mn)^{-(\epsilon-\epsilon^2)/2}\\
&\rightarrow&0, \text{\quad as\quad} m, n\rightarrow\infty
\end{eqnarray*}
since $\epsilon-\epsilon^2>0$.

On the other hand, let $I=\lceil\frac{mn}{(\log m\log n)^2}\rceil$ and let $k^-_{m,n}(\epsilon)$ be $\lfloor(1-\epsilon)\log (mn)/\phi(p)\rfloor$. Let $x^1, x^2\ldots, x^I\in([1,m]\times[1,n])\cap\mathbb{Z}^2$ be nodes separated from each other and the boundary of $([1,m]\times[1,n])\cap\mathbb{Z}^2$ by at least $\frac{1}{2}(\log m\log n)^2$. For sufficiently large $m$ and $n$, it is not hard to see that the $I$ events $\{x^i\leftrightarrow B(x^i_2+k^-_{m,n}(\epsilon)-1)\}$ are independent and have equal probabilities. Therefore, for large $m$ and $n$, by the first inequality of (\ref{eq:subexp}) we have that
\begin{eqnarray*}
&&\mathbb{P}(\left|L_0(m,n)\right|<k_{m,n}(\epsilon))\\
&\leq&\mathbb{P}\left(\bigcap_{i=1,\ldots, I}\{\overline{x^i\leftrightarrow B(x^i_2+k^-_{m,n}(\epsilon)-1)}\}\right)\\
&=&\left(1-\mathbb{P}(x^i\leftrightarrow B(x^i_2+k^-_{m,n}(\epsilon)-1))\right)^I\\
&=&(1-\theta_{k^-_{m,n}(\epsilon)}(p))^I\\
&\leq&(1-\sigma_1 (k^{-}_{m,n}(\epsilon))^{-1}\exp\{-k^-_{m,n}(\epsilon)\phi(p)\})^I
\end{eqnarray*}
When $m$ and $n$ are sufficiently large, it follows that
\begin{eqnarray*}
&&(1-\sigma_1 (k^{-}_{m,n}(\epsilon))^{-1}\exp\{-k^-_{m,n}(\epsilon)\phi(p)\})^I\\
&\leq&(1-\exp\{-(1+\epsilon/2)k^-_{m,n}(\epsilon)\phi(p)\})^I\\
&\leq&(1-\exp\{-(1+\epsilon/2)(1-\epsilon)\log mn\})^I\\
&\leq&(1-(mn)^{-1+\epsilon/2+\epsilon^2/2})^{mn/(\log m\log n)^2}\\
&\leq&\left(1 - (mn)^{-1+\epsilon/2}\right)^{(mn)^{1-\epsilon/2}(mn)^{\epsilon/2}/(\log m\log n)^2}\\
&\leq&\exp\{-(mn)^{\epsilon/2}/(\log m\log n)^2\}\\
&\rightarrow&0, \text{\quad as \quad} m,n\rightarrow\infty.
\end{eqnarray*}
Therefore, as $m, n\rightarrow\infty$, we have $\frac{\left|L_0(m,n)\right|}{\log mn}\rightarrow\frac{1}{\phi(p)}$ in probability.
\end{proof}

\bibliographystyle{IEEEtran}
\bibliography{lrref}

%

%
%
%




\end{document}